\pgfplotsset{soldot/.style={color=myblue,only marks,mark=*}} \pgfplotsset{holdot/.style={color=myblue,fill=white,only marks,mark=*}}
\definecolor{bluishgreen}{RGB}{0,158,115}
\definecolor{vermillion}{RGB}{213,94,0}
\definecolor{myblue}{RGB}{0,114,178}
\definecolor{myorange}{RGB}{230,159,0}
\theoremstyle{definition}
\newtheorem{example}{Example}
\newtheorem{remark}{Remark}
\newtheorem{theorem}{Theorem}
\newtheorem{definition}{Definition}
\newtheorem{claim}[theorem]{Claim}
\newtheorem*{definition*}{Definition}
\newtheorem*{lemma*}{Lemma}
\newtheorem{lemma}[theorem]{Lemma}
\newtheoremstyle{thmnum}{\topsep}{\topsep}{\itshape}{0pt}{\bfseries}{.}{ }{\thmname{#1}\thmnote{ \bfseries #3}}
\theoremstyle{thmnum}
\xdef\csname vec\x \endcsname{\noexpand\ensuremath{\noexpand\bm{\x}}}
\xdef\csname vec\x \endcsname{\noexpand\ensuremath{\noexpand\bm{\x}}}
\xdef\csname c\x \endcsname{\noexpand\ensuremath{\noexpand\mathcal{\x}}}
\xdef\csname bb\x \endcsname{\noexpand\ensuremath{\noexpand\mathbb{\x}}}
\xdef\csname s\x \endcsname{\noexpand\ensuremath{\noexpand\sf{\x}}}
\xdef\csname h\x \endcsname{\noexpand\ensuremath{\noexpand\widehat{\x}}}
\xdef\csname h\x \endcsname{\noexpand\ensuremath{\noexpand\hat{\x}}}
\xdef\csname t\x \endcsname{\noexpand\ensuremath{\noexpand\tilde{\x}}}
\xdef\tzero{\noexpand\ensuremath{\noexpand\tilde{0}}}
\xdef\tone{\noexpand\ensuremath{\noexpand\tilde{1}}}
\xdef\csname b\x \endcsname{\noexpand\ensuremath{\noexpand\bar{\x}}}
\newcommand{\inb}[1]{\left\{#1\right\}}
\newcommand{\inp}[1]{\left(#1\right)}
\newcommand{\insq}[1]{\left[#1\right]}
\newcommand{\red}[1]{{\textcolor{red}{#1}}}
\newcommand{\pr}{\ensuremath{\textup{P}}}
\newcommand{\decC}{\ensuremath{g_{\sC}}}
\newcommand{\decB}{\ensuremath{g_{\sB}}}
\newcommand{\enc}{\ensuremath{f}}
\newcommand{\ch}{\ensuremath{W}}
\newcommand{\tP}{\ensuremath{\widetilde{P}}}
\newcommand{\pP}{\ensuremath{\bar{P}}}
\newcommand{\err}{\ensuremath{P_\textup{e}}}
\newcommand{\erravg}{\ensuremath{P_\textup{avg}}}
\newcommand{\capstr}{\ensuremath{C_\textup{Byz}}}
\newcommand{\capcom}{\ensuremath{C_\textup{com-msg}}}
\newcommand{\capptop}{\ensuremath{C_\textup{p-to-p}}}
\newcommand{\rad}{\delta}
\newcommand{\indep}{\raisebox{0.05em}{\rotatebox[origin=c]{90}{$\models$}}\xspace}
\def\BibTeX{{\rm B\kern-.05em{\sc i\kern-.025em b}\kern-.08em
    T\kern-.1667em\lower.7ex\hbox{E}\kern-.125emX}}
\newcommand{\representationchannel}{representation channel}
\newcommand{\RepresentationChannel}{\xcapitalisewords{\representationchannel}}
\begin{document}
\title{Consensus Capacity of Noisy Broadcast Channels}
\author{Neha Sangwan,~\IEEEmembership{Member,~IEEE,}
Varun Narayanan,
Vinod M. Prabhakaran,~\IEEEmembership{Member,~IEEE}%
\thanks{This work was presented in part at the 2022 IEEE International Symposium on Information Theory (ISIT).\\N. Sangwan's work was supported in part by the TCS Foundation through the TCS Research Scholar Program. V. Narayanan's work was supported by ERC Project NTSC (742754) and ISF Grants 1709/14 and 2774/20. V. Prabhakaran's work was supported in part by SERB through project MTR/2020/000308. N. Sangwan and V. Prabhakaran acknowledge support of the DAE under project no. RTI4001.\\
N. Sangwan is with the University of California, San Diego. She was with the School of Technology and Computer Science, Tata Institute of Fundamental Research, Mumbai 400005. Varun Narayanan is with the University of California, Los Angeles. He was with the Technion, Israel. Vinod Prabhakaran is with the School of Technology and Computer Science, Tata Institute of Fundamental Research, Mumbai 400005.}}
\maketitle
\begin{abstract}
We study communication with consensus over a broadcast channel -  the receivers reliably decode the sender's message when the sender is honest, and their decoder outputs agree even if the sender acts maliciously. We characterize the broadcast channels which permit this byzantine consensus and determine their capacity. We show that communication with consensus is possible only when the broadcast channel has embedded in it a natural ``common channel'' whose output both receivers can unambiguously determine from their own channel outputs. Interestingly, in general, the consensus capacity may be larger than the point-to-point capacity of the common channel, i.e., while decoding, the receivers may make use of parts of their output signals on which they may not have consensus provided there are some parts (namely, the common channel output) on which they can agree.
\end{abstract}
\section{Introduction}

The question of how communication can be carried out when the communicating agents do not trust each other has received considerable attention in the distributed computation and cryptography literatures~\cite{LamportSP82,Dolev82,Lynch1996}. Lamport, Shostak and Pease, in their seminal work, formulated the so-called byzantine generals problem~\cite[pg.~384]{LamportSP82}\cite{Dolev82}, where a commanding general (sender node) wants to communicate a message to a set of lieutenant generals (other nodes) such that, 
\begin{enumerate} 
\item[(i)] if the commander is honest, all honest lieutenants agree on the commander's message, and
\item[(ii)] all honest lieutenants agree on the same message even if the commander is malicious.
\end{enumerate} 
They showed that with three nodes (a commander and two lieutenants), this is impossible to achieve when the nodes communicate over private pairwise communication links~\cite{LamportSP82,Dolev82,FischerLM86}. For the general case, the impossibility holds when at least one-third of the nodes may collude and act maliciously. There has been a renewed interest in this problem because of applications in blockchains~\cite{CheMic19,Shi2020}.

In this work, we consider communication with consensus over 
the broadcast channel\footnote{We use the term broadcast channel in the sense it is used in network information theory~\cite{Cover72,ElGamalK2011}, where it refers to a potentially noisy channel with a single sender and multiple receivers. In cryptography, the term generally refers to the noiseless special case.}~\cite{Cover72,ElGamalK2011}. We require the following:
\begin{enumerate}
\item[(i)] When the sender is honest, the receivers must reliably decode the sender's message.
\item[(ii)] Even if the sender acts maliciously, the receivers' decoder outputs must agree.
\end{enumerate}
In the latter case, no correct decoding is demanded; indeed a malicious sender need not have a message in mind while crafting its attack. The problem may be thought of as a common message transmission problem~\cite{KornerM77} over broadcast channels with the additional stipulation of consensus among receivers even when the sender deviates.
%
We address the following questions:
\begin{center}
\parbox{0.9\columnwidth}{Which broadcast channels allow byzantine consensus?\\ And when consensus is possible, what is the capacity of communication with consensus?}
\end{center}

There is an extensive literature in information theory on communication in the presence of external adversaries, both passive~\cite{Wyner75,CsiszarK78} and active~\cite{BlackwellBT59,BlackwellBT60,Wolfowitz59,CsiszarN88} (also see surveys~\cite{BlochB11,PoorS17,LapidothN98}). More closely related to the present work are those on communication when the users are byzantine~\cite{JaggiLKHKM07,HeY13,KosutTT14,KosutK16,KosutK18,SangwanBDP19a,SangwanBDP24a,SangwanBDP24b}. Our setup can also be thought of as one in a line of works in cryptography which use stochastic resources (channels and sources) not controlled by the users to realize, with information theoretic security, cryptographic tasks such as privacy amplification~\cite{BennettBR88,Maurer91,AhlswedeC93,Maurer93,BennettBCM95,CsiszarN00,Maurer03a,Maurer03b,CsiszarN04}, oblivious transfer (and secure computation, in general)~\cite{CrepeauK88,CrepeauMW05,Wullschleger09,IshaiKOPSW11}, and commitment~\cite{DamgardKS99,WinterNI03,RanellucciTWW11}. The works which are closest to the present work are~\cite{FitziWW04,NarayananPSW23} which study communication with byzantine consensus when the users have access to a distributed source and they are connected by private noiseless pairwise communication links. The model here differs in two respects -- instead of a distributed source, we consider a noisy broadcast channel, and there are no private links between the users. The only means of communication is via the broadcast channel. Thus, ours is a non-interactive one-way setup like~\cite{GargIKOS15,AgrawalIKNPPR21}. This precludes the type of protocols used in~\cite{FitziWW04,NarayananPSW23}.



We show that communication with consensus is possible only when the broadcast channel has embedded in it a natural ``common channel'' whose output both receivers can unambiguously determine from their own channel outputs.
Interestingly, in general, the consensus capacity may be larger than the point-to-point capacity of the common channel, i.e., while decoding, the receivers may make use of parts of their output signals on which they may not have consensus provided there are some parts (namely, the common channel output) on which they can agree. 
A natural upper bound to consensus capacity is the (non-byzantine) common message capacity~\cite{KornerM77}.
This turns out to be loose in general, see~Figure~\ref{fig:bsc}.

\begin{figure}[tb]
\centering
\resizebox{0.7\columnwidth}{!}{
\begin{tikzpicture}
    \draw (8.5,2.5) rectangle ++(2.2,1.8) node[pos=.5]{$W_{YZ|X}$};
    \draw[->] (7.5,3.4)  -- node[above]{$x^n$} ++ (1,0);
    \draw (7.5-1.2,3.4-0.6) rectangle ++(1.2,+1.2) ;

    \node at (7.5-0.6,3.4) {\small\sc{sender}} ;
    \draw[->] (10.7,3.9) -- node[midway,above,sloped] {$Y^n$} ++ (2,0.8-0.3);
    \draw[->] (10.7,2.9) -- node[midway,above,sloped] {$Z^n$} ++ (2,-0.8+0.3);
    \draw (12.7,4.1-0.3) rectangle ++(1.2,1.2) node[pos=.5]{$\decB$};
    \draw (12.7,2.7+0.3) rectangle ++(1.2,-1.2) node[pos=.5]{$\decC$};
    \draw[->] (12.7+1.2,4.1+0.6-0.3) -- ++ (0.7, 0);
    \draw[->] (12.7+1.2,2.7-0.6+0.3) -- ++ (0.7, 0);

\end{tikzpicture}
}
\caption{For any input $x^n$, the outputs of the decoders $\decB$ and $\decC$ must agree; furthermore, if $x^n$ is the codeword for some message $m$, then the decoders must output $m$ (both conditions need to hold with high probability).}
\label{fig:blockdiagram}
\end{figure}

The paper is organized as follows. In Section~\ref{sec:setup} we formally set up the problem and define the notion of {\em common channel} of a broadcast channel which will play an important role in the rest of the paper. We consider the special case of a ``two-step'' binary erasure channel in Section~\ref{sec:short} to illustrate the key ideas behind the proof of our consensus capacity theorem; the theorem itself is presented in Section~\ref{sec:general} and its proof in Section~\ref{sec:proof}. The paper concludes with a discussion of generalizations and open problems.

\nottoggle{long}{\vspace{-.225cm}}{}
\section{Notation}\label{sec:notation} See~\cite{CoverT2006} for definitions of information theoretic quantities such as mutual information, entropy and KL divergence. These quantities are defined in logarithm base 2. We mostly follow the notation from ~\cite{CoverT2006}. We employ the method of types in some of our proofs for which we adopt the notation from~\cite{Csiszar98}. 
Random variables are denoted by capital letters like $X, X', Y,$ etc. The corresponding alphabets are denoted by calligraphic letters in the same format, for example, the random variables $X$ and $X'$ have alphabet $\cX$. Its $n-$product set is denoted by $\cX^n$. $x^n$, $y^n$ denote vectors in $\cX^n$ and $\cY^n$ respectively. 
For an alphabet $\cX$, let $\mathcal{P}^n\inp{\cX}$ denote the set of all empirical distributions (types) of $n$ length strings from $\cX^n$. For a random variable $X$, we denote its distribution by $P_X$ and use the notation $X\sim P_{X}$ to indicate this. If $P_X \in \mathcal{P}^n\inp{\cX}$, we use $\cT^n_X$ to denote the set of all sequences with empirical distribution specified by $P_X$.  If $x^n\in \cX^n$ has empirical distribution $P_X$, we say $x^n$ is of type $P_X$ and write $x^n\in \cT^n_X$.  When $P_X$ is not already defined, note that we write $x^n\in \cT^n_{X}$ to implicitly define the type $P_X$ {associated with $\cT^n_X$ to be the } empirical distribution of $x^n$. {For a broadcast channel $W_{YZ|X}$, we denote its marginal channels to the receivers by $W_{Y|X}$ and $W_{Z|X}$ respectively.} For a channel $W$, its $n$-fold product (memoryless use) is denoted by $W^n$. For any number $a$, we will use $\exp{a}$ to denote $2^a$ and $\log{a}$ to denote $\log_2{a}$. All information theoretic quantities (KL divergence, entropy and mutual information) are in base 2. 

\section{Setup and Preliminaries}\label{sec:setup}

Consider a two-receiver\footnote{Our focus is on the {\em two}-receiver model. However, the results generalize as we discuss in Section~\ref{sec:disc}.} memoryless broadcast channel $\ch_{YZ|X}$ from a sender (Alice) with input alphabet $\cX$ to receivers, Bob and Carol, resp., with output alphabets $\cY$ and $\cZ$, resp. We consider finite alphabets. An $(n,K)$ {\em consensus code} consists of:
\begin{enumerate}
	\item[(i)] an encoder: $\enc:[1:K] \rightarrow \cX^n$, and
	\item[(ii)] decoders: $\decB:\cY^n \rightarrow [1:K]\cup\{\perp\}$ \&
		$\decC:\cZ^n \rightarrow [1:K]\cup\{\perp\}$.
\end{enumerate} The rate of the encoder is $\log(K)/n$.
The encoder and the decoders are deterministic; we comment on this and other choices we make in setting up the problem in Section~\ref{sec:setup-remarks}.
\paragraph{Error probability} 
An {\em error} is said to occur when either of the following conditions (or both) hold: 
\begin{enumerate}
	\item[(i)] the outputs of the decoders do not match\footnote{Note that $(\decB(Y^n),\decC(Z^n))=(\bot,m)$, $m\in[1:K]$ counts as an error.} (i.e., $\decB(Y^n)\neq \decC(Z^n)$) irrespective of what the sender transmitted;
	\item[(ii)] if the sender transmitted the codeword $f(m)$ corresponding to a message $m\in[1:K]$ and the output of at least one of the decoders does not match the message $m$. 
\end{enumerate}
We will refer to a sender whose transmission is not from the codebook as a {\em malicious sender}.
We define
\begin{align*}
	\lambda_m &= 1 - \Pr\left(\decB(Y^n)=\decC(Z^n)=m|f(m)\right),\,m\in[1:K]\\
	\eta_{x^n} &= \Pr\left(\decB(Y^n)\neq \decC(Z^n)|x^n\right),\;x^n \in \cX^n,
\end{align*}
where we use the shorthand notation $\Pr(.|x^n)$ to denote $\Pr(.|X^n=x^n)$.
Let
\[	\lambda = \max_{m\in[1:K]} \lambda_m,
	\qquad\text{and}\qquad
	\eta = \max_{x^n\in \cX^n} \eta_{x^n}.\]
The probability of error of the code $(f,\decB,\decC)$ is defined as: 
\begin{align} \err = \max(\lambda,\eta).\label{eq:errordefn}\end{align}
We write $\err^{(n)}$ when we want to explicitly show the dependence on the block length $n$.
\paragraph{Achievable rates, capacity} 
We say rate $R$ is {\em achievable with consensus} if there is an $\epsilon>0$ such that for all sufficiently large $n$ there is an $(n,\lfloor{2^{nR}}\rfloor)$ consensus code with $\err^{(n)}\leq 2^{-n\epsilon}$ (we suppress the floor function in the sequel). The {\em consensus capacity}
$\capstr$ is the supremum of all rates achievable with consensus.
\begin{remark}\label{rem:weak}Notice that the definition above demands $-\log(\err^{(n)}) = \Omega(n)$. It turns out that the capacity remains unchanged even if this is relaxed to $\err^{(n)}=o(1/n)$, the condition under which we prove our converse. Surprisingly, 
it turns out that a converse cannot be shown if this is further relaxed to $\err^{(n)}=o(1)$. 
In Appendix~\ref{app:weak} we show an example where $\capstr= 0$, but a positive rate is achievable with $\err^{(n)}=o({1}/{n^{\frac{1}{2}-\epsilon}})$, for any $\epsilon>0$.
\end{remark}

\subsection{Some remarks on the definitions}\label{sec:setup-remarks}

\paragraph{Average error probability} We may also define a notion of ``average'' error probability $\erravg$ as the maximum of $\eta$ and $\lambda^\prime$ defined below:
\begin{align*}
	\lambda^\prime &= \frac{1}{K} \sum_{m=1}^K \lambda_m.
\end{align*}

We may argue that the capacity remains unchanged if we replace maximal error probability with average error probability in their definitions.  Clearly, $\lambda^\prime \leq \lambda$ 
and hence $\erravg \leq \err$. 
Thus, the consensus capacity for the average error criterion is no smaller than that for maximal error probability.  
Along the lines of the standard expurgation argument connecting maximal and average error capacities for point-to-point channels~\cite[page~204]{CoverT2006}, it is clear that given an $(n,2^{nR})$ consensus code with a certain $\lambda^\prime$, we can construct an $(n,2^{nR}/2)$ consensus code with $\lambda$ no larger than $2\lambda^\prime$ (and identical $\eta$) by discarding half the codewords with the worse $\lambda_m$'s (and replacing decoder outputs which map to discarded codewords by $\bot$). Hence, the consensus capacity for maximal error probability criterion is also no smaller than that for average error probability. Thus the consensus capacity is agnostic to the choice of maximal or average error probability in its definitions. 

\paragraph{Randomization} Allowing for common randomness shared by the sender and both receivers does not change the capacities\footnote{Notice that a malicious sender may choose its transmission depending on the realization of the common randomness. Hence the probability of error when the sender and the receivers share common randomness is the weighted average of probabilities of error (of the deterministic codes) under the different possible realizations of common randomness. Thus, there is a deterministic code whose probability of error is no worse than that of a code with common randomness.}. This also implies that private randomization by the sender does not alter the capacities (since turning the private randomness at the sender into common randomness by providing it to both the decoders cannot decrease the capacity). 

The presence of randomness shared by the decoders (or more generally, samples of correlated sources at the decoders independent of the channel) and unknown to the sender can be absorbed in the model as an additional component in the channel outputs $Y$ and $Z$ independent of the input and the rest of the channel outputs; so we do not introduce separate notation for this. Our results will show that this additional shared randomness has no effect on the consensus capacity $\capstr$ (see Remark~\ref{rem:sharedrandomness}). 
However, as we will discuss in Appendix~\ref{app:weak}, if, unlike our definition above, we only require that $\err^{(n)}\to 0$ as $n\to\infty$, common randomness shared by the decoders and unknown to the sender may affect the rate of communication with consensus. In Appendix~\ref{app:weak} we show an example with $\capstr= 0$, but a positive rate is achievable with $\err^{(n)}=o({1}/{n^{\frac{1}{2}-\epsilon}})$, for any $\epsilon>0$, if the decoders share common randomness unknown to the sender.

\subsection{Common channel}
The {\em common channel} of a broadcast channel will play a vital role in the characterization of its consensus capacity.
\begin{definition}[Common Channel, Common Channel Output Functions]\label{def:common-channel}
    The \emph{characteristic graph}\footnote{Not to be confused with the characteristic graph of a point-to-point channel~\cite{Shannon56ZeroError,AlonO95Repeated}.
} of a broadcast channel $\ch_{YZ|X}$ is the bipartite graph $G_\ch=(\cN,\cE)$, with vertex set $\cN=\cY\cup\cZ$ and edge set $\cE = \inb{\inb{y, z}: \ch(y,z|x)>0 \text{ for some }x\in \cX}$.
    Let $\cV$ be such that $G_v=(\cN_v,\cE_v), v\in\cV$ are the distinct connected components\footnote{A connected component of a graph is an induced subgraph in which every pair of vertices is connected by a path and which is not connected to any vertices in the rest of the graph. Without loss of generality, we assume that each letter in $\cY$ ($\cZ$, resp.) receive positive probability under $\ch_{Y|X}$ ($\ch_{Z|X}$, resp.) for some input letter so that none of the connected components consist of a single vertex.} of the characteristic graph $G_\ch$. 
    The {\em common channel} ${\ch}_{V|X}$ of $\ch_{YZ|X}$ is a point-to-point channel with input alphabet $\cX$ and output alphabet $\cV$ such that
    \begin{align}
        {\ch}_{V|X}(v|x) = \sum_{\{y,z\}\in \cE_v} \ch_{YZ|X}(y,z|x),\; v\in\cV, x\in\cX. \label{eq:commonch}
\intertext{Also define, for $(x,y,z,v)\in\cX\times\cY\times\cZ\times\cV$ such that $W_{V|X}(v|x)>0$,}
	\ch_{YZ|XV}(y,z|x,v) = 
	\begin{cases}
		\frac{\ch_{YZ|X}(y,z|x)}{\ch_{V|X}(v|x)}, & \{y,z\}\in \cE_v\\
		0, &\text{otherwise}. \label{eq:commonch2}
	\end{cases}
    \end{align}
We say that the common channel is {\em trivial} if its Shannon capacity is 0, i.e., if ${\ch}_{V|X}(.|x)$ is identical for all $x\in\cX$.
  
The {\em common channel output functions} $\phi_1:\cY\rightarrow\cV$ and $\phi_2:\cZ\rightarrow\cV$ map their argument to the index of the connected component to which the argument belongs. i.e., $\phi_1(y) = v$, where $\cN_v\ni y$, and $\phi_2(z) = v$, $\cN_v\ni z$.
\end{definition}
Clearly, both receivers can infer the common channel output. Specifically, $\phi_1(Y)=\phi_2(Z)=V$ irrespective of the channel input symbol $x$. Hence, $\capptop(\ch_{V|X})=\max_{P_X} I(X;V)$ is a lower bound on $\capstr$. 
An upper bound is the (non-byzantine) common message capacity,
\begin{align}\capcom(W_{YZ|X})=\max_{P_X}\min(I(X;Y),I(X;Z)).\label{eq:capcom}\end{align} 
Hence, \begin{align}\capptop(\ch_{V|X}) \leq \capstr(\ch_{YZ|X}) \leq \capcom(\ch_{YZ|X}).\label{eq:comparison}\end{align}
Our main result (Theorem~\ref{thm:Capacity}) will imply that $\capstr(\ch_{YZ|X})>0$ if and only if $\capptop(\ch_{V|X})>0$ and that the inequalities 
above are loose in general (see Figure~\ref{fig:bsc}). We note in passing that, when $|\cX|=1$, the common channel reduces to the common random variable (of the pair $Y,Z$) related to the notion of common information of G\'acs and K\"orner\cite{GacsK73}.




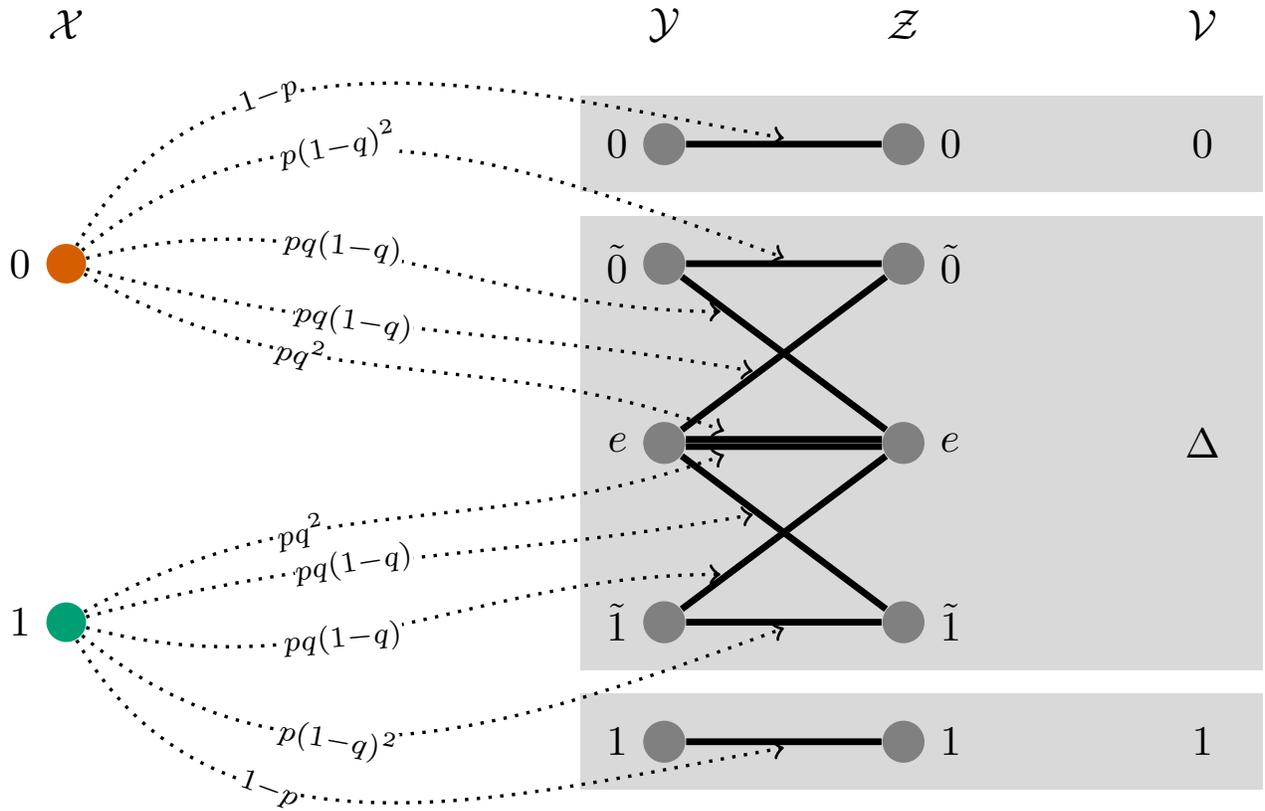
\begin{figure}
\centering
\resizebox{0.95\columnwidth}{!}{
 \begin{tikzpicture}[
    input/.style={circle},
    vertex/.style={draw=none, ellipse, fill=gray, minimum size=10pt, inner sep=0},
    lbl/.style={right, sloped, fill=white, inner sep=0cm},
    ed0/.style={ultra thick, auto, draw=bluishgreen},
    ed1/.style={ultra thick, auto, draw=vermillion},
    pointer/.style={thick, dotted, draw},]
    \node[draw=none] (X) at (-3.0,6.5) {$\cX$};
    \node[draw=none] (X) at (2.0,6.5) {$\cY$};
    \node[draw=none] (X) at (4.0,6.5) {$\cZ$};
    \node[draw=none] (X) at (6.5,6.5) {$\cV$};
    \draw[draw=none,fill=gray!30] ($(2,0.5)+(-0.7,0.4)$)  rectangle ($(5.3,0.5)+(1.7,-0.4)$);
    \draw[draw=none,fill=gray!30] ($(2,4.5)+(-0.7,0.4)$)  rectangle ($(5.3,1.5)+(1.7,-0.4)$);
    \draw[draw=none,fill=gray!30] ($(2,5.5)+(-0.7,0.4)$)  rectangle ($(5.3,5.5)+(1.7,-0.4)$);
    \node[draw=none] (v0) at (6.5,0.5) {$1$};
    \node[draw=none] (v1) at (6.5,3.0) {$\Delta$};
    \node[draw=none] (v2) at (6.5,5.5) {$0$};
    \node[input, fill=bluishgreen, label=left:$1$] (a) at (-3.0,1.5) {}; 
    \node[input, fill=vermillion, label=left:$0$] (b) at (-3.0,4.5) {}; 
    \node[vertex, label=left:$1$] (0l) at (2,0.5) {}; 
    \node[vertex, label=right:$1$] (0r) at (4,0.5) {}; 
    \node[vertex, label=left:$\tone$] (h0l) at (2,1.5) {}; 
    \node[vertex, label=right:$\tone$] (h0r) at (4,1.5) {}; 
    \node[vertex, label=left:$e$] (el) at (2,3) {}; 
    \node[vertex, label=right:$e$] (er) at (4,3) {}; 
    \node[vertex, label=left:$\tzero$] (h1l) at (2,4.5) {}; 
    \node[vertex, label=right:$\tzero$] (h1r) at (4,4.5) {}; 
    \node[vertex, label=left:$0$] (1l) at (2,5.5) {}; 
    \node[vertex, label=right:$0$] (1r) at (4,5.5) {}; 
    \path[ed0] (0l) edge (0r);
    \path[ed0] (h0l) edge (h0r);
    \path[ed0] (h0l) edge (er);
    \path[ed0] (el) edge (h0r);
    \path[ed0] ($(el)+(0.18,-0.03)$) edge ($(er)+(-0.18,-0.03)$);
    \path[ed1] ($(el)+(0.18,0.03)$) edge ($(er)+(-0.18,0.03)$);
    \path[ed1] (el) edge (h1r);
    \path[ed1] (h1l) edge (er);
    \path[ed1] (h1l) edge (h1r);
    \path[ed1] (1l) edge (1r);
    \path[pointer, in=-170, out=-60] (a) edge[->]  node[pos=0.3, lbl] (t) {$\scriptstyle 1-p$} ($(0l)+(1,-0.05)$);
    \path[pointer, in=-160, out=-40] (a) edge[->] node[pos=0.3, lbl] (t) {$\scriptstyle p(1-q)^2$} ($(h0l)+(1,-0.05)$);
    \path[pointer, in=180, out=-15] (a) edge[->] node[pos=0.3, lbl] (t) {$\scriptstyle pq(1-q)$} ($(h0l)+(0.47,0.4)$);
    \path[pointer, in=190, out=15] (a) edge[->] node[pos=0.3, lbl] (t) {$\scriptstyle pq(1-q)$} ($(h0l)+(0.74,0.9)$);
    \path[pointer, in=200, out=30] (a) edge[->] node[pos=0.3, lbl] (t) {$\scriptstyle pq^2$} ($(el)+(0.5,-0.1)$);
    \path[pointer, in=170, out=60] (b) edge[->] node[pos=0.3, lbl] (t) {$\scriptstyle 1-p$} ($(1l)+(1,+0.05)$);
    \path[pointer, in=160, out=40] (b) edge[->] node[pos=0.3, lbl] (t) {$\scriptstyle p(1-q)^2$} ($(h1l)+(1,0.05)$);
    \path[pointer, in=-180, out=15] (b) edge[->] node[pos=0.3, lbl] (t) {$\scriptstyle pq(1-q)$} ($(h1l)+(0.47,-0.4)$);
    \path[pointer, in=-190, out=-15] (b) edge[->] node[pos=0.3, lbl] (t) {$\scriptstyle pq(1-q)$} ($(h1l)+(0.74,-0.9)$);
    \path[pointer, in=-200, out=-30] (b) edge[->] node[pos=0.3, lbl] (t) {$\scriptstyle pq^2$} ($(el)+(0.5,0.1)$);
 \end{tikzpicture} 
}
\caption{Two-step binary erasure broadcast channel. The channel erases in two steps -- with probability $1-p$, both receivers (simultaneously) receive the input symbol unerased; with the remaining probability $p$, the input symbol is passed further through independent binary erasure channels which erase with probability $q$ and whose unerased output symbols acquire a $\tilde{\phantom{a}}$. The characteristic graph has three connected components (unless $p=1$ when there is only one connected component). The common channel is a binary erasure channel with erasure probability $p$ and erasure symbol $\Delta$.} \label{fig:bec}
 \end{figure}
\begin{example}[Two-step binary erasure broadcast channel] 
Let $p\in[0,1], q\in(0,1]$ and $\cX=\{0,1\}$, $\cY=\cZ=\{0,1,\tzero,\tone,e\}$. See Figure~\ref{fig:bec}. 
\[\ch_{YZ|X}(y,z|x)=\begin{cases} 1-p,&y=z=x, x\in\cX \\ pQ(y|x)Q(z|x), &y,z\in\{\tzero,\tone,e\}, x\in\cX,\end{cases}\]
where $Q(e|x)=1-Q(\tx|x)=q,x\in\cX$ with a slight abuse notation to denote $\tzero$ ($\tone$, resp.) by $\tx$ when $x$ is 0 (1, resp.). 
From Figure~\ref{fig:bec}, the common channel can be seen to be $\ch_{V|X}(\Delta|x)=1-\ch_{V|X}(x|x)=p$, where the common channel output alphabet is $\cV=\{0,1,\Delta\}$. The common channel is trivial when $p=1$.
Note that $q=0$ above amounts to the noiseless channel (with an additional Bernoulli-$p$ common random variable output independent of the input). Clearly all capacities are 1 in this case. We do not include this straightforward case in our parametrization so that the discussion below can be kept general.
\end{example}

\section{Consensus Capacity of the Two-Step Binary Erasure Broadcast Channel}\label{sec:short}
We first illustrate some of the key ideas behind the proof of our capacity theorem (Theorem~\ref{thm:Capacity}) by considering the special case of the two-step binary erasure broadcast channel.
Notice that the marginal channel to each receiver is, effectively, a binary erasure channel (BEC) with erasure probability $pq$. Hence, the point-to-point capacity of the marginals channels is $1-pq$. Since the uniform $P_X$ simultaneously maximizes $I(X;Y)$ and $I(X;Z)$, by \eqref{eq:capcom}, the common message capacity is also $\capcom=1-pq$. 
In this section, we will show that the consensus capacity $\capstr$ is also equal to $1-pq$ {if and only if} $p<1$. Moreover, if $p=1$, we will show that $\capstr=0$. Note that the capacity of the common channel is $\capptop(\ch_{V|X})=1-p$. Hence, the common channel is trivial (i.e., its point-to-point capacity is 0) precisely when $p=1$, the condition under which $\capstr=0$. Also note that, for $p<1$, the consensus capacity, common message capacity, and the common channel capacity are related by $\capstr=\capcom=1-pq\geq 1-p=\capptop(\ch_{V|X})$, where the inequality is strict if $p>0, q<1$, i.e.,$\capstr$ is strictly greater than $\capptop(\ch_{V|X})$ in this regime (cf. \eqref{eq:comparison}).  



\subsection{Converse: $\capstr=0$ if $p=1$} \label{sec:shortconverse}
With $p=1$, the channel is the independent binary erasure broadcast channel $\ch_{YZ|X}(y,z|x)=Q(y|x)Q(z|x)$, where $Q(e|x)=1-Q(\tx|x)=q$. Let $q\in(0,1)$ as the case of $q=1$ is obvious. Note that the characteristic graph has a single connected component.
Consider an $(n,2^{nR})$ consensus code $(\enc,\decB,\decC)$ with error probability $\err$. 
The key ingredient will be the following claim which states that changing the channel input vector at one location ($k$-th, say) should only produce a small effect on the decisions of the decoders.  
\begin{claim}\label{cl:shortconverse}
Suppose $k\in[1:n]$, $m\in[1:2^{nR}]$, and $\hx_1,\ldots,$ $\hx_{k-1},\hx_k,x_k,x_{k+1},\ldots,x_n \in \cX$. Let $\cA_m$ be the event $(\decB(Y^n)=\decC(Z^n)=m)$.
\begin{align}
&\Pr(\cA_m|X^n=(\hx_1^{k-1},x_k,x_{k+1}^n)) \notag\\
&\qquad-\Pr(\cA_m|X^n=(\hx_1^{k-1},\hx_k,x_{k+1}^n)) \leq \err\rho, \label{eq:shortconverse}
\end{align}
where $\rho = \frac{5}{\min(q^2,(1-q)^2)}$. 
\end{claim}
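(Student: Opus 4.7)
My plan exploits that, in the $p=1$ setting of this converse, the channel factorizes as $\ch_{YZ|X}(y,z|x)=Q(y|x)Q(z|x)$, so $Y^n$ and $Z^n$ are conditionally independent given $X^n$. Writing $\cB_m:=\{\decB(Y^n)=m\}$ and $\cC_m:=\{\decC(Z^n)=m\}$, this yields the clean factorization $\Pr(\cA_m\mid x^n)=B(x^n)\,C(x^n)$ with $B(x^n):=\Pr(\cB_m\mid x^n)$ and $C(x^n):=\Pr(\cC_m\mid x^n)$. The claim thus reduces to bounding $B(x^n)C(x^n)-B(\hx^n)C(\hx^n)$ by $\err\,\rho$.

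The heart of the argument is to apply the consensus bound on \emph{conditional} events at coordinate $k$. For any $(y_0,z_0)$ in the support of $Q(\cdot\mid x_k)\otimes Q(\cdot\mid x_k)$, the event $\{Y_k=y_0,Z_k=z_0\}$ has probability $Q(y_0\mid x_k)Q(z_0\mid x_k)\geq\min(q^2,(1-q)^2)$ under input $x^n$, so
\[\Pr(\decB(Y^n)\neq\decC(Z^n)\mid x^n,Y_k=y_0,Z_k=z_0)\leq\delta,\quad \delta:=\err/\min(q^2,(1-q)^2).\]
By the conditional independence of $(Y^n,Z^n)$ given $X^n$, which persists after further conditioning on $Y_k,Z_k$, the events $\cB_m,\cC_m$ are still independent, with marginals $u:=\Pr(\cB_m\mid x^n,Y_k=y_0)$ (depending on $Y_k$ and $X_{\setminus k}$ only) and $v:=\Pr(\cC_m\mid x^n,Z_k=z_0)$ (similarly). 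Since $\cB_m\triangle\cC_m\subseteq\{\decB(Y^n)\neq\decC(Z^n)\}$, we get $u+v-2uv\leq\delta$, i.e.\ $(1-2u)(1-2v)\geq 1-2\delta$.

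Applying this inequality to all four pairs $(y_0,z_0)$ in the support under $x^n$ and the analogous four under $\hx^n$, and using that $\Pr(\cB_m\mid\cdot,Y_k=e)$ and $\Pr(\cC_m\mid\cdot,Z_k=e)$ are independent of $X_k$ by memorylessness, one obtains pairwise constraints on six numbers: three Bob-probabilities $\Pr(\cB_m\mid\cdot,Y_k=y)$ for $y\in\{e,\text{unerased under }x_k,\text{unerased under }\hx_k\}$ and three analogous Carol-probabilities. When $\delta<1/2$, the constraint $(1-2u)(1-2v)\geq 1-2\delta>0$ forces $u,v$ to lie on the same side of $1/2$, and pinpoints each at distance $\leq\delta$ from the nearer extreme. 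Propagating through the shared $Y_k=e$ and $Z_k=e$ variables (which link the $x^n$- and $\hx^n$-systems) yields a global dichotomy: either all six conditional probabilities are $\leq\delta$, or all six are $\geq 1-\delta$. In the first case $B(x^n),C(x^n),B(\hx^n),C(\hx^n)\leq\delta$, so $BC-\hat B\hat C\leq\delta^2$; in the second case all four are $\geq 1-\delta$, so $BC-\hat B\hat C\leq 1-(1-\delta)^2\leq 2\delta$. When $\delta\geq 1/2$, the bound is trivial since $\err\,\rho=5\delta>2$. The constant $5$ in $\rho$ absorbs the slack across the two regimes.

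The main obstacle is precisely this propagation step: showing that the eight pairwise ``same-side-of-$1/2$'' dichotomies combine into a \emph{global} dichotomy across all six conditional acceptance probabilities. The key observation that makes this work is that $\beta_0:=\Pr(\cB_m\mid\cdot,Y_k=e)$ and $\beta_0':=\Pr(\cC_m\mid\cdot,Z_k=e)$ each appear in four of the eight constraints (two from $x^n$ and two from $\hx^n$), so fixing the sign of $1-2\beta_0$ propagates through the constraint graph to fix the signs of every other variable. The rest of the argument is then bookkeeping.
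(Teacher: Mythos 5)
Your proof is correct, and it takes a genuinely different route from the paper's. You exploit the conditional independence of $Y^n$ and $Z^n$ given $X^n$ (which holds because the channel factorizes as $\ch_{YZ|X}(y,z|x)=Q(y|x)Q(z|x)$) to write $\Pr(\cA_m\mid x^n)$ as a product of per-receiver acceptance probabilities, then use $\Pr(\cB_m\triangle\cC_m\mid\cdot)=u+v-2uv$ together with the pairwise disagreement bounds at coordinate $k$ to force a dichotomy on the six conditional acceptance probabilities, propagated through the shared erasure symbol. The paper's proof does not use conditional independence at all: it builds a coupling of the channel outputs under the two inputs (sharing the outputs at all coordinates other than $k$), bounds each edge's disagreement probability, and union-bounds over a $5$-edge spanning tree of the characteristic graph to show that with probability $\ge 1-\err\rho$ both decoders' outputs are insensitive to the $k$-th coordinate. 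Your route gives a slightly tighter constant here ($4\delta$ versus $5\delta$, writing $\delta=\err/\min(q^2,(1-q)^2)$), but it is tied to product channels; the paper's coupling-plus-spanning-tree argument is the germ of the general converse, generalizing to any connected characteristic graph via Lemma~\ref{lem:hybrid}. One numerical slip worth flagging: from $u+v-2uv\le\delta$ with $u,v$ on the same side of $1/2$, you only get $u,v$ within $2\delta$ of the nearer extreme, not $\delta$ (since $u(1-v)\le\delta$ and $1-v\ge 1/2$ give $u\le 2\delta$), so your final bound is $4\delta$ rather than $2\delta$ --- still comfortably below $5\delta=\err\rho$.
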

We prove this claim later, but the intuition for it can be summarized as follows: As the channel is memoryless, any change in the decisions of the decoders must be based on the channel outputs $Y_k$ and $Z_k$, resp. (i.e, outputs at the location with the change in input). Since the characteristic graph has a single connected component, the decoders cannot extract a non-trivial common part from $Y_k,Z_k$ and, as their decisions must agree with high probability for any input, the effect on their decisions must be small. 

To complete the proof, consider two distinct messages $m,\hm$ with codewords $x^n=f(m)$ and $\hx^n=f(\hm)$. Summing \eqref{eq:shortconverse} over $k=1,\ldots,n$,  
\begin{align*} \Pr(\cA_m|X^n=x^n) - \Pr(\cA_m|X^n=\hx^n) \leq n\err\rho.\end{align*}
Since $\Pr(\cA_m|X^n=x^n)=\Pr(\decB(Y^n)=\decC(Z^n)=m|X^n=x^n)\geq 1 - \err$,
\[ \Pr(\decB(Y^n)=\decC(Z^n)=m|X^n=\hx^n) \geq 1 - \err(1+n\rho).\]
Thus, $\err\geq\lambda_{\hm}\geq 1 - \err(1+n\rho)$ and hence $\err \geq 1/(2+n\rho)$. Since the definition of $\capstr$ requires $\err$ to decay faster than this as $n\rightarrow \infty$,  $\capstr=0$.
\proof[Proof of Claim~\ref{cl:shortconverse}]{
Consider the random variables $\hY_1^k,\hZ_1^k,Y_k^n,Z_k^n$ jointly distributed as
\begin{align}
&p_{\hY_1^{k},\hZ_1^{k},Y_{k}^n,Z_{k}^n}(\hy_1^{k},\hz_1^{k},y_{k}^n,z_{k}^n)\notag\\
 &=\left(\prod_{i=1}^{k} \ch_{YZ|X}(\hy_i,\hz_i|\hx_i)\right)
 \prod_{i=k}^n \ch_{YZ|X}(y_i,z_i|x_i).
\end{align}
Notice that we are defining a coupling where $(\hY_1^{k-1},Y_k^n,\hZ_1^{k-1},Z_k^n)$ and $(\hY_1^k,Y_{k+1}^n,\hZ_1^k,Z_{k+1}^n)$ have the same distributions as $(Y^n,Z^n)$ in the first and second terms, resp., of \eqref{eq:shortconverse}. Let $S=(\hY_1^{k-1},Y_{k+1}^n), T=(\hZ_1^{k-1},Z_{k+1}^n)$. For $s=(\hy_1^{k-1},y_{k+1}^n)$, abusing notation, we will write $\decB(s,y)$ to mean $\decB((\hy_1^{k-1},y,y_{k+1}^n))$. Similarly, we will also use $\decC(t,z)$.
We have (by \eqref{eq:errordefn}), for $x\in\cX$,
\begin{align*}
\err &\geq \Pr(\text{decoders disagree}|X^n=(\hx_1^k,x,x_{k+1}^n))\\
      &= \sum_{y,z} \ch_{YZ|X}(y,z|x) \Pr(\decB(S,y)\neq\decC(T,z)).
\end{align*}
Hence, for every edge $\{y,z\}\in\cE$ in the characteristic graph (i.e., $\ch_{YZ|X}(y,z|x)$ for some $x$), 
\begin{align}
\Pr(\decB(S,y)&\neq\decC(T,z))\notag\\
 &\leq \frac{\err}{\min_{(x,y,z):\ch_{YZ|X}(y,z|x)>0} \ch_{YZ|X}(y,z|x)}\notag\\
 &=\frac{\err}{\min(q^2,(1-q)^2)}. \label{eq:shortclaimstep}
\end{align}
Consider the event $\cE$ in which the decoder outputs do not depend on the $k$-th element of their channel output vectors,
\[ \cE=\left( \left|\bigcup_{\{y,z\}\in\cE} \left\{\decB(S,y),\decC(T,z)\right\} \right| = 1\right).\]
Since the characteristic graph of the channel is connected and has a spanning tree with 5 edges, from \eqref{eq:shortclaimstep}, we may conclude using a union bound that 
\begin{align}
\Pr(\cE)\geq 1- \frac{5\err}{\min(q^2,(1-q)^2)} = 1-\err\rho.
\end{align}
i.e., under the distribution of $(S,T)$, with probability at least $1-\err\rho$, the decoder outputs do not depend on the $k$-th element of their channel output vectors. Then, 
\begin{align*}
&\Pr(\decB(S,Y_k)=\decC(T,Z_k)=\decB(S,\hY_k)=\decC(T,\hZ_k))\\
&= \sum_{y,z,\hy,\hz} \ch_{YZ|X}(y,z|x_k)\ch_{YZ|X}(\hy,\hz|\hx_k)\\
  &\qquad\qquad \Pr(\decB(S,\hy)=\decC(T,\hz)=\decB(S,y)=\decC(T,z))\\
&\geq \sum_{y,z,\hy,\hz} \ch_{YZ|X}(y,z|x_k)\ch_{YZ|X}(\hy,\hz|\hx_k) \Pr(\cE)\\
&\geq 1 - \err\rho.\\
\iftoggle{long}{\intertext{Hence,}
&\Pr(\decB(S,Y_k)=\decC(T,Z_k)=m) \\&\qquad\qquad- \Pr(\decB(S,\hY_k)=\decC(T,\hZ_k)=m) \leq \err\rho.
}{
&\text{Hence, }\Pr(\decB(S,Y_k)=\decC(T,Z_k)=m) \\&\qquad\qquad- \Pr(\decB(S,\hY_k)=\decC(T,\hZ_k)=m) \leq \err\rho.
}
\end{align*}
}
\begin{remark}\label{rem:Witsenhausen}
\nottoggle{long}{
In \cite{long} we strengthen the converse to show that even a single bit cannot be communicated with consensus over this channel with $\err^{(n)}\rightarrow 0$. This also means that for this channel the converse does not require the more restrictive $\err^{(n)}=o(1/n)$. However, as mentioned in Remark~\ref{rem:weak}, in general, such a requirement is necessary and our proof of the converse of Theorem~\ref{thm:Capacity} generalizes the proof idea above.
}{ 
Below we strengthen the converse to show that even a single bit cannot be communicated with consensus over this channel with $\err^{(n)}\rightarrow 0$. This also means that for this channel the converse does not require the more restrictive $\err^{(n)}=o(1/n)$. However, as mentioned in Remark~\ref{rem:weak}, in general, such a requirement is necessary and our proof of the converse of Theorem~\ref{thm:Capacity} generalizes the proof idea above.

We will show that for the independent binary erasure broadcast channel with erasure probability $q>0$, there exists $\epsilon>0$ such that $\err^{(n)}\geq \epsilon$ for any $(n,2)$ consensus code, $n\in\bbN$. Consider an $(n,2)$ consensus code $(\enc,\decB,\decC)$, with codewords $\enc(1)=(x_1,\ldots,x_n)=:x^n$ and $\enc(2)=(\hx_1,\ldots,\hx_n):=\hx^n$. We have
\begin{align}
   \Pr\inp{\decB(Y^n)=\decC(Z^n)=2|X^n=\hx^n}&\ge 1-\err, \label{eq:Witn}\\
   \Pr(\decB(Y^n)=\decC(Z^n)=2|X^n=x^n)&\le1-\Pr(\decB(Y^n)=\decC(Z^n)=1|X^n=x^n)\le\err. \label{eq:Wit0}
\end{align}
Furthermore, by Claim~\ref{cl:shortconverse}, for all $k\in[1:n]$, $m\in\{1,2\}$,
\begin{align}
&\Pr(\decB(Y^n)=\decC(Z^n)=2|X^n=(\hx_1^{k-1},x_{k}^n)) 
-\Pr(\decB(Y^n)=\decC(Z^n)=2|X^n=(\hx_1^{k},x_{k+1}^n)) \leq \err\rho \label{eq:Witstepk}
\end{align}
i.e., $\Pr(\decB(Y^n)=\decC(Z^n)=2|X^n=(\hx_1^{k},x_{k+1}^n))$ is at most $\err$ for $k=0$ (by \eqref{eq:Wit0}); changes by at most $\err\rho$ at each step as $k$ increases from $0$ to $n$ in steps of 1 (by \eqref{eq:Witstepk}); and is at least $1-\err$ at $k=n$ (by \eqref{eq:Witn}). Hence, there must be a $k\in[1:n]$ such that
    \begin{align}\label{eq:wit}
        \frac{1}{2}-\err\rho \le \pr(\decB(Y^n)=\decC(Z^n)=2|X^n=(\hx_1^{k},x_{k+1}^n))\le \frac{1}{2}+\err\rho.
    \end{align}
For this $k$, fix $X^n=\bx^n:=(\hx_1^{k},x_{k+1}^n)$. Then $(Y^n,Z^n)$ have the following joint distribution: $(Y_i,Z_i)$ are independent over $i=1,\ldots,n$, with the joint distribution of $(Y_i,Z_i)$ given by 
\begin{align*}
P_{Y_i,Z_i}(y,z) =\begin{cases} q^2, &(y,z)=(e,e)\\ (1-q)^2, &(y,z)=(\bx_i,\bx_i)\\ q(1-q), &(y,z)\in\{(e,\bx_i),(\bx_i,e)\}\\ 0 &\text{otherwise}.\end{cases}
\end{align*}
For all $i\in[1:n]$, $P_{Y_i,Z_i}$ has zero G\'{a}cs-K\"orner common information~\cite{GacsK73} (i.e., their maximum correlation~\cite{Hirschfeld35,Gebelein41,Renyi59} is less than unity). Hence, by a result of Witsenhausen~\cite{Witsenhausen75}, there exists $\epsilon'>0$ (which depends only on $q$) such that for all deterministic functions\footnote{Witsenhausen~\cite{Witsenhausen75} considers functions which make a binary decision. Here, we may view the decoders $\decB,\decC$ as making a binary decision returning either the symbol $2$ or a symbol from $\{1,\bot\}$.} (specifically, $\decB$ and $\decC$), \eqref{eq:wit} holds only if $\err\rho\geq\epsilon'$. Thus, for arbitrarily small $\err$, $(n,2)$ consensus codes do not exist for any $n\in\mathbb{N}$.

Indeed, using this argument, we can prove such an impossibility for any channel that satisfies the following properties: 
\begin{enumerate}
\item The characteristic graph has a single connected component (it is easy to see that our proof of Claim~\ref{cl:shortconverse} made use of only this property of the channel); and
\item For each of its input symbols $x\in\cX$, the joint distribution $\ch_{YZ|X}(.,.|x)$ induced at the output by the channel has zero common information (so that the impossibility in~\cite{Witsenhausen75} applies).
\end{enumerate}
}
\end{remark}

\subsection{Achievability: $\capstr=1-pq$ if $p<1$}\label{sec:short_ach}
We introduce some notation and describe our decoder before giving the intuition behind our scheme.
For $x^n\in\cX^n$ and $v^n\in\cV^n$, we write $x^n \vartriangleright v^n$ if $v^n$ is an ``erased'' version of $x^n$, i.e., if $v_i\in\{x_i,\Delta\}$, $i\in[1:n]$. Similarly,  we write $x^n \blacktriangleright y^n$ if $y_i\in\{x_i,\tx_i,e\}$, $i\in[1:n]$. Let $d(x^n,\hx^n) := \frac{1}{n}(d_{\text{Hamming}}(x^n,\hx^n))=\frac{1}{n}\sum_{i=1}^n 1_{x_i = \hx_i}$ be the relative distance between $x^n,\hx^n\in\cX^n$.

Let $\rad>0$.
For an encoder $f$ of rate $R$, the decoder outputs $\decB(y^n)=m$ if it is the unique $m\in[1:2^{nR}]$ such that
\begin{enumerate}
\item[(i)] $f(m)\vartriangleright \phi_1(y^n)$, where $\phi_1(y^n):=(\phi_1(y_i))_{i\in[1:n]}$,
\item[(ii)] there is a $\bx^n\in\cX^n$ such that $d(f(m),\bx^n)< \rad$ and $\bx^n\blacktriangleright y^n$.
\end{enumerate}
$\decB(y^n)=\bot$ if no such unique $m$ exists. $\decC$ is similarly defined (with $\phi_2$ in lieu of $\phi_1$).
The first decoding condition requires the codeword to match the bits left unerased by the common channel; we denote this by  $m \lozenge y^n$. The second condition, denoted by $m \blacklozenge y^n$, requires an ``explaining'' vector $\bx^n$ which is $\delta$-close to the codeword and matches the bits left unerased in $y^n$. 

The intuition behind our coding scheme is as follows: Since the first decoding condition above only depends on the common channel output, both decoders will make the same decision on this. However, if they were to rely only on this condition, they cannot achieve rates above the common channel capacity $1-p$. Instead, if they were to use the decoding condition $f(m)\blacktriangleright y^n$ (resp., $f(m)\blacktriangleright z^n$) which (with an erasure code) can achieve all rates below $1-pq$ in the non-byzantine setting, there is a simple attack for the byzantine sender -- send $f(m)$ with one of the bits flipped. A receiver for which this bit is erased by the channel may accept $m$ while one for which this bit is left unerased will reject $m$; since there is a finite probability $(2pq(1-q))$ that this bit is erased for exactly one of the receivers, with non-vanishing probability they may disagree. The second decoding condition above circumvents this by tolerating some errors. A malicious sender may still try to get the receivers to disagree by sending a vector which is close to the boundary of tolerance with the hope that (with non-vanishing probability) channel erasures push one of the receivers to accept and the other to reject. However, since this attack requires the sent vector to be sufficiently far away from a codeword, the first condition, which tolerates no errors, will be able to detect it. This has the side-effect that when a vector which is close to a codeword (but is not the codeword itself) is sent, there is a significant probability that the message corresponding to that codeword is rejected; but this rejection (based on the first decoding condition) is carried out by both the receivers simultaneously so that their decisions still agree (see case~(iii) below).

Turning to the formal proof, for $x^n\in\cX^n$, define the event
\[\cB_{x^n}=(\exists m\in[1:2^{nR}]: d(f(m),x^n)\geq \rad, m \lozenge Y^n, m \blacklozenge Y^n).\]

\begin{claim}\label{cl:shortachievability}
Let $R<1-pq$. There are positive $\rad,\epsilon$ such that, for sufficiently large $n$, there is an encoder $f:[1:2^{nR}]\rightarrow\cX^n$ with $d(f(m),f(m'))\geq 2\rad$ for every pair 
$m\neq m'$ and
\begin{align}
\Pr( \cB_{x^n} | x^n) &\leq 2^{-n \epsilon},\text{ for all } x^n\in\cX^n. \label{eq:shortachievability}
\end{align}
\end{claim}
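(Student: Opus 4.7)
The plan is a random-coding argument augmented by conditioning on the channel output $Y^n$, followed by Bernstein concentration to handle all byzantine inputs $x^n$ simultaneously. I would draw the codewords $f(1), \ldots, f(2^{nR})$ i.i.d.\ uniformly on $\{0,1\}^n$; since $R < 1-pq < 1$, picking $\rad > 0$ small enough that $R < 1 - h(2\rad)$ ensures, via a Gilbert--Varshamov / union-bound argument, that with exponentially high probability every pair of codewords is at relative Hamming distance at least $2\rad$.

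For fixed $x^n$, let $E := \{i : V_i = \Delta\}$ and $S := \{i : Y_i = e\}$, so $\beta := |S|/n \sim \mathrm{Bin}(n, pq)/n$. Decomposing the two decoding conditions: $m \lozenge Y^n$ forces $f(m)$ to equal $x^n$ on $E^c$ (cost $2^{-(n-|E|)}$ for uniform $f(m)$), while $m \blacklozenge Y^n$ permits strictly fewer than $n\rad$ mismatches of $f(m)$ with $x^n$ on $E \setminus S$ (cost roughly $2^{-|E \setminus S|}$), giving $\Pr_{f(m)\sim\mathrm{Uniform}}(m \text{ bad}\mid Y^n, x^n) \leq 2^{-n(1-\beta) + o(n)}$ for small $\rad$. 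Summing over $m$ yields $\bbE_f[\#\text{bad} \mid Y^n, x^n] \leq 2^{n(R-1+\beta) + o(n)}$. Call $Y^n$ \emph{typical} if $\beta \leq 1-R-\eta$ for some small $\eta > 0$; by Sanov, $\Pr(Y^n \text{ atypical}) \leq 2^{-n D_2(1-R-\eta\|pq)}$. Both the typical-case expectation (then $\leq 2^{-n\eta}$) and the atypical probability are exponentially small precisely because $R < 1-pq$ affords room for $\eta > 0$. Defining $W_m(f(m), x^n) := \bbE_{Y^n}\bigl[\mathbf{1}_{\text{typical}}\,\mathbf{1}_{m\text{ bad}}\bigr]$ and $W := \sum_m W_m$,
\[
\Pr(\cB_{x^n} \mid f, x^n) \;\leq\; \Pr(Y^n \text{ atypical}) \;+\; W(f, x^n).
\]

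The main obstacle is uniformizing this bound over all $|\cX|^n$ byzantine inputs: a naive Markov-plus-union-bound argument would demand $\bbE_f[W] < |\cX|^{-n}$, unattainable at positive rate. The resolution is Bernstein's inequality applied to $W$: the $W_m$ are independent functions of the codewords, each bounded by $M := p^{n\rad}$ (since $W_m \leq p^K \leq p^{n\rad}$ from the per-codeword estimate), with $\sum_m \mathrm{Var}(W_m) \leq M\,\bbE_f[W]$. Bernstein yields, for any $\epsilon < \min(\eta, \rad \log_2(1/p))$,
\[
\Pr_f\!\bigl(W(f, x^n) > 2^{-n\epsilon}\bigr) \;\leq\; \exp\!\bigl(-c\cdot 2^{n(\rad \log_2(1/p) - \epsilon)}\bigr)
\]
for some $c > 0$---a \emph{doubly-exponential} tail driven by the smallness of $M$. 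A union bound over the $|\cX|^n$ possible $x^n$ still leaves the failure probability tending to zero, so combined with the Gilbert--Varshamov step the probabilistic method delivers an encoder satisfying both the minimum-distance constraint and \eqref{eq:shortachievability} for every $x^n \in \cX^n$.
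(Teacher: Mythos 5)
Your argument is correct in substance but takes a genuinely different route from the paper. The paper's proof constructs a deterministic codebook (Lemma~\ref{lem:shortcodebook}, proved via~\cite[Lemma~A1]{CsiszarN88} and expurgation) that satisfies, in addition to the minimum-distance property, a list-size bound of the form $|\{m:f(m)\in\cT^n_{X'|X}(x^n)\}|\leq 2^{n(|R-I(X';X)|^{+}+\epsilon)}$; the union bound over the bad event is then organized over \emph{joint types} $P_{XX'Y}$ (polynomially many), and the exponent $|R-I(X';X)|^{+}-I(Y;X'|X)+2\epsilon$ is shown negative by a case analysis on whether $R\lessgtr I(X';X)$, exploiting $X\vartriangleright\phi_1(Y)$, $X'\vartriangleright\phi_1(Y)$ in the first case and the explaining vector $\bX\blacktriangleright Y$ in the second. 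You instead keep the codebook i.i.d.\ uniform, condition on $Y^n$ directly, bound the per-codeword probability by the binomial count $2^{-|E^c|}\Pr(\mathrm{Bin}(|E\setminus S|,1/2)<n\rad)\leq 2^{-(n-|S|)+nH(\rad)}$, and then union-bound over the exponentially many $x^n$ by a \emph{Bernstein} concentration argument driven by the per-term bound $W_m\leq p^{n\rad}$. The paper's list-size lemma plays the role that Bernstein plays for you: both supply just enough control to survive the exponential union bound over malicious $x^n$. Your approach is arguably more elementary and avoids the method-of-types machinery, but it leans on the special binary-erasure structure (the binomial decomposition over $E^c$, $E\setminus S$, $S$); the paper's type-theoretic route is chosen precisely because it carries over to the general-alphabet proof of Theorem~\ref{thm:Capacity} in Section~\ref{sec:proof}, where one no longer has this clean product decomposition.

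Two small points of imprecision worth tightening: (i) with i.i.d.\ uniform codewords, ``with exponentially high probability every pair of codewords is at distance $\geq 2\rad$'' requires $2R<1-H(2\rad)$ via a naive union bound over pairs; for the full range $R<1-H(2\rad)$ you need a Markov/expurgation step (remove the small fraction of codewords with a too-close neighbor), just as the paper does in proving Lemma~\ref{lemma:codebook}. Since expurgation only shrinks $\cB_{x^n}$, it is compatible with your Bernstein bound, but it should be stated. (ii) Your $o(n)$ correction term is really $n\cdot H(\rad)$, i.e.\ linear in $n$ with a coefficient that vanishes as $\rad\to 0$; the conclusion survives (choose $\rad$ small enough that $H(\rad)<\eta$, then $\epsilon<\min(\eta-H(\rad),\ \rad\log_2(1/p))$), but writing it as $o(n)$ obscures the quantifier order on $\rad,\eta,\epsilon$ that the claim actually requires.
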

Before proving the claim, let us see that it implies the theorem. This will follow from a case analysis for the transmitted vectors $x^n\in\cX^n$:\\
{\em Case~(i):} $x^n=f(m)$ for some $m$.
 Then, $m\lozenge Y^n$ and $m\blacklozenge Y^n$. Moreover, for all $m'\neq m$, the encoder in Claim~\ref{cl:shortachievability} has $d(f(m),f(m'))\geq2\delta$. Hence,
\begin{align*} &\Pr(\decB(Y^n)\neq m|f(m))\\
&= \Pr(\exists m'\neq m \text{ s.t } m'\lozenge Y^n,m'\blacklozenge Y^n|f(m))\\ 
&= \Pr(A_{f(m)}|f(m)) \leq 2^{-n\epsilon},
\end{align*}
where the last inequality follows from \eqref{eq:shortachievability}. Similarly, $\Pr(\decC(Z^n)\neq m|f(m)) \leq 2^{-n\epsilon}$. By a union bound, 
$\Pr(\decB(Y^n)=\decC(Z^n)=m|f(m))\geq1-2^{-n\epsilon+1}$.\\
{\em Case~(ii):}  $d(f(m),x^n)\geq\delta$ for all $m\in[1:2^{nR}]$. Then, $\Pr(\decB(Y^n)\neq\bot|x^n)\leq\Pr(\cB_{x^n}|x^n)\leq 2^{-n\epsilon}$, where the last step is from~\eqref{eq:shortachievability}. Hence, by a union bound $\Pr(\decB(Y^n)=\decC(Z^n)=\bot|x^n)\geq 1 - 2^{-n\epsilon+1}$.\\
{\em Case~(iii):}  there is an $m$ such that $d(f(m),x^n)<\rad$, but $x^n\neq f(m)$. Since $d(f(m),x^n)<\rad$, by triangle inequality, $d(f(m'),x^n) \geq d(f(m'),f(m)) - d(f(m),x^n) >$ $2\rad - \rad = \rad$ for all $m'\neq m$. Hence, $\Pr(\decB(Y^n)\notin\{m,\bot\}|x^n) \leq \Pr(\cB_{x^n}|x^n)$ $\leq 2^{-n\epsilon}$, where the last step follows from~\eqref{eq:shortachievability}. By the union bound, $\Pr(\decB(Y^n),\decC(Z^n)\in\{m,\bot\}|x^n) \geq 1 - 2^{-n\epsilon+1}$. We will argue that, for this $x^n$ and under the event $(\decB(Y^n),\decC(Z^n)\in\{m,\bot\})$, the decoder outputs must match which will complete the proof. $m\blacklozenge Y^n$ as $x^n$ may serve as the explaining vector $\bx^n$ since $d(f(m),x^n)<\rad$ and $x^n \blacktriangleright Y^n$. Similarly, $m\blacklozenge Z^n$. Hence, the second decoding condition for message $m$ is met for both decoders. Since $\phi_1(Y^n)=\phi_2(Z^n)$, either the first condition for message $m$ is met or not met together for both decoders. Hence, $\Pr(\decB(Y^n)=\decC(Z^n)|x^n)\geq 1 - 2^{-n\epsilon+1}$.

It only remains to prove Claim~\ref{cl:shortachievability}. 
\proof[Proof of Claim~\ref{cl:shortachievability}]{
We use the method of types and follow the notation from~\cite{Csiszar98} (also see Section~\ref{sec:notation}). 
Let $P$ be the uniform type\footnote{For simplicity, we assume $n$ is even; the case of odd $n$ is easily handled by perturbing $P$ (or leaving unused, say, the last bit).} on $\cX$, $P(0)=P(1)=1/2$, and $H_2$ denote the binary entropy function.
\begin{lemma}\label{lem:shortcodebook}
For $\rad<1/4$, $0<\epsilon\leq R\leq 1-H_2(2\rad)-\epsilon$ and sufficiently large $n$, there exists an encoder $f:[1:2^{nR}]\rightarrow \{0, 1\}^n$  whose codewords $f(m), m\in [1:2^{nR}]$ are of type $P$ such that 
\begin{align}
d(f(m), f(m'))\geq  2\rad \text{ for all }m\neq m', \nonumber
\end{align} and for every joint type $P_{X'X}\in \cP^n\inp{\cX\times \cX}$ and $x^n\in \cX^n$, 
\begin{align}
|\{m:f(m)\in \cT^{n}_{X'|X}(x^n)\}|\leq2^{n\left(\left|R-I(X';X)\right|^{+}+\epsilon\right)}.\label{eq:code2_example}
\end{align}\end{lemma}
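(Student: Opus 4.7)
The plan is a textbook random-coding-with-expurgation argument: first draw $N := 2 \cdot 2^{nR}$ codewords $F(1),\ldots,F(N)$ independently and uniformly from the type class $\cT_P^n$, show that the random code satisfies \eqref{eq:code2_example} (with a slightly smaller slack $\epsilon/2$) and has few ``close'' pairs simultaneously, and finally delete one codeword from each close pair to enforce the minimum-distance constraint.

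For the joint-type bound, fix any $x^n \in \cX^n$ and any joint type $P_{X'X}$. The set $\cT^n_{X'|X}(x^n)$ is nonempty only if $x^n$ has type $P_X$, and is contained in $\cT_P^n$ exactly when the $X'$-marginal of $P_{X'X}$ equals $P$, in which case $I(X';X) = H(P) - H(X'|X) = 1 - H(X'|X)$. Using $|\cT_P^n| \ge (n+1)^{-|\cX|}\cdot 2^n$ and $|\cT^n_{X'|X}(x^n)|\le 2^{nH(X'|X)}$,
\[
    \Pr[F(m) \in \cT^n_{X'|X}(x^n)] \;\le\; (n+1)^{|\cX|}\cdot 2^{-n I(X';X)}.
\]
The count $N(x^n, P_{X'X}) := |\{m : F(m) \in \cT^n_{X'|X}(x^n)\}|$ is then a sum of $N$ i.i.d.\ Bernoullis with mean $\mu \le 2(n+1)^{|\cX|}\cdot 2^{n(R-I(X';X))}$. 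Setting $t := 2^{n(|R-I(X';X)|^+ + \epsilon/2)}$, a short calculation gives $e\mu/t \le \tfrac{1}{2}$ for all large $n$ (in both cases $R\ge I(X';X)$ and $R<I(X';X)$), so the multiplicative Chernoff bound $\Pr[\mathrm{Bin}(N,p)\ge t]\le (e\mu/t)^t$ yields doubly-exponential decay $\Pr[N(x^n,P_{X'X})\ge t]\le 2^{-2^{n\epsilon/3}}$. A union bound over the $\le (n+1)^{|\cX|^2}$ joint types and the $\le 2^n$ choices of $x^n$ still leaves the failure probability $o(1)$.

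For the minimum-distance property, the volume bound $|\{z^n\in\{0,1\}^n : d(z^n,y^n) < 2\rad\}| \le 2^{nH_2(2\rad)}$ (valid for $2\rad<1/2$) gives $\Pr[d(F(m),F(m'))<2\rad] \le (n+1)^{|\cX|}\cdot 2^{-n(1-H_2(2\rad))}$. Hence the expected number of close pairs is at most $2(n+1)^{|\cX|}\cdot 2^{n(2R-1+H_2(2\rad))}$, which, using $R \le 1-H_2(2\rad)-\epsilon$, is at most $N/8$ for $n$ large. Markov's inequality then ensures the number of close pairs is at most $N/2$ with probability $\ge 3/4$.

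Combining the two estimates, a realization satisfying both events exists for all sufficiently large $n$. Removing one codeword from each close pair leaves at least $N/2 = 2^{nR}$ codewords with pairwise Hamming distance $\ge 2\rad n$, and the joint-type bound is inherited by every subcode since the counts can only decrease. The main technical point is that \eqref{eq:code2_example} must hold \emph{uniformly} over all $x^n\in\cX^n$, a set of exponential size, so Markov's inequality alone is not enough; it is essential to exploit the doubly-exponential decay of the multiplicative Chernoff tail to absorb the $2^n$ union bound.
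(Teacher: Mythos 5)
Your proof is correct and follows the same high-level strategy as the paper's (random coding over the type class, doubly-exponential concentration of the per-$(x^n,P_{X'X})$ counts to absorb the union bound over all $x^n$, plus expurgation for the minimum distance), but the implementation differs in two ways from the paper's route, which derives the lemma as a special case of its general codebook lemma (Lemma~\ref{lemma:codebook}). First, where the paper invokes the Csisz\'ar--Narayan martingale concentration inequality (restated as Lemma~\ref{lemma:A1}), you use the multiplicative Chernoff bound directly on the i.i.d.\ indicators $\indicator\{F(m)\in\cT^n_{X'|X}(x^n)\}$; since the codewords are drawn independently, this is exactly as strong and is a touch more elementary and self-contained. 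Second, for the minimum-distance requirement the paper generates exactly $2^{nR}$ codewords and shows (again via the martingale lemma) that only a vanishing fraction are ``bad,'' whereas you overgenerate by a factor of $2$ and control the expected number of close pairs with Markov's inequality alone, needing no concentration for this step; this simplifies the distance argument at the negligible cost of a constant-factor rate overhead that is then pruned away. Both routes hinge on the observation you correctly highlight --- that a polynomial-in-$n$ tail would not survive the $2^n$-fold union over $x^n$, so a doubly-exponential tail is essential --- and both deliver the lemma with room to spare in the $\epsilon$ slack.
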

We can show the above lemma using a random coding argument.
The first property is similar to the Gilbert-Varshamov bound and gives a minimum distance guarantee (also see \cite[Problem~10.1(c)]{CsiszarK2011}). The second property is similar to \cite[(V.10)]{Csiszar98}. 
The lemma follows from Lemma~\ref{lemma:codebook}~(page~\pageref{lemma:codebook}) where we take $\cU =\cX$ and $P$ to be the uniform type, i.e., $P(0)=P(1) = 0.5$. 

To proceed with the proof of Claim~\ref{cl:shortachievability}, let \[\cP:= \inb{P_{XX'}\in \cP^n\inp{\cX\times\cX}: P_X=P_{X'}=P, \, \Pr\inp{X\neq X'}< 2\rad}.\] Then, $\min_{P_{X X'}\in \cP}I(X;X') = \min_{P_{X X'}\in \cP} 1-H_2(\Pr\inp{X\neq X'})\leq 1-H_2(2\rad)$, 
where is last inequality follows from $\rad<1/4$.
Choose $\rad>0$ sufficiently small so that $\rad<1/4$, $2(H_2(\rad)+\rad) < (1-pq-R)$, and $R<1-H_2(2\rad)$. Further, choose sufficiently small $\epsilon>0$ so that $\epsilon\leq R\leq 1-H_2(2\rad)-\epsilon$ and consider the codebook from Lemma~\ref{lem:shortcodebook}. Let $\cD$ be the set of all joint types $P_{XX'Y}\in \cP^n\inp{\cX\times \cX\times\cY}$ such that 
\begin{enumerate}
\item[(i)] $X\vartriangleright \phi_1(Y)$,\qquad\qquad  (ii) $\Pr\inp{X'\neq X}\geq\rad$, and
\item[(iii)] $X'\vartriangleright \phi_1(Y), \exists \, P_{\bX|XX'Y}$ s.t. $\Pr\inp{X'\neq\bX}< \rad, \bX \blacktriangleright Y$.
\end{enumerate}
Consider the definition of $\cB_{x^n}$. Suppose, for channel input $x^n$, output $Y^n$ and $m\in[1:2^{nR}]$ are such that $d(f(m),x^n)\geq\rad$ and $m\lozenge Y^n, m\blacklozenge Y^n$. Then, the joint type of $(x^n,f(m),Y^n)$ belongs to $\cD$, i.e., $(x^n,f(m),Y^n) \in \cT^n_{XX'Y}$ for some $P_{XX'Y} \in \cD$, since (i) $x^n\vartriangleright \phi_1(Y^n)$, (ii) $d(f(m),x^n)\geq\rad$, and (iii) $m\lozenge Y^n, m\blacklozenge Y^n$. Further, let $\cD_{\textup{typical}}$ be the set of joint types $P_{XX'Y}$ s.t. $P_Y(e) \leq pq + \sqrt{\epsilon}$ and 
$\Pr(\phi_1(Y)=\Delta|X=x)=\sum_{y\in\{0,1,e\}}P_{Y|X}(y|x) \leq  p + 2\sqrt{\epsilon/\rad}$ for all $x\in\cX$ for which $P_X(x)\geq\rad/4$. Here, we further restrict $\epsilon>0$ to be small enough so that $pq+\sqrt{\epsilon},p+2\sqrt{\epsilon/\rad}<1$ (recall, $p<1$).
By a union bound over $\cD$,
\begin{align*}
\Pr\inp{\cB_{x^n}|x^n}&\leq\sum_{\substack{P_{XX'Y}\in\\\cD\cap \cD_{\textup{typical}}}}\Pr\inp{\exists m:(x^n, f(m), Y^n)\in \cT^n_{XX'Y}}\\
&\qquad\quad+ \sum_{P_{XX'Y}\in \cD\cap \cD^c_{\textup{typical}}}\Pr\inp{(x^n, Y^n)\in \cT^n_{XY}}.
\end{align*}
The second term can be upper bounded by $3(n+1)^22^{-2n\epsilon}$ using Sanov's theorem.
To see this, let $J:=1_{\inp{\phi_1(Y)=\Delta}}$ be the indicator random variable of $(\phi_1(Y)=\Delta)$. Let $\hJ:=1_{\inp{Y=e}}$ and $J^n=(1_{\inp{\phi_1(Y_i)=\Delta}})_{i=1}^{n}$, $\hJ^n=(1_{\inp{Y_i=e}})_{i=1}^{n}$. Let $\bP$ be the type of $x^n$.
\begin{align*}
&\sum_{\substack{P_{XX'Y}\in\\ \cD\cap \cD^c_{\textup{typical}}}}\Pr\inp{(x^n, Y^n)\in \cT^n_{XY}|x^n}\\
&\qquad\leq \sum_{x\in\bP(x)\geq\rad/4}\;\sum_{\substack{P_{J|X}:P_{J|X}(1|x)\\>p+2\sqrt{\epsilon/\rad}}}\Pr\inp{J^n\in\cT^n_{J|X}(x^n)}\\
&\qquad\qquad+\sum_{P_{\hJ}:P_{\hJ}(1)>pq+\sqrt{\epsilon}}\Pr\inp{\hJ^n\in\cT^n_{\hJ}}\\
&\qquad\leq 2(n+1)^2 2^{-2n\epsilon} + (n+1)^2 2^{-2n\epsilon}\leq 3(n+1)^22^{-2n\epsilon}.
\end{align*} 
The second inequality follows from Sanov's theorem and Pinsker's inequality. Specifically, suppose $x\in\cX$ is such that $\bP(x)\geq\rad/4$. Then, for $\sum_{P_{J|X}:P_{J|X}(1|x)>p+2\sqrt{\epsilon/\rad}}\Pr\inp{J^n\in\cT^n_{J|X}(x^n)}$, the exponent of the upper bound from Sanov's theorem is 
\begin{align*}
-(n\bP(x))D\inp{p+2\sqrt{\epsilon/\rad}\middle\|p} \leq -n\bP(x)\frac{8\epsilon}{\rad\ln2}\leq -2n\epsilon,
\end{align*}
where the first inequality follows from Pinsker's inequality. Similarly, for $\sum_{P_{\hJ}:P_{\hJ}(1)>pq+\sqrt{\epsilon}}\Pr\inp{\hJ^n\in\cT^n_{\hJ}}$, the exponent is 
\begin{align*}
-nD\inp{pq+\sqrt{\epsilon}\middle\|pq} \leq -n\frac{2\epsilon}{\ln2}\leq -2n\epsilon.
\end{align*}

For the first term of the upper bound on $\Pr\inp{\cB_{x^n}|x^n}$ above,
\begin{align*}
&\sum_{P_{XX'Y}\in \cD\cap \cD_{\textup{typical}}}\Pr\inp{\exists\, m: (x^n, f(m), Y^n)\in \cT^n_{XX'Y}}\\
&=\sum_{\substack{P_{XX'Y}\in\\ \cD\cap \cD_{\textup{typical}}}} \; \sum_{\substack{m:f(m)\in\\ \cT^n_{X'|X}(x^n)}} \;\sum_{\substack{y^n\in\cT^n_{Y|XX'}(x^n,f(m))}}W(y^n|x^n)\\
&\stackrel{(a)}{\leq}\sum_{P_{XX'Y}\in \cD\cap \cD_{\textup{typical}}}2^{n\left(\left|R-I(X';X)\right|^{+}+\epsilon\right)}2^{{-n\inp{I(Y;X'|X)-\epsilon}}},
\end{align*}
where $(a)$ follows from \eqref{eq:code2_example}, the fact that $|\cT^n_{Y|XX'}(x^n,f(m))|\leq 2^{nH(Y|XX')}$, and  $W(y^n|x^n)$ is the same for each $y^n\in \cT^n_{Y|XX'}(x^n,f(m))$ and is upper bounded by $1/|\cT^n_{Y|X}(x^n)|$ 
which in turn is upper bounded by $2^{-nH(Y|X)+n\epsilon}$ for sufficiently large $n$. For $P_{XX'Y}\in \cD\cap \cD_{\textup{typical}}$, let $\zeta_{XX'Y}:=2^{n\inp{|R-I(X';X)|^{+}-I(Y;X'|X)+2\epsilon}}$. Below, we use a case analysis to argue that $\zeta_{XX'Y}\leq 2^{{-2n\epsilon}}$ for sufficiently small $\epsilon>0$:

{\em Case}~(i): $R\leq I(X';X)$. 
Recall that $J:=1_{\inp{\phi_1(Y)=\Delta}}$.
Then $I(Y;X'|X)\geq I(J;X'|X)$. We show that $I(J;X'|X)\geq 4\epsilon$. To see the intuition behind the argument, suppose $I(J;X'|X)=0$.
Since $X\vartriangleright \phi_1(Y)$ and $X'\vartriangleright \phi_1(Y)$, we have $X=X'$ whenever $J=0$. Also, since $P_{XX'Y}\in\cD_{\textup{typical}}$, for $x$ s.t. $P_X(x)\geq\rad/4$, we have $P_{J|X}(1|x)\leq p + 2\sqrt{\epsilon/\rad}$ which is strictly smaller than 1 (by the further restriction on $\epsilon$ we imposed); hence for such $x$, $P_{J|X}(0|x)>0$. Together with our supposition that 
$I(J;X'|X)=0$, this implies that for such $x$, $P_{X'|X}(x|x)=P_{X'|X,J}(x|x,0)=1$. Thus, $\Pr(X'\neq X) \leq \sum_{x:P_X(x)<\rad/4} P_X(x) \leq \rad/2$ which contradicts $\Pr(X'\neq X)\geq \rad$. This intuition can be extended to obtain a contradiction for $I(J;X'|X)\leq 4\epsilon$ for a sufficiently small choice of $\epsilon>0$.
To see this, suppose $I(J;X'|X)\leq 4\epsilon$. By, Pinsker's inequality, this implies that
\begin{align*}
\sum_{j, x',x}P_{JX}(j,x)|P_{X'|JX}(x'|j,x)-P_{X'|X}(x'|x)|\leq \sqrt{8\epsilon\ln2}.
\end{align*} 
As argued above, for $x\in\cX$ such that $P_X(x)\geq\rad/4$, $P_{X'|JX}(x|0,x)=1$. Hence, for such $x$, $P_{X'|X}(x|x)\geq 1-\frac{\sqrt{8\epsilon\ln2}}{P_{JX}(0,x)}$. Thus,
\begin{align*}
\Pr(X'\neq X) &= 1-\sum_{x}P_{XX'}(x,x) \\
&\leq \sum_{x:P_X(x)\geq\rad/4} \frac{\sqrt{8\epsilon\ln2}}{P_{J|X}(0|x)}+\sum_{x:P_X(x)<\rad/4}\rad/4\\
&\leq \frac{2\sqrt{8\epsilon\ln2}}{1-p-2\sqrt{\epsilon/\rad}}+\rad/2,
\end{align*}
where in the last step we used the fact that for $x$ s.t. $P_X(x)\geq \rad/4$, $P_{J|X}(0|x)=1-P_{J|X}(1|x) \geq 1 - p - 2\sqrt{\epsilon/\rad}$. This contradicts $\Pr(X\neq X')\geq \rad$ if $\epsilon>0$ is chosen sufficiently small such that $\frac{2\sqrt{8\epsilon\ln2}}{1-p-2\sqrt{\epsilon/\rad}}<\rad/2$ (recall, $p<1$). Thus, $I(J;X'|X)\geq 4\epsilon$ which implies that $\zeta_{XX'Y}\leq 2^{{-2n\epsilon}}$.

{\em Case}~(ii): $R> I(X';X)$. Here, $|R-I(X';X)|^{+}-I(Y;X'|X) = R-I(X';XY)\leq R-I(X';Y)$. We will show that $R\leq I(X';Y)-4\epsilon$ to obtain $\zeta_{XX'Y}\leq 2^{-2n\epsilon}$. To see the intuition, suppose  the conditional distribution $P_{\bX|XX'Y}$ is such that $\bX=X'$ (instead of just $\Pr(\bX\neq X')\leq\rad$). Since $\bX \blacktriangleright Y$, this implies  $X' \blacktriangleright Y$. Hence,  $I(X';Y) = H(X')-H(X'|Y)= 1 - \Pr(Y=e)H(X'|Y=e) \geq 1-\Pr(Y=e)\geq 1-pq-\sqrt{\epsilon}$, where in the second equality we used $X' \blacktriangleright Y$ and the fact that $P_{X'}$ is uniform since all codewords are of uniform type. For $\epsilon>0$ small enough such that $4\epsilon+\sqrt{\epsilon}<1-pq-R$ (recall, $R<1-pq$), $R\leq I(X';Y)-4\epsilon$. We can extend this argument to the case when $\Pr(X'\neq \bX)<\rad$.
To see this,
note that
 \begin{align}H(X'|Y) &\leq H(X'\bX |Y)\notag\\&= H(\bX|Y)+H(X'|\bX Y )\notag\\&\leq H(\bX|Y) + H(X'|\bX).\label{eq:shortachive-case2-step}\end{align}
Hence,
\begin{align*}
I(X';Y)  &= H(X') - H(X'|Y)\\
&\stackrel{\text{(a)}}{=} 1 - H(X'|Y)\\
&\stackrel{\text{(b)}}{\geq} 1 - H(\bX|Y) - H(X'|\bX)\\
&\stackrel{\text{(c)}}{\geq} 1 - \Pr(Y=e)H(\bX|Y=e) - H(X'|\bX)\\
&\stackrel{\text{(d)}}{\geq} 1 - \Pr(Y=e) - H(X'|\bX)\\
&\stackrel{\text{(e)}}{\geq} 1 - \Pr(Y=e) - (H_2(\rad) - \rad)\\
&\stackrel{\text{(f)}}{\geq} 1-pq-\sqrt{\epsilon}  -H_2(\rad) - \rad,
\end{align*}
where (a) follows from the fact that $P_{X'}$ is uniform, (b) follows from \eqref{eq:shortachive-case2-step}, (c) from $\bX \blacktriangleright Y$, (d) from the fact that the entropy of a binary random variable is at most 1, (e) from $\Pr(\bX\neq X')\leq\rad$ and Fano's inequality, and (f) from $P_Y(e)\leq pq+\sqrt{\epsilon}$.
Choosing $\epsilon>0$ sufficiently small such that $\sqrt{\epsilon}+ 4\epsilon+H(\rad) +\rad \leq 1-pq-R$ (recall, we chose $\rad>0$ such that $2(H_2(\rad) +\rad) \leq 1-pq-R$), we can conclude that $R\leq I(X';Y)-4\epsilon$. 
Thus,
\[
\Pr\inp{\cB_{x^n}|x^n}\leq 3(n+1)^22^{-2n\epsilon}+|\cD|2^{-2n\epsilon}\leq 2^{-n\epsilon}
\] for sufficiently large $n$ since $|\cD|\leq |\cP^n(\cX\times\cX\times\cY)|$ is polynomial in $n$.
}


\section{Consensus Capacity of Broadcast Channels}\label{sec:general}

The proof ideas from Section~\ref{sec:short} generalize to allow a characterization of the consensus capacity of all broadcast channels. 
\begin{definition}[Effective Input Alphabet $\cU$ and {\RepresentationChannel} $\tP_{U|X}$]\label{def:effective-input}
Recall the definition of the common channel $W_{V|X}$ of a broadcast channel $W_{YZ|X}$ (Definition~\ref{def:common-channel}). We define the (bounded) convex {\em common channel polytope} $\sS\subset\bbR^{|\cV|}$ of $W_{YZ|X}$ as the convex hull of the $|\cV|$-dimensional vectors $\{\ch_{V|X}(.|x), x\in\cX\}$. The {\em effective input alphabet} $\cU\subseteq\cX$ of $W_{YZ|X}$ is defined as follows: each $u\in\cU$ is such that $\ch_{V|X}(.|u)$ is a distinct vertex of $\sS$; if more than one $x\in\cX$ correspond to the same vertex, one among them is arbitrarily chosen to represent that vertex in $\cU$. 
In other words, let $\cU \subseteq \cX$ be a smallest-sized set such that there is a conditional distribution $\tP_{U|X}$  satisfying
\begin{align} \ch_{V|X}(v|x) = \sum_{u\in\cU} \tP_{U|X}(u|x)\ch_{V|X}(v|u), \label{eq:Ptilde}\end{align}
for all $v\in\cV, x\in\cX$. Thus, the cardinality of $\cU$ is the number of vertices of the polytope $\sS$. We will refer to $\tP_{U|X}$ as the {\em \representationchannel}. While the choice of $\tP_{U|X}$ and $\cU$ may not be unique, we choose one among the valid ones for the rest of the discussion. As we will see (Remark~\ref{rem:uniquenessofU}), our results will not depend on this choice.

For $u\in\cU$, we define $\cX_u = \{x\in\cX: W_{V|X}(.|x) = W_{V|X}(.|u)\}$, i.e., $\cX_u$ consists of all input letters $x$ corresponding to the vertex of $\sS$ associated with $u$.
\end{definition}

\begin{figure}
\centering
\resizebox{0.58\columnwidth}{!}{
 \begin{tikzpicture}[
    input/.style={circle},
    vertex/.style={draw=none, ellipse, fill=gray, minimum size=10pt, inner sep=0},
    lbl/.style={right, sloped, fill=white, inner sep=0cm},
    ed0/.style={ultra thick, auto, draw=vermillion},
    ed1/.style={ultra thick, auto, draw=bluishgreen},
    ed2/.style={ultra thick, auto, draw=myblue},
    pointer/.style={thick, dotted, draw},]
    \draw[draw=none,fill=gray!30] ($(2,3)+(-0.7,0.4)$)  rectangle ($(4,2)+(1.9,-0.4)$);
    \draw[draw=none,fill=gray!30] ($(2,1)+(-0.7,0.4)$)  rectangle ($(4,0)+(1.9,-0.4)$);
    \node[input, fill=vermillion, label=left:$0$] (0) at (-1.0,2.5) {}; 
    \node[input, fill=myblue, label=left:$e$] (2) at (-1.0,1.5) {}; 
    \node[input, fill=bluishgreen, label=left:$1$] (1) at (-1.0,0.5) {}; 
    \node[draw=none] (X) at (-1.0,4.0) {$\cX$}; 
    \node[draw=none] (Y) at (2.0,4.0) {$\cY$}; 
    \node[draw=none] (Z) at (4.0,4.0) {$\cZ$}; 
    \node[draw=none] (V) at (5.5,4.0) {$\cV$}; 
    \node[draw=none] (V) at (5.5,2.5) {$0$}; 
    \node[draw=none] (V) at (5.5,0.5) {$1$}; 
    \node[vertex, label=left:$a$] (al) at (2,3) {}; 
    \node[vertex, label=right:$a$] (ar) at (4,3) {}; 
    \node[vertex, label=left:$b$] (bl) at (2,2) {}; 
    \node[vertex, label=right:$b$] (br) at (4,2) {}; 
    \node[vertex, label=left:$c$] (cl) at (2,1) {}; 
    \node[vertex, label=right:$c$] (cr) at (4,1) {}; 
    \node[vertex, label=left:$d$] (dl) at (2,0) {}; 
    \node[vertex, label=right:$d$] (dr) at (4,0) {}; 
    \path[ed0] (al) edge (ar);
    \path[ed1] (bl) edge (br);
    \path[ed0] (cl) edge (cr);
    \path[ed1] (dl) edge (dr);
    \path[ed2] (br) edge (al);
    \path[ed2] (dr) edge (cl);
    \path[pointer, in=-200, out=20] (0) edge[->]  node[pos=0.3, lbl] (t) {$\scriptstyle 1-p$} ($(al)+(1,0.05)$);
    \path[pointer, in=-190, out=-30] (0) edge[->]  node[pos=0.3, lbl] (t) {$\scriptstyle p$} ($(cl)+(1,0.05)$);
    \path[pointer, in=200, out=-20] (1) edge[->]  node[pos=0.3, lbl] (t) {$\scriptstyle 1-p$} ($(dl)+(1,-0.05)$);
    \path[pointer, in=190, out=30] (1) edge[->]  node[pos=0.3, lbl] (t) {$\scriptstyle p$} ($(bl)+(1,-0.05)$);
    \path[pointer, in=160, out=30] (2) edge[->]  node[pos=0.3, lbl] (t) {$\scriptstyle \frac{1}{2}$} ($(bl)+(1,0.5)$);
    \path[pointer, in=-160, out=-30] (2) edge[->]  node[pos=0.3, lbl] (t) {$\scriptstyle \frac{1}{2}$} ($(cl)+(1,-0.5)$);
 \end{tikzpicture}
}%
\resizebox{0.42\columnwidth}{!}{\input{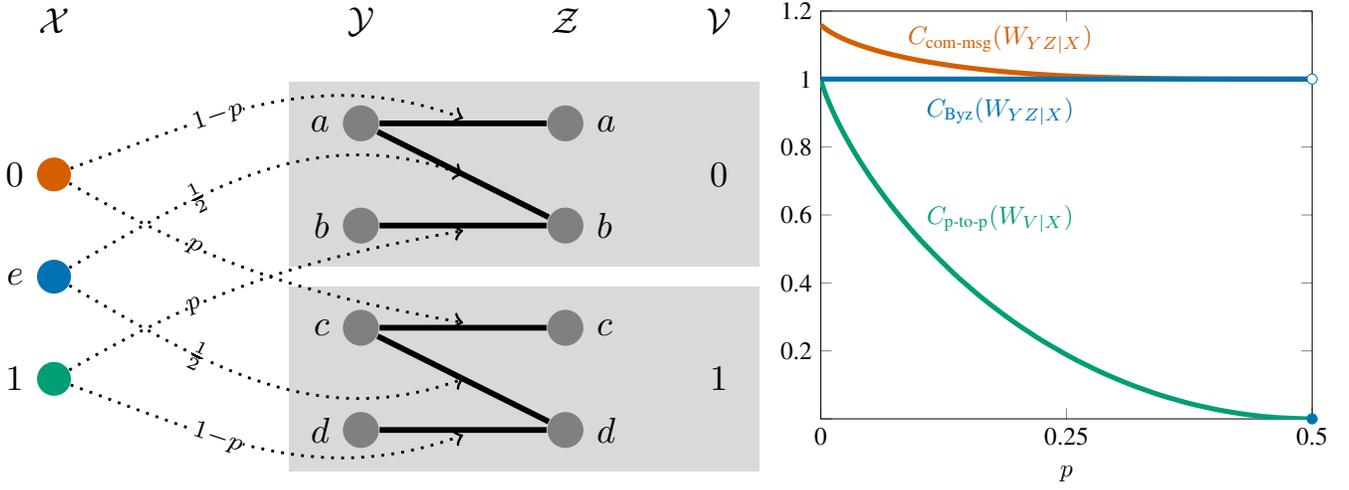}}
\caption{An example to show that $\capstr$ could be strictly in between the point-to-point capacity of the common channel and the common message capacity of the broadcast channel. For all values of $p$ except $p=0.5$, $\capstr=1$. For $p=0.5$, when the common channel is trivial, $\capstr=0$. Here $\cU=\{0,1\}\subsetneq\cX$ (except for $p=0.5$ when $\cU$ is a singleton).}\label{fig:bsc}
\end{figure}

\begin{theorem}\label{thm:Capacity}
The consensus capacity of $W_{YZ|X}$ is
\begin{align}
\capstr=\max_{P_U} \min_{\substack{P_{X|U}: P_{X|U}(x|u)>0\\\text{only if }\ch_{V|X}(.|x)=\ch_{V|X}(.|u)}} \min(I(U;Y),I(U;Z)),\label{eq:capstr}
\end{align} 
where the maximization is over p.m.f.s $P_U$ over the effective input alphabet $\cU$ and the mutual informations are evaluated under $P_{UXYZ}(u,x,y,z) = P_U(u)P_{X|U}(x|u)\ch_{YZ|X}(y,z|x)$.
\end{theorem}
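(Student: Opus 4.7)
The proof has two parts, achievability and converse, each generalizing the corresponding argument in Section~\ref{sec:short}. The rate formula admits a natural operational reading: the outer $\max_{P_U}$ is the honest encoder's choice of codeword type over the effective input alphabet $\cU$; the inner $\min_{P_{X|U}}$ is an adversary's choice of how to spread each $u\in\cU$ over the input letters $x\in\cX$ having the same common-channel distribution $\ch_{V|X}(\cdot|u)$ (and hence fooling the common-channel-based agreement mechanism); and $\min(I(U;Y),I(U;Z))$ is the worse of the two receivers' effective point-to-point rates after this spreading.

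For achievability, I would fix a maximizer $P_U^\star$ and, as in Lemma~\ref{lem:shortcodebook}, construct codewords in $\cU^n$ of type $P_U^\star$ satisfying a joint-type Gilbert--Varshamov--style bound, so that for every joint type $P_{UU'}$ with $P_U=P_{U'}=P_U^\star$ the number of codewords in a type-shell around any fixed sequence is at most $2^{n(|R-I(U;U')|^++\epsilon)}$. The decoder $\decB$ outputs $m$ if it is the unique index satisfying: (i) $f(m)$ is jointly typical with $\phi_1(y^n)$ under $\ch_{V|U}$, generalizing $m\lozenge y^n$; and (ii) there exists an ``explaining'' vector $\bx^n\in\cX^n$ whose joint type with $f(m)$ is close to $P_U^\star P_{X|U}$ for some valid $P_{X|U}$ (meaning $P_{X|U}(x|u)>0$ only when $\ch_{V|X}(\cdot|x)=\ch_{V|X}(\cdot|u)$), and such that $(\bx^n,y^n)$ is jointly typical with $\ch_{Y|X}$, generalizing $m\blacklozenge y^n$. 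The three cases of Section~\ref{sec:short_ach} then carry over: when $x^n=f(m)$ both decoders accept $m$ by typicality; when $x^n$ is far in joint type from every codeword, a union bound controls rejection simultaneously at both decoders; when $x^n$ is close in joint type to a unique $f(m)$ (with $\bx^n=x^n$ serving as the explainer), both decoders either accept $m$ or reject via condition (i), which depends only on the commonly-observable $\phi_1(y^n)=\phi_2(z^n)$. A Sanov/method-of-types computation analogous to Claim~\ref{cl:shortachievability} yields the rate $\min_{P_{X|U}}\min(I(U;Y),I(U;Z))$ evaluated at $P_U^\star$; the constraint on $P_{X|U}$ enters precisely when showing that the bad joint types have positive exponent $|R-I(U;U')|^+ - I(Y;U'|U)$.

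For the converse, I would extend Claim~\ref{cl:shortconverse} to assert that, within a single connected component $G_v$ of the characteristic graph, the decoders' joint probability of any decision is insensitive (up to $O(\err)$) to changes in the $k$-th input symbol; the proof replaces the $5$-edge spanning-tree argument by a spanning tree of $G_v$, and relies on the fact that inputs $x,\hat x$ with $\ch_{V|X}(\cdot|x)=\ch_{V|X}(\cdot|\hat x)$ activate exactly the same set of components with the same marginal probabilities. Summing over the $n$ positions shows that applying any valid $P_{X|U}$ to an honest codeword leaves both decoders' outputs essentially unchanged. A Fano-type bound on the effective point-to-point channels $U\to Y$ and $U\to Z$ induced by the adversary's worst $P_{X|U}$ then gives $R\le \min_{P_{X|U}}\min(I(U;Y),I(U;Z))+o(1)$ with $P_U$ the empirical type of the codewords; a standard averaging/concavity step over $P_U$ yields the final upper bound.

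The main obstacle I anticipate is the converse. Generalizing Claim~\ref{cl:shortconverse} requires a careful spanning-tree argument inside each connected component while tracking non-uniform edge weights $\ch_{YZ|X}(y,z|x)$, and then lifting this local insensitivity to a global rate bound via the right coupling between the honest codeword distribution and the adversarial distribution $P_U\,P_{X|U}$, so that $\err=o(1/n)$ is converted into a Fano-style inequality whose minimum over $P_{X|U}$ is the claimed bound. On the achievability side, the delicate step is choosing the tolerance parameter $\rad$ and the typicality slack $\epsilon$ simultaneously small enough to distinguish ``close'' from ``far'' transmitted vectors, and showing that when no valid explaining $\bx^n$ exists within tolerance, no wrong message can slip past both decoders' second condition.
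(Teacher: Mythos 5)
Your proposal mirrors the paper's Section~\ref{sec:proof} almost step for step: achievability via a GV-type codebook over $\cU^n$ of type $P_U$ with a joint-type packing bound (Lemma~\ref{lemma:codebook}), a two-condition decoder (a common-channel consistency test $\lozenge$ and an ``explaining vector'' test $\blacklozenge$), the same three-case analysis and method-of-types exponent computation; converse via a per-symbol hybrid argument generalizing Claim~\ref{cl:shortconverse} to a spanning-tree bound inside each connected component (Lemma~\ref{lem:hybrid}, Claim~\ref{cl:converse2ndclaim}), then Fano and single-letterization. The coupling idea you describe---inputs with equal rows $\ch_{V|X}(\cdot|\cdot)$ can be swapped without changing the common-channel output distribution---is precisely the coupling used in \eqref{eq:coupling-short}.

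What your converse sketch does not articulate, and what is a genuine missing ingredient, is how to pass from an arbitrary encoder $f:[1:2^{nR}]\to\cX^n$ to an expression in terms of $P_U$ over $\cU$. You write ``with $P_U$ the empirical type of the codewords,'' but a converse must handle codewords using all of $\cX$, whose empirical type is a general $P_X$. The paper first establishes the bound $R\le\max_{P_X}\min_{P_{X'|X}\in\cP_{X'|X}}\min(I(X;Y),I(X;Z))$ (Claim~\ref{cl:converseclaim}) for attacks $P_{X'|X}$ that preserve the common-channel row of every $x$, and then shows that the maximum is attained on $\cU$-supported distributions by factoring any $P_X$ through $\tP_{U|X}$ and exhibiting a Markov chain $X-U-(Y,Z)$ that achieves a value no smaller. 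That reduction rests on Lemma~\ref{lem:polytope}: any mixture of rows $\ch_{V|X}(\cdot|x)$ reproducing a \emph{vertex} row $\ch_{V|X}(\cdot|u)$ must be supported on the letters sharing that vertex. The same vertex/polytope fact is also what produces the strictly positive constant $\gamma$ (Claim~\ref{claim:defn_gamma}) that drives the positive-exponent step in Case~(i) of the achievability analysis. ``Averaging/concavity over $P_U$'' does not supply this; the polytope argument is the one structural piece your sketch leaves implicit in both halves of the proof.
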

We prove this in Section~\ref{sec:proof}. The expression for capacity can be interpreted as follows: Unlike the expression \eqref{eq:capcom} for $\capcom$, the input distribution $P_U$ avoids using letters $x$ with $\ch_{V|X}(.|x)$ which are not vertices of $\sS$. A byzantine sender can attack by replacing such an $x$ by sending a letter from $\cU$ picked according to an appropriate distribution that induces the same common channel output distribution. The minimization in the capacity expression represents a similar attack where, for each letter $u$, the sender randomly chooses among the letters $x$ which correspond to the same vertex in $\sS$ as $u$. It turns out that both these attacks cannot be detected by the receivers in a manner which permits consensus. 
As we show in Example~\ref{ex:Zexample} below, $\capstr$ may lie strictly in between $\capcom$ and the point-to-point capacity of the common channel $W_{V|X}$ (see Figure~\ref{fig:bsc}).
\begin{remark}\label{rem:commonrand}
Notice that when the common channel has capacity $\capptop(\ch_{V|X})=0$, $\cU$ is a singleton set and $\capstr=0$. Since $\capstr\geq\capptop(\ch_{V|X})$ (a fact which can also be verified from Theorem~\ref{thm:Capacity} using $V=\phi_1(Y)=\phi_2(Z)$), $\capstr>0$ if and only if $\capptop(\ch_{V|X})>0$. 
\end{remark}
\begin{remark}\label{rem:sharedrandomness}
It is easy to see from Theorem~\ref{thm:Capacity} that $\capstr$ remains unchanged if the receivers are provided additional correlated randomness unknown to the sender which they can use to coordinate their actions --- augment the channel outputs to $(Y,S)$ and $(Z,T)$ where $(S,T)$ is independent of $(X,Y,Z)$ and notice that $I(U;Y,S)=I(U;Y)$ and $I(U;Z,T)=I(U;Z)$. 
Recall that the converse of Theorem~\ref{thm:Capacity} is shown under $\err^{n}=o(1/n)$ (see Remark~\ref{rem:weak}).
However, we show in Appendix~\ref{app:weak} that for the example of Section~\ref{sec:shortconverse} which has $\capstr=0$ (in fact, even in a stronger sense; see Remark~\ref{rem:Witsenhausen}), when common randomness unknown to the sender is available to the decoders, a positive rate can be achieved with $\err^{(n)}=o(n^{-\frac{1}{2}+\epsilon})$, for any $\epsilon>0$.  
\end{remark}

\begin{remark}\label{rem:uniquenessofU}
It is clear from \eqref{eq:capstr} that if there was a choice in selecting $\cU$, the expression for capacity on the right-hand-side of \eqref{eq:capstr} does not depend on this choice --  the minimization is over $P_{X|U}$ such that, for each $u$, $P_{X|U}(.|u)$ has support (only) over all letters $x\in\cX$ which correspond to the same vertex of the polytope $\sS$ as $u$. 
\end{remark}

\begin{example}[Capacity of the channel in Figure~\ref{fig:bsc}]\label{ex:Zexample}
Let $p\in[0,0.5]$. Consider the following channel with $\cX=\{0,e,1\}$, $\cY=\cZ=\{a,b,c,d\}$
\[ \ch_{YZ|X}(y,z|x)=\begin{cases} 
                             1-p, & (x,y,z)\in\{(0,a,a),(1,d,d)\}\\
                             p, & (x,y,z)\in\{(0,c,c),(1,b,b)\}\\
                             1/2, & (x,y,z)\in\{(e,a,b),(e,c,d)\}
                     \end{cases}
\]
The characteristic graph of $\ch_{YZ|X}$ in Figure~\ref{fig:bsc} has two connected components: $G_0$ on the vertices $\{a,b\}\cup\{a,b\}$ and $G_1$ on $\{c,d\}\cup\{c,d\}$. Denote the connected components by $\cV=\{0,1\}$.
The common channel is $\ch_{V|X}$ with $\ch_{V|X}(1-x|x)=1-\ch_{V|X}(x|x)=p$ for $x\in\{0,1\}$, and $\ch_{V|X}(0|e)=\ch_{V|X}(1|e)=\frac{1}{2}$.
Hence, the capacity of the common channel is that of a binary symmetric channel with cross over probability $p$
\[ \capptop(\ch_{V|X})=1-H(p).\]

The consensus capacity of this channel is computed as follows: For $p\neq 0.5$, the vertices of the convex hull of $\{\ch_{V|X}(\cdot|x), x\in\{0,1,e\}\}$ correspond to the symbols $0$ and $1$ of $\cX$, i.e., $\cU=\{0,1\}$, since
\[ \ch_{V|X}(v|e) = \frac{1}{2}\ch_{V|X}(v|0) + \frac{1}{2}\ch_{V|X}(v|1), v\in\cV.\]
For $u\in\{0,1\}$, $\ch_{V|X}(\cdot|x)=\ch_{V|U}(\cdot|u)$ if and only if $x=u$.
Hence, the consensus capacity is \eqref{eq:capstr} evaluated under $P_{UYZ}(u,y,z)=P_U(u)\ch_{YZ|X}(y,z|u)$.
\begin{align*}
\capstr &=\max_{P_U} \min(I(U;Y),I(U;Z))\\
        &\stackrel{\text{(a)}}{=} \max_{P_U} H(U)\\
        &= 1,
\end{align*}
where (a) follows from the fact that $H(U|Y)=H(U|Z)=0$ under $P_{UYZ}(u,y,z)=P_U(u)\ch_{YZ|X}(y,z|u)$.
For $p=0.5$, the vectors $\ch_{V|X}(.|x)$ are identical for $x\in\cX$ and the polytope collapses to a point. Then, $\cU$ is singleton and $\capstr=0$.

Finally, we compute the common message capacity 
\[ \capcom=\max_{P_X}\min(I(X;Y),I(X;Z)).\]
Suppose $P_X$ is a maximizer. By symmetry, $P'_X$ such that $P'_X(x)=P_X(1-x)$ for $x\in\{0,1\}$ and $P'_X(e)=P_X(e)$ also achieves the maximum. Furthermore, since $I(X;Y)$ and $I(X;Z)$ are concave functions of $P_X$ and so is $\min(I(X;Y),I(X;Z))$, any convex combination (specifically, the uniform convex combination) of $P_X$ and $P'_X$ also achieves the maximum.
Hence, without loss of generality, we make take the maximizing $P_X$ to be of the form $P_X(0)=P_X(1)=\frac{1-q}{2}, P_X(e)=q$, where $q\in[0,1]$. With the mutual informations evaluated under this,
    \begin{align*}
        \capcom&=\max_{q\in[0,1]} \min(I(X;Y),I(X;Z))\\
        &\stackrel{(a)}{=} \max_{q\in[0,1]} I(X;Y)\\
        &\stackrel{(b)}{=} \max_{q\in[0,1]} I(X;Y, 1_{Y\in\{b,d\}})\\
        &= \max_{q\in[0,1]} I(X;1_{Y\in\{b,d\}}) + I(X;Y|1_{Y\in\{b,d\}})\\
        &\stackrel{(c)}{=} \max_{q\in[0,1]} I(X;1_{X=1}) + P(X\neq 1) I(X;Y|X\neq 1)\\
        &\stackrel{(c)}{=} \max_{q\in[0,1]} H(1_{X=1}) + P(X\neq 1) I(X;Y|X\neq 1)\\
        &= \max_{q\in[0,1]} H_2\left(\frac{1-q}{2}\right) +
           \frac{1+q}{2} \bigg(H(Y|X\neq 1) -\\ &\qquad\qquad\quad  \sum_{x\in\{0,e\}} P(X=x|X\neq 1) H(Y|X=x)\bigg)\\
        &= \max_{q\in[0,1]} H_2\left(\frac{1-q}{2}\right) +
           \frac{1+q}{2} \bigg(H_2\left(\frac{p(1-q)+q}{1+q}\right) \\ &\qquad\qquad\qquad\qquad\qquad-\left(\frac{1-q}{1+q}H_2(p) + \frac{2q}{1+q}\right)\bigg),
    \end{align*}
    where (a) follows from the fact that $I(X;Y)=I(X;Z)$ when $P_X$ is of the form we are working with,
    (b) from the fact that the indicator function $1_{Y\in\{b,d\}}$ is a function of $Y$,
    and, (c) follows from the fact that $(Y\in\{b,d\})$ if and only if $(X=1)$.
We evaluate this expression numerically and compare with the other capacities in Figure~\ref{fig:bsc}.
\end{example}


\section{Proof of Theorem~\ref{thm:Capacity}}\label{sec:proof}
Before proving the converse (in Section~\ref{sec:converse}) and achievability (in Section~\ref{sec:achievability}) of Theorem~\ref{thm:Capacity}, we state a lemma which is an immediate consequence of the fact that a vertex of a convex polytope (specifically, the common channel polytope $\sS$) cannot be expressed as a non-trivial convex combination of other vertices or interior points of the polytope. We prove the lemma in Appendix~\ref{app:polytope}.
\begin{lemma}\label{lem:polytope}
Consider a broadcast channel $W_{YZ|X}$ with $W_{V|X}$ as its common channel (Definition~\ref{def:common-channel}) and $\cU$ as a common message alphabet (Definition~\ref{def:effective-input}). 
Let $\lambda_x,x\in\cX$ be a p.m.f., i.e., $\lambda_x\geq0, x\in\cX$ and $\sum_{x\in\cX} \lambda_x=1$. Suppose for some $u\in\cU$,
\begin{align}\sum_{x\in\cX} \lambda_xW_{V|X}(v|x)= \ch_{V|X}(v|u), \text{ for all } v\in\cV.\label{lemma5:main_statement}
\end{align}
Then $\lambda_x>0$ only if $\ch_{V|X}(v|x)=\ch_{V|X}(v|u)$ for all $v\in \cV$.
Additionally, the representation channel $\tP_{U|X}(u|x)$ in Definition~\ref{def:effective-input} is such that for $u\in \cU$
\begin{align}
\tP_{U|X}(u|x) = 1 \text{ if and only if } x\in \cX_u \label{eq:cX_u}
\end{align} where 
$\cX_u =\{x\in \cX:W_{V|X}(v|x)=W_{V|X}(v|u),\, \forall\,v\in \cV\}.$ In particular, for $\tilde{u}, u'\in \cU$,
\begin{align}\tP_{U|X}(\tilde{u}|u') = 1_{\inb{\tilde{u} = u'}}.\label{eq:cX_u_particular}\end{align}
\end{lemma}

\subsection{Converse of Theorem~\ref{thm:Capacity}}\label{sec:converse}
The following is the main converse claim.
\begin{claim}\label{cl:converseclaim}
\begin{align}
\capstr\leq \max_{P_X} \min_{P_{X'|X}\in\cP_{X'|X}}  \min(I(X;Y),I(X;Z)),\label{eq:converseclaim1}
\end{align}
where $\cP_{X'|X}$ is the set of all $P_{X'|X}$ such that 
\begin{align}\sum_{x'\in\cX} P_{X'|X}(x'|x)W_{V|X}(v|x')= \ch_{V|X}(v|x)\label{eq:cP-X'|X} 
\end{align}
for all $v\in\cV,x\in\cX$, and the mutual informations are evaluated under the joint
distribution\\ $P_{XX'YZ}(x,x',y,z)=P_X(x)P_{X'|X}(x'|x)\ch_{YZ|X}(y,z|x')$.
\end{claim}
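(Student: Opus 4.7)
The plan is to generalize the single-position coupling argument of Claim~\ref{cl:shortconverse} from the connected-characteristic-graph setting to the multi-component case, and combine the resulting hybrid bound with a constant-composition Fano converse applied to an adversarial ``impersonation'' attack: given a codeword $f(m)$, the malicious sender transmits $\tilde X^n$ with $\tilde X_i \sim P_{X'|X}(\cdot|f(m)_i)$ drawn independently across $i$, for some $P_{X'|X}\in\cP_{X'|X}$. Since $P_{X'|X}$ preserves the marginal of $V$ at each position, the adversary's transmission is indistinguishable from an honest one on the common-channel axis, forcing the consensus output to remain $m$ with high probability; the rate must then be bounded by the mutual information of the end-to-end memoryless channel $X\to\tilde X\to Y$.

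The key step is the following single-position local claim: for every $k\in[1:n]$, $x^n\in\cX^n$, $m$, and $P_{X'|X}\in\cP_{X'|X}$,
\begin{align*}
\left|\Pr(\cA_m\,|\,X^n=x^n) - \sum_{\tilde x_k} P_{X'|X}(\tilde x_k|x_k)\, \Pr\bigl(\cA_m\,\big|\, X^n=(x_1^{k-1},\tilde x_k,x_{k+1}^n)\bigr)\right| \leq C\err,
\end{align*}
where $C$ depends only on the channel. The proof follows that of Claim~\ref{cl:shortconverse}: couple the outputs at all positions $\neq k$ into auxiliary variables $(S,T)$; for every edge $\{y,z\}$ of the characteristic graph, $\Pr(\decB(S,y)\neq\decC(T,z))\leq \err/p_{\min}$ where $p_{\min}=\min\{\ch_{YZ|X}(y,z|x):\ch_{YZ|X}(y,z|x)>0\}$; a spanning-tree union bound within each connected component $G_v=(\cN_v,\cE_v)$ and a further union bound across $v\in\cV$ show that with probability at least $1-O(\err)$ the pair $(\decB(S,y),\decC(T,z))$ is constant on each $\cE_v$ and hence depends on $\{y,z\}$ only through $v=\phi_1(y)=\phi_2(z)$. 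On this event, replacing $x_k$ by $\tilde X_k\sim P_{X'|X}(\cdot|x_k)$ leaves the decoder outputs unchanged, since $P_{X'|X}\in\cP_{X'|X}$ preserves the $V_k$-marginal by definition.

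Telescoping this local bound along $k=1,\ldots,n$ under the joint coupling $\tilde X_i\sim P_{X'|X}(\cdot|x_i)$ yields $|\Pr(\cA_m\,|\,X^n=x^n)-\Pr(\cA_m\,|\,X^n=\tilde X^n)|\leq nC\err$. Specializing $x^n=f(m)$ gives $\Pr(\cA_m\,|\,X^n=\tilde X^n)\geq 1-\err(1+nC)=1-o(1)$ under the hypothesis $\err^{(n)}=o(1/n)$. Thus under the impersonation attack both decoders output $m$ with probability tending to one. Viewing $X^n=f(M)\to\tilde X^n\to Y^n$ as a point-to-point memoryless channel with single-letter transition kernel $\sum_{x'}P_{X'|X}(x'|x)\ch_{Y|X}(y|x')$, restricting to an exponentially large subcode of a common empirical type $P_X$ (by pigeonhole over types, at the cost of a factor polynomial in $n$), and applying the standard constant-composition converse, we obtain $R\leq I(X;Y)+o(1)$ where $I(X;Y)$ is evaluated under $p_{XX'Y}(x,x',y)=p_X(x)P_{X'|X}(x'|x)\ch_{Y|X}(y|x')$. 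The symmetric argument at the $Z$-receiver gives $R\leq I(X;Z)+o(1)$ under the analogous joint law. Minimizing over $P_{X'|X}\in\cP_{X'|X}$ (adversary's free choice) and maximizing over $P_X$ (code's type) yields~\eqref{eq:converseclaim1}.

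The main obstacle is the per-component spanning-tree analysis in the local claim: one must keep the constant $C$ independent of $n$, absorbing the sizes of the spanning trees of the $G_v$'s (each at most $|\cN_v|-1$ edges), the cardinality $|\cV|$ of the union bound across components, and $1/p_{\min}$ into $C$. The rest of the argument is a direct combination of a hybrid telescoping, Fano's inequality, and the standard memoryless-channel constant-composition converse.
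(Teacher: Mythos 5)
Your proposal is correct and follows essentially the same route as the paper's proof: an impersonation attack that passes $f(M)$ through a single-letter $P_{X'|X}\in\cP_{X'|X}$, a position-by-position hybrid argument whose single-step bound is established by a spanning-tree union bound on the characteristic graph (this is exactly the content of the paper's Lemma~\ref{lem:hybrid} and the coupling in Claim~\ref{cl:converse2ndclaim}), and then a Fano converse over the induced memoryless channel $X\to X'\to(Y,Z)$ using the hypothesis $\err^{(n)}=o(1/n)$ to make the telescoped $O(n\err)$ loss vanish. Your two small departures---taking a union bound over $v\in\cV$ rather than invoking the per-component bound only for the realized $v$, and extracting a constant-composition subcode by pigeonhole rather than applying Jensen's inequality to the time-averaged single-letter type---change only constants and presentation, not the substance of the argument.
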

Before proving the claim, we will argue that it implies the converse of Theorem~\ref{thm:Capacity}. To this end, we now argue that the maximization in \eqref{eq:converseclaim1} may be restricted to $P_X$ which have support only over $\cU$ and, hence,
\begin{align}
\capstr\leq \max_{P_U} \min_{P_{X'|U}\in\cP_{X'|U}} \min(I(U;Y),I(U;Z)),\label{eq:converseclaim2}
\end{align}
where $\cP_{X'|U}$ is defined analogously to \eqref{eq:cP-X'|X}, i.e., $\cP_{X'|U}$ is the set of all $P_{X'|U}$ such that
\[ \sum_{x'\in\cX} P_{X'|X}(x'|x)W_{V|X}(v|x')= \ch_{V|X}(v|x), \quad v\in\cV,u\in\cU,\] 
and the joint distribution $P_{UX'YZ}$ for the RHS of \eqref{eq:converseclaim2} is given by \[P_{UX'YZ}(u,x',y,z)=P_U(u) P_{X'|U}(x'|u) \ch_{YZ|X}(y,z|x').\]
Given any $P_X$, consider $\tP_{XU}$ defined by $\tP_{XU}(x,u)=P_X(x)\tP_{U|X}(u|x)$, where $\tP_{U|X}$ is the \representationchannel{} in~Definition~\ref{def:effective-input} (see~\eqref{eq:Ptilde}). We will show that under the induced marginal distribution $\tP_U$,
\begin{align*} &\min_{P_{X'|U}\in\cP_{X'|U}} \min(I(U;Y),I(U;Z)) \\
 &\qquad\qquad\geq \min_{P_{X'|X}\in\cP_{X'|X}}  \min(I(X;Y),I(X;Z)), \end{align*}
where the RHS is evaluated with $P_X$. To do this, suppose $\pP_{X'|U}\in\cP_{X'|U}$ is a minimizer for the LHS. We need only show that there is a $P_{X'|X}\in\cP_{X'|X}$ such that $I(X;Y)$ (resp., $I(X;Z)$) is no larger than $I(U;Y)$ (resp., $I(U;Z)$) of the LHS. Consider the $P_{XUX'YZ}$ defined by
\begin{align} 
&P_{XUX'YZ}(x,u,x',y,x)\notag\\
& = P_X(x)\tP_{U|X}(u|x)\pP_{X'|U}(x'|u)\ch_{YZ|X}(y,z|x')
\label{eq:converse-x'|x}
\end{align}
which induces the following $P_{X'|X}$
\begin{align*}
P_{X'|X}(x'|x)=\sum_{u} \tP_{U|X}(u|x)\pP_{X'|U}(x'|u), x,x'\in\cX.
\end{align*}
We have $P_{X'|X}\in\cP_{X'|X}$ (see \eqref{eq:cP-X'|X}) since, for all $v,x$,
\begin{align*}
&\sum_{x'} P_{X'|X}(x'|x)\ch_{V|X}(v|x')\\
&= \sum_{x'} \left(\sum_{u}\tP_{U|X}(u|x)\pP_{X'|U}(x'|u)\right)\ch_{V|X}(v|x')\\
&= \sum_u \tP_{U|X}(u|x) \left(\sum_{x'} \pP_{X'|U}(x'|u)\ch_{V|X}(v|x')\right)\\
&\stackrel{\text{(a)}}{=} \sum_u \tP_{U|X}(u|x) \ch_{V|X}(v|u)\\
&\stackrel{\text{(b)}}{=} \ch_{V|X}(v|x),
\end{align*}
where (a) follows from $\pP_{X'|U}\in\cP_{X'|U}$ and (b) from~\eqref{eq:Ptilde}. Further, since $X-U-(Y,Z)$ is a Markov chain under \eqref{eq:converse-x'|x}, we have $I(X;Y)\leq I(U;Y)$ and $I(X;Z)\leq I(U;Z)$. Thus, we have shown~\eqref{eq:converseclaim2}. The converse now follows from observing that if $P_{X'|U}\in\cP_{X'|U}$, then, by \eqref{lemma5:main_statement} of Lemma~\ref{lem:polytope}, $P_{X'|U}(x|u)>0$ only if $\ch_{V|X}(.|x)=\ch_{V|X}(.|u)$. 

It only remains to prove Claim~\ref{cl:converseclaim}. Suppose $(\enc,\decB,\decC)$ is an $(n,2^{nR})$ byzantine agreement code with error probability $\err$. Consider any $P_{X'|X}\in\cP_{X'|X}$. We shall argue that for a uniformly chosen message $M$, if the sender passes the codeword $X^n=f(M)$ through the discrete memoryless channel (DMC) $P_{X'|X}$ and sends the output $X'^n$ through the broadcast channel $\ch_{YZ|X}$, the decoders acting on the output vectors $Y^n$ and $Z^n$ must output the message $M$ with sufficiently high probability. Specifically, we will show the following:
\begin{claim}\label{cl:converse2ndclaim}
Under the above experiment,
\[  \Pr(\decB(Y^n)=\decC(Z^n)=M) \geq 1 - (n+1)\rho\err,\]
where 
\begin{align}
\rho = \frac{|\cY|+|\cZ|-1}{\min_{(x,y,z):\ch_{YZ|X}(y,z|x)>0} W_{YZ|X}(y,z|x)}. \label{eq:rho}
\end{align}
\end{claim}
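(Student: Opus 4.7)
The plan is a hybrid argument that replaces the coordinates of the honest codeword $X^n=\enc(M)$ one at a time by the DMC output $X'^n$, controlling the loss in the agreement probability $\Pr(\decB(Y^n)=\decC(Z^n)=M)$ at each step. The per-coordinate bound directly generalizes Claim~\ref{cl:shortconverse}: for any $P_{X'|X}\in\cP_{X'|X}$, swapping a single input from a fixed $x_k$ to a random $X'_k\sim P_{X'|X}(\cdot\mid x_k)$ changes the agreement probability by at most $\err\rho$.

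To establish this swap bound, fix $k$, a message $m$, and a context $(x_1^{k-1},x_{k+1}^n)$; let $S=(Y_1^{k-1},Y_{k+1}^n)$ and $T=(Z_1^{k-1},Z_{k+1}^n)$ denote the corresponding channel outputs at positions $\neq k$. Let $\cA$ be the event that within each connected component $G_v=(\cN_v,\cE_v)$ of the characteristic graph there exists some $h(v)\in[1:K]\cup\{\perp\}$ with $\decB(S,y)=\decC(T,z)=h(v)$ for every $\{y,z\}\in\cE_v$. Arguing exactly as in the proof of Claim~\ref{cl:shortconverse}, for each edge $\{y,z\}$ in the characteristic graph, $\Pr(\decB(S,y)\neq\decC(T,z))$ is at most $\err$ divided by $\min_{(x,y,z):\ch_{YZ|X}(y,z\mid x)>0}\ch_{YZ|X}(y,z\mid x)$. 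Since the characteristic graph admits a spanning forest with at most $|\cY|+|\cZ|-1$ edges, a union bound yields $\Pr(\cA^c)\leq\err\rho$.

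On $\cA$, both decoders output $h(V_k)$, where $V_k=\phi_1(Y_k)=\phi_2(Z_k)$ is the common-channel output at position $k$, and therefore
\begin{equation*}
\Pr\!\bigl(\decB(S,Y_k)=\decC(T,Z_k)=m,\,\cA \,\big|\, X_k=x_k\bigr)=\sum_{(s,t)\in\cA}\Pr(S=s,T=t)\sum_{v\in\cV}\ch_{V|X}(v\mid x_k)\,\mathbb{1}\{h_{s,t}(v)=m\},
\end{equation*}
which depends on the input at position $k$ only through $\ch_{V|X}(\cdot\mid x_k)$. By the defining property~\eqref{eq:cP-X'|X} of $\cP_{X'|X}$, this expression is unchanged when $x_k$ is replaced by the random $X'_k\sim P_{X'|X}(\cdot\mid x_k)$. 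The contribution from $\cA^c$ to either of the two agreement probabilities lies in $[0,\Pr(\cA^c)]$, so they differ by at most $\err\rho$, proving the swap bound.

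Finally, define $p_k$ as the probability of decoder agreement on $M$ when coordinates $1,\ldots,k$ are passed through $P_{X'|X}$ and coordinates $k+1,\ldots,n$ remain $\enc(M)_i$. Honest transmission gives $p_0\geq 1-\err$, and $p_n$ is the quantity to be bounded. Applying the swap bound at each coordinate, with $M$ and the previously sampled $X'_1,\ldots,X'_{k-1}$ playing the role of a random context (the pointwise bound survives averaging), yields $|p_k-p_{k-1}|\leq\err\rho$. Telescoping gives $p_n\geq 1-\err-n\err\rho\geq 1-(n+1)\rho\err$ using $\rho\geq 1$. The main subtlety is ensuring the swap bound applies for random contexts rather than just fixed ones; this is straightforward because the per-context bound holds pointwise and is preserved under expectation.
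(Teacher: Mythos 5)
Your argument is correct and follows the same strategy as the paper: a coordinate-by-coordinate hybrid, with a per-coordinate bound whose key ingredient is that (with probability $\geq 1-\err\rho$, via the spanning-forest/union-bound argument) the decoders' joint decision at position $k$ depends on $(Y_k,Z_k)$ only through the common-channel output $V_k$, whose distribution is preserved by any $P_{X'|X}\in\cP_{X'|X}$. The only cosmetic difference is that the paper packages the per-coordinate step as an explicit coupling of $(\hY_k,\hZ_k,Y_k,Z_k)$ sharing the same $V$ (invoking their Lemma~\ref{lem:hybrid}), whereas you decompose the agreement probability directly over the good event $\cA$ and observe that the on-$\cA$ contribution is a function of $\ch_{V|X}(\cdot\mid x_k)$ alone; these are equivalent, and your handling of the random context by pointwise bound plus averaging is also exactly what the paper does.
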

Assuming Claim~\ref{cl:converse2ndclaim} for the moment, suppose $R$ is achievable, i.e., there is a sequence of $(n,2^{nR})$ byzantine agreement codes such that $\err^{(n)}=o(1/n)$. Then, under the above experiment, $\Pr(\decB(Y^n)\neq M)$ and $\Pr(\decC(Z^n)\neq M)$, which are upper bounded by $(n+1)\rho\err^{(n)}$, approach 0 as $n\rightarrow\infty$. Note that here we make use of the requirement that the error probability $\err^{(n)}$ must fall super-linearly in the blocklength $n$. Then, using Fano's inequality and following standard single-letterization steps (e.g.,~\cite[Sec.~7.9]{CoverT2006}),
\begin{align*}
R &\leq \frac{1}{n}\sum_{i=1}^nI(X_i;Y_i) + R\Pr(\decB(Y^n)\neq M) + \frac{1}{n}\\
  &\leq \frac{1}{n}\sum_{i=1}^nI(X_i;Y_i) + (n+1)R\rho\err^{(n)} + \frac{1}{n}.
\intertext{Similarly,}
R &\leq \frac{1}{n}\sum_{i=1}^nI(X_i;Z_i) + (n+1)R\rho\err^{(n)} + \frac{1}{n}.
\end{align*}
Since $\err^{(n)}=o(1/n)$, for any $\epsilon>0$, there is a sufficiently large $n$ such that
\begin{align*}
R &\leq \min\left( \frac{1}{n}\sum_{i=1}^nI(X_i;Y_i), \frac{1}{n}\sum_{i=1}^nI(X_i;Z_i) \right) + \epsilon\\
  &\leq \min\left( I(X;Y), I(X;Z) \right) + \epsilon,
\end{align*}
where we used Jensen's inequality in the last step whose RHS is evaluated under the joint distribution
\[ P_{XX'YZ}(x,x',y,z) = P_X(x) P_{X'|X}(x'|x) \ch_{YZ|X}(y,z|x')\]
with $P_X(x) = \frac{1}{n}\sum_{i=1}^n \Pr(X_i=x)$ and $X^n=\enc(M)$ where $M$ is a uniformly chosen message. Since this holds for any choice of $P_{X'|X}\in\cP_{X'|X}$, Claim~\ref{cl:converseclaim} follows.

We now prove Claim~\ref{cl:converse2ndclaim} to complete the proof of Claim~\ref{cl:converseclaim}.
\begin{proof}[Proof of Claim~\ref{cl:converse2ndclaim}]
It will suffice to show that, for each message $m\in\{1,\ldots,2^{nR}\}$, under the experiment in which the codeword $\enc(m)$ is first sent over a DMC $P_{X'|X}\in\cP_{X'|X}$ and its output sent over $\ch_{YZ|X}$ to produce $\hY^n,\hZ^n$,
\begin{align} \Pr(\decB(\hY^n)=\decC(\hZ^n)=m) \geq 1 - (n+1)\rho\err. \label{eq:step-n}\end{align}
Notice that, in this proof, we denote $Y^n,Z^n$ in the statement of Claim~\ref{cl:converse2ndclaim} by $\hY^n,\hZ^n$; and the unhatted versions refer to the result of sending the codeword as it is into the channel as follows: 
Let $\enc(m)=x^n=(x_1,x_2,\ldots,x_n)$. Let $Y^n,Z^n$ be the outputs of the channel when $x^n$ is transmitted, i.e.,
\begin{align}
 P_{Y^n,Z^n}(y^n,z^n)=\prod_{i=1}^n \ch_{YZ|X}(y_i,z_i|x_i).\label{eq:unhattedmarginals}
\end{align}
We are given (by \eqref{eq:errordefn})
\begin{align}
\Pr((\decB(Y^n)=\decC(Z^n)=m) \geq 1 - \err. \label{eq:step-0}
\end{align}
Let $\hX^n$ denote the output resulting from sending $x^n$ over the DMC $P_{X'|X}\in\cP_{X'|X}$, and let $\hY^n,\hZ^n$ denote the channel outputs from sending this $\hX^n$ over $\ch_{YZ|X}$. i.e.,
\begin{align}
P_{\hX^n,\hY^n,\hZ^n}&(\hx^n,\hy^n,\hz^n)\notag\\
 &= \prod_{i=1}^n P_{X'|X}(\hx_i|x_i)\ch_{YZ|X}(\hy_i,\hz_i|\hx_i).\label{eq:hattedmarginals}
\end{align}
We show \eqref{eq:step-n} by showing the following for $\epsilon>0$ and $k\in[1:n]$: 
\begin{align}
\intertext{If }
\Pr(&\decB((\hY_1,\ldots,\hY_{k-1},Y_k,\ldots,Y_n))=\notag\\
  &\decC((\hZ_1,\ldots,\hZ_{k-1},Z_k,\ldots,Z_n))=m) \geq 1 - \epsilon,\label{eq:step-t-1}\\
\intertext{then}
\Pr(&\decB((\hY_1,\ldots,\hY_{k},Y_{k+1},\ldots,Y_n))=\notag\\
  &\decC((\hZ_1,\ldots,\hZ_{k},Z_{k+1},\ldots,Z_n))=m) \geq 1 - \epsilon - \err\rho.\label{eq:step-t}
\end{align}
Since \eqref{eq:step-0} implies \eqref{eq:step-t-1} for $k=1$ with $\epsilon=\rho\err$ (since $\rho\geq 1$), applying the above recursively for $k=1,\ldots,n$ will give \eqref{eq:step-n}. Note that the joint distribution of the random variables in \eqref{eq:step-t-1} is
\begin{align}
&p_{\hY_1^{k-1},Y_k^n,\hZ_1^{k-1},Z_k^n}(\hy_1^{k-1},y_k^n,\hz_1^{k-1},z_k^n)\notag\\
 &=\prod_{i=1}^{k-1} \left(\sum_{\hx_i} P_{X'|X}(\hx_i|x_i)\ch_{YZ|X}(\hy_i,\hz_i|\hx_i)\right)\notag\\
     &\qquad\prod_{i=k}^n \ch_{YZ|X}(y_i,z_i|x_i), \label{eq:t-1marginals}
\end{align} 
while that of the random variables in \eqref{eq:step-t} is
\begin{align}
&p_{\hY_1^{k},\hZ_1^{k},Y_{k+1}^n,Z_{k+1}^n}(\hy_1^{k},\hz_1^{k},y_{k+1}^n,z_{k+1}^n)\notag\\
 &=\prod_{i=1}^{k} \left(\sum_{\hx_i} P_{X'|X}(\hx_i|x_i)\ch_{YZ|X}(\hy_i,\hz_i|\hx_i)\right) \notag\\
    &\qquad \prod_{i=k+1}^n \ch_{YZ|X}(y_i,z_i|x_i). \label{eq:tmarginals}
\end{align} 
To show that \eqref{eq:step-t-1} implies \eqref{eq:step-t}, we define the following joint distribution of these random variables
\begin{align}
&p_{\hY_1^{k},\hZ_1^{k},Y_{k}^n,Z_{k}^n}(\hy_1^{k},\hz_1^{k},y_{k}^n,z_{k}^n)\notag\\
 &=\prod_{i=1}^{k-1} \left(\sum_{\hx_i} P_{X'|X}(\hx_i|x_i)\ch_{YZ|X}(\hy_i,\hz_i|\hx_i)\right)
 \notag\\&\qquad
 P_{\hY_k,\hZ_k,Y_k,Z_k}(\hy_k,\hz_k,y_k,z_k) 
 \prod_{i=k+1}^n \ch_{YZ|X}(y_i,z_i|x_i),\label{eq:coupling-long}
\end{align}
where the coupling $P_{\hY_k,\hZ_k,Y_k,Z_k}$ is given by
\begin{align}
&P_{\hY_k,\hZ_k,Y_k,Z_k}(\hy_k,\hz_k,y_k,z_k)\notag\\
&=\sum_{\hx_k} P_{X'|X}(\hx_k|x_k) \sum_{v} \ch_{V|X}(v|\hx_k)\notag\\
&\qquad\qquad \ch_{YZ|XV}(\hy_k,\hz_k|\hx_k,v)\ch_{YZ|XV}(y_k,z_k|x_k,v).\label{eq:coupling-short}
\end{align}
i.e., under the coupling, both $(\hY_k,\hZ_k)$ and $(Y_k,Z_k)$ have the same common channel output. We first demonstrate that this is a valid coupling by verifying that it has the correct marginals for $(\hY_k,\hZ_k)$ and $(Y_k,Z_k)$.
\begin{align*}
&P_{\hY_k,\hZ_k}(\hy_k,\hz_k)\\
 &=\sum_{\hx_k} P_{X'|X}(\hx_k|x_k) \sum_{v} \ch_{V|X}(v|\hx_k) \ch_{YZ|XV}(\hy_k,\hz_k|\hx_k,v)\\
 &=\sum_{\hx_k} P_{X'|X}(\hx_k|x_k)\ch_{YZ|X}(\hy_k,\hz_k|\hx_k),
\end{align*}
which matches \eqref{eq:tmarginals}; we used Definition~\ref{def:common-channel} in the last step above. To verify that the marginals of $(Y_k,Z_k)$ in the coupling match that in \eqref{eq:t-1marginals},
\begin{align*}
&P_{Y_k,Z_k}(y_k,z_k)\\
 &=\sum_{\hx_k} P_{X'|X}(\hx_k|x_k) \sum_{v} \ch_{V|X}(v|\hx_k) \ch_{YZ|XV}(y_k,z_k|x_k,v)\\
 &=\sum_{v} \left(\sum_{\hx_k} P_{X'|X}(\hx_k|x_k) \ch_{V|X}(v|\hx_k)\right) \ch_{YZ|XV}(y_k,z_k|x_k,v)\\
 &\stackrel{\text{(a)}}{=}\sum_{v} \ch_{V|X}(v|x_k) \ch_{YZ|XV}(y_k,z_k|x_k,v)\\
 &\stackrel{\text{(b)}}{=}\ch_{YZ|X}(y_k,z_k|x_k),
\end{align*}
where (a) follows from \eqref{eq:cP-X'|X} since $P_{X'|X}\in\cP_{X'|X}$ and (b) from Definition~\ref{def:common-channel}.
Now \eqref{eq:step-t} will follow from \eqref{eq:step-t-1} if we show that, under the joint distribution of \eqref{eq:coupling-long},
\begin{align}
\Pr\big(&\decB((\hY_1,\ldots,\hY_{k-1},Y_k,Y_{k+1}\ldots,Y_n))=\notag\\
  &\decC((\hZ_1,\ldots,\hZ_{k-1},Z_k,Z_{k+1}\ldots,Z_n))=\notag\\
  &\decB((\hY_1,\ldots,\hY_{k-1},\hY_k,Y_{k+1},\ldots,Y_n))=\notag\\
  &\decC((\hZ_1,\ldots,\hZ_{k-1},\hZ_k,Z_{k+1},\ldots,Z_n))\big) \geq 1 - \err\rho \label{eq:coupled}
\end{align}
since, if we take $A,B,C$ to be the events whose probabilities are on the left-hand sides of \eqref{eq:step-t-1}, \eqref{eq:step-t} and \eqref{eq:coupled} resp., we have $B\supseteq A\cap C$, and hence $\Pr(B) \geq 1 - \Pr(A^c) - \Pr(C^c) = \Pr(A) + \Pr(C) - 1$. 

To this end, we state the following lemma which will be proved later.
\begin{lemma}\label{lem:hybrid}
Let $P_{ST}$ be a joint distribution over $\cS\times\cT$, $\delta>0$, and $\psi_{\sB}$ and $\psi_{\sC}$ be functions defined on ${\cY}\times\cS$ and $\cZ\times \cT$, respectively. Under the joint distribution 
    \begin{align} P_{STYZ|X}(s,t,y,z|x) = P_{ST}(s,t)\ch_{YZ|X}(y,z|x), \label{eq:hybrid-distr}
    \end{align}
suppose for every $x\in {\cX}$, 
    \begin{align}
        \Pr(\psi_{\sB}(Y,S)\neq \psi_{\sC}(Z,T)| X=x)\leq \delta. \label{eq:hybrid-error}
    \end{align} 
Then, for every $v\in \cV$, 
    \begin{align}
       \Pr\left( \left|\bigcup_{\{y,z\}\in\cE_v} \left\{\psi_{\sB}(y,S),\psi_{\sC}(z,T)\right\} \right| = 1\right)
       &\geq 1- \delta\rho, \label{eq:hybrid}
    \end{align}
where the probability is over $(S,T)\sim P_{ST}$ and $\rho$ is given by \eqref{eq:rho}. Recall that $\cE_v$ is the edge set of the connected component (of the characteristic graph of the broadcast channel $\ch_{YZ|X}$) corresponding to the letter $v\in\cV$ of the common channel output alphabet (Definition~\ref{def:common-channel}).
\end{lemma}
To show \eqref{eq:coupled} (under the joint distribution of \eqref{eq:coupling-long}), we invoke Lemma~\ref{lem:hybrid} with $S=(\hY_1^{k-1},Y_{k+1}^n)$, $T=(\hZ_1^{k-1},Z_{k+1}^n)$, and 
\begin{align*}
\psi_{\sB}(y,\inp{\hy_1^{k-1},y_{k+1}^n)}&=\decB(\hy_1^{k-1},y,y_{k+1}^n),\\
\psi_{\sC}(z,(\hz_1^{k-1},z_{k+1}^n))&=\decC(\hz_1^{k-1},z,z_{k+1}^n).
\end{align*}
With $\delta=\err$, as we argue below, \eqref{eq:hybrid-error} follows from the fact the decoder outputs must agree with probability at least $1-\err$ for all inputs and, specifically, the input $(\hX_1,\ldots,\hX_{k-1},x,x_{k+1},\ldots,x_n)$. i.e., 
\begin{align*}
&\Pr(\psi_{\sB}(Y,S)\neq \psi_{\sC}(Z,T)| X=x)\\
&\stackrel{\text{(a)}}{=}\sum_{\hx_1,\ldots,\hx_{k-1}} 
 \left(\prod_{i=1}^{k-1} P_{X'|X}(\hx_i|x_i)\right)\\
 &\qquad\qquad\qquad\Pr(\decB(Y^n)\neq\decC(Z^n)|X^n=(\hx_1^{k-1},x,x_k^n))\\
&\stackrel{\text{(b)}}{\leq}\sum_{\hx_1,\ldots,\hx_{k-1}} 
 \left(\prod_{i=1}^{k-1} P_{X'|X}(\hx_i|x_i)\right) \err\\
&=\err,
\end{align*}
where the probability in the RHS of (a) is over the $n$-uses of the memoryless channel $\ch_{YZ|X}$ and (b) follows from \eqref{eq:errordefn}. We need to show \eqref{eq:coupled} (under the joint distribution of \eqref{eq:coupling-long}) which translates to
\begin{align}
\Pr\big(&\psi_B(Y_k,S)=
\psi_C(Z_k,T)
\notag\\&
=\psi_B(\hY_k,S)=
\psi_C(\hZ_k,T)\big)
 \geq 1 - \err\rho, \label{eq:coupled-tr}
\end{align}
where 
$(S,T)$ is independent of $(\hY_k,\hZ_k,Y_k,Z_k)$ with $P_{\hY_k,\hZ_k,Y_k,Z_k}$ given by \eqref{eq:coupling-short}.
Let us define the events 
$\cA_{\hy_k,\hz_k,y_k,z_k}=\{\psi_B(\hy_k,S)= \psi_C(\hz_k,T) =\psi_B(y_k,S)= \psi_C(z_k,T)\big\}$.
\begin{align*}
&\Pr(\psi_B(Y_k,S)= \psi_C(Z_k,T) =\psi_B(\hY_k,S)= \psi_C(\hZ_k,T))\\
&\stackrel{\text{(a)}}{=}\sum_{\hy_k,\hz_k,y_k,z_k} P_{\hY_k,\hZ_k,Y_k,Z_k}(\hy_k,\hz_k,y_k,z_k) \Pr(\cA_{y_k,z_k,\hy_k,\hz_k})\\
&\stackrel{\text{(b)}}{=}\sum_{\hy_k,\hz_k,y_k,z_k} \sum_{\hx_k,v} P_{X'|X}(\hx_k|x_k)\ch_{V|X}(v|\hx_k)\\
     &\qquad\qquad\qquad
         \ch_{YZ|XV}(y_k,z_k|x_k,v) \ch_{YZ|XV}(\hy_k,\hz_k|\hx_k,v)\\
     &\qquad\qquad\qquad\qquad
          \Pr(\cA_{y_k,z_k,\hy_k,\hz_k})\\
&=\sum_{\hx_k,v} P_{X'|X}(\hx_k|x_k)\ch_{V|X}(v|\hx_k)\\
     &\qquad
       \sum_{\hy_k,\hz_k,y_k,z_k} 
         \ch_{YZ|XV}(y_k,z_k|x_k,v) \ch_{YZ|XV}(\hy_k,\hz_k|\hx_k,v)\\
     &\qquad\qquad\qquad\qquad
          \Pr(\cA_{y_k,z_k,\hy_k,\hz_k})\\
&\stackrel{\text{(c)}}{=}\sum_{\hx_k,v} P_{X'|X}(\hx_k|x_k)\ch_{V|X}(v|\hx_k)\\
     &
       \sum_{\{\hy_k,\hz_k\},\{y_k,z_k\}\in\cE_v} 
         \ch_{YZ|XV}(y_k,z_k|x_k,v) \ch_{YZ|XV}(\hy_k,\hz_k|\hx_k,v)\\
     &\qquad\qquad\qquad\quad\;
          \Pr(\cA_{y_k,z_k,\hy_k,\hz_k})\\
&\stackrel{\text{(d)}}{\geq} \sum_{\hx_k,v} P_{X'|X}(\hx_k|x_k)\ch_{V|X}(v|\hx_k)\\
     &
       \sum_{\{\hy_k,\hz_k\},\{y_k,z_k\}\in\cE_v} 
         \ch_{YZ|XV}(y_k,z_k|x_k,v) \ch_{YZ|XV}(\hy_k,\hz_k|\hx_k,v)\\
     &\qquad\qquad\qquad\quad\;
          (1-\err\rho)\\
&=1-\err\rho,
\end{align*}
where (a) follows from the independence of $(S,T)$ and $(\hY_k,\hZ_k,Y_k,Z_k)$ (notice that the probability of $\Pr(\cA_{y_k,z_k,\hy_k,\hz_k})$ is over the distribution of $(S,T)$), (b) from \eqref{eq:coupling-short}, and (c) from the fact that $\ch_{YZ|XV}(y,z|x,v)>0$ only if the edge $\{y,z\}$ lies in the edge set $\cE_v$ of the connected component $G_v$ corresponding to the common channel output letter $v$. Inequality (d) follows from Lemma~\ref{lem:hybrid} which, as discussed, we may invoke with $\delta=\err$. By~\eqref{eq:hybrid}, we may conclude that $\Pr(\cA_{y_k,z_k,\hy_k,\hz_k})\geq 1-\err\rho$ since under the event in \eqref{eq:hybrid} all $\{y,z\}\in\cE_v$ result in the same output. This completes the proof of Claim~\ref{cl:converse2ndclaim}.
\end{proof}
It only remains to prove Lemma~\ref{lem:hybrid}.
\proof[Proof of Lemma~\ref{lem:hybrid}]{
For $x\in\cX$, by \eqref{eq:hybrid-error},
    \begin{align*}
        \delta &\geq \Pr(\psi_{\sB}(Y,S)\neq \psi_{\sC}(Z,T)| X=x)\\
        &\stackrel{(a)}{=} \sum_{y,z} \ch_{YZ|X}(y,z|x) \Pr(\psi_{\sB}(S,y)\neq\psi_{\sC}(T,z)|X=x)\\
        &\stackrel{(b)}{=} \sum_{y,z} \ch_{YZ|X}(y,z|x) \Pr(\psi_{\sB}(S,y)\neq\psi_{\sC}(T,z)),
    \end{align*}
    where (a) and (b) use the description of $P_{STYZ|X}$ given in \eqref{eq:hybrid-distr}.
    Hence, for every edge $\{y,z\}\in\cE$ in the characteristic graph (i.e., $\ch_{YZ|X}(y,z|x)>0$ for some $x$), 
    \begin{align}
        &\Pr(\psi_{\sB}(S,y)\neq\psi_{\sC}(T,z)) \notag\\
            &\qquad\leq \frac{\delta}{\min_{(x,y,z):\ch_{YZ|X}(y,z|x)>0} \ch_{YZ|X}(y,z|x)}.\label{eq:hybrid-edge}
    \end{align}
    For every $v\in\cV$, since its corresponding connected component $G_v(\cN_v,\cE_v)$ is connected, it has a spanning tree, say, $G'_v(\cN_v,\cE'_v)$, $\cE'_v\subseteq\cE_v$.
    \begin{align*}
        &\Pr\left( \left|\bigcup_{\{y,z\}\in\cE_v} \left\{\decB(S,y),\decC(T,z)\right\} \right| = 1\right)\\
        &\stackrel{(a)}{=} \Pr\left( \left|\bigcup_{\{y,z\}\in\cE'_v} \left\{\decB(S,y),\decC(T,z)\right\} \right| = 1\right)\\
        &\stackrel{(b)}{\ge} 1-\sum_{\{y,z\}\in\cE'_v} \Pr\left( \decB(S,y)\neq\decC(T,z)\right)\\
        &\stackrel{(c)}{\ge} 1-\frac{\delta|\cE'|}{\min_{(x,y,z):\ch_{YZ|X}(y,z|x)>0} \ch_{YZ|X}(y,z|x)}\\
        &\stackrel{(d)}{\ge} 1-\delta\rho.
    \end{align*}
    Here, (a) follows from $G'_v$ being a spanning tree of $G_v$; (b) follows from a union bound; (c) follows from \eqref{eq:hybrid-edge}; and, (d) follows from $G'_v$ being a tree and hence $|\cE'_v|=\cN_v-1\le|\cX|+|\cY|-1$.
    This concludes the proof.
}

\subsection{Achievability of Theorem~\ref{thm:Capacity}}\label{sec:achievability}

We use the following notation for this section. Some of it is repeated from Section~\ref{sec:notation} for ready reference.\\
\underline{Notation specific to this subsection}:  Random variables are denoted by capital letters like $X, X', Y,$ etc. The corresponding alphabets are denoted by calligraphic letters in the same format, for example, the random variables $X$ and $X'$ have alphabet $\cX$. Its $n-$product set is denoted by $\cX^n$. For compactness, We use bold faced letters to denote $n-$length vectors, for example $\vecx$   denotes a vector in $\cX^n$ and $\vecX$ denotes a random vector in $\cX^n$. 
For an alphabet $\cX$, let $\mathcal{P}^n\inp{\cX}$ denote the set of all empirical distributions (types) of $n$ length strings from $\cX^n$. For a random variable $X$, we denote its distribution by $P_X$ and use the notation $X\sim P_{X}$ to indicate this. If $P_X \in \mathcal{P}^n\inp{\cX}$, we use $\cT^n_X$ to denote the set of all sequences with empirical distribution specified by $P_X$.  If $\vecx\in \cX^n$ has empirical distribution $P_X$, we say $\vecx$ is of type $P_X$ and write $\vecx\in \cT^n_X$.  When $P_X$ is not already defined, note that we write $\vecx\in \cT^n_{X}$ to implicitly define the type $P_X$ {associated with $\cT^n_X$ to be the } empirical distribution of $\vecx$. Similarly, for $(\vecx, \vecy)\in \cX^n\times\cY^n$, we write $(\vecx, \vecy)\in \cT^n_{XY}$ to define $P_{XY}$ to be the joint empirical distribution  the vectors $\vecx$ and $\vecy$. 
{For a broadcast channel $W_{YZ|X}$, we denote its marginal channels to the receivers by $W_{Y|X}$ and $W_{Z|X}$ respectively.} For a channel $W$, its $n$-fold product (memoryless use) is denoted by $W^n$. 
For any number $a$, we will use $\exp{a}$ to denote $2^a$ and $\log{a}$ to denote $\log_2{a}$. All information theoretic quantities (KL divergence, entropy and mutual information) are in base 2. 
For any set $\cS$, we use $\cS^c$ to denote its complement. For $s, s'\in \cS$ 
 we define 
 $$1_{\inb{s=s'}}=\begin{cases}1 \text{ if }s=s'\\0, \text{ otherwise.}\end{cases}$$

We state some properties of joint types from \cite[Chapter~2]{CsiszarK2011} which will be useful. Consider a joint type $P_{XY}\in \cP^n\inp{\cX\times\cY}$. For $\vecx\in \cT^n_X$, a  distribution $Q$ on $\cX$ and a discrete memoryless channel $W$ from $\cX$ to $\cY$, 
\begin{align}
|\cP^n\inp{\cX}|&\leq (n+1)^{|\cX|}\label{eq:type_property1}\\
(n+1)^{-|\cX|}\exp\inp{nH(X)}&\leq |\cT^n_{X}|\leq \exp\inp{nH(X)}\label{eq:type_property2}\\
(n+1)^{-|\cX||\cY|}\exp\inp{nH(Y|X)}&\leq |\cT^n_{Y|X}(\vecx)|\leq \exp\inp{nH(Y|X)}\label{eq:type_property3}\\
(n+1)^{-|\cX|}\exp\inb{-nD(P_X||Q)}&\leq \sum_{\vecx'\in \cT^n_{X}}Q^n(\vecx')\leq \exp\inb{-nD(P_X||Q)}\label{eq:type_property4}\\
\sum_{\vecy\in \cT^n_{Y|X}(\vecx)}W^n(\vecy|\vecx)&\leq \exp\inb{-nD(P_{XY}||P_XW)}\label{eq:type_property5}
\end{align}

Next, we state the following claim (proved in Appendix~\ref{app:polytope}, page~\pageref{sub:proof_of_claim_ref_claim_defn_gamma}) which will be used later. 
\begin{claim}\label{claim:defn_gamma}
Suppose $W_{YZ|X}$ is such that  $\capptop(\ch_{V|X})>0$, i.e. $|\cU|\geq 2$. Then, there exists $\gamma>0$ such that for any $u'\in \cU$ and $P_{U|U'}$,
\begin{align}
&\sum_{v\in \cV}\big|\sum_{u\in \cU}P_{U|U'}(u|u')W_{V|X}(v|u) - W_{V|X}(v|u')\big|{\geq} (1-P_{U|U'}(u'|u'))\gamma.\label{eq:defn_U'}
\end{align} This holds with equality if $P_{U|U'}(u'|u') = 1$. 
\end{claim}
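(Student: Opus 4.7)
The plan is to reduce the claim to a compactness argument on the polytope $\sS$. First, when $P_{U|U'}(u'|u')=1$, both sides of \eqref{eq:defn_U'} vanish, giving equality. So I assume $\alpha := 1 - P_{U|U'}(u'|u') > 0$ and define a probability distribution $Q$ on $\cU\setminus\{u'\}$ by $Q(u)=P_{U|U'}(u|u')/\alpha$. Writing
\[
\sum_{u\in\cU} P_{U|U'}(u|u')\ch_{V|X}(v|u) \;=\; (1-\alpha)\,\ch_{V|X}(v|u') + \alpha\sum_{u\neq u'} Q(u)\,\ch_{V|X}(v|u),
\]
the left-hand side of \eqref{eq:defn_U'} becomes $\alpha\sum_{v\in\cV}\bigl|\sum_{u\neq u'} Q(u)\ch_{V|X}(v|u) - \ch_{V|X}(v|u')\bigr|$. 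Hence it suffices to exhibit a constant $\gamma>0$ such that, for every $u'\in\cU$ and every probability distribution $Q$ on $\cU\setminus\{u'\}$, the bracketed $\ell_1$-norm is at least $\gamma$.

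The crux of the argument is the defining property of $\cU$: each $\ch_{V|X}(\cdot|u)$, $u\in\cU$, is a distinct vertex of the polytope $\sS$. Since $\ch_{V|X}(\cdot|u')$ is an extreme point of $\sS$, it cannot be written as a convex combination of the other vertices $\{\ch_{V|X}(\cdot|u): u\in\cU\setminus\{u'\}\}$. Consequently, for every probability distribution $Q$ on $\cU\setminus\{u'\}$ (nonempty since $|\cU|\ge 2$), the difference $\sum_{u\neq u'} Q(u)\ch_{V|X}(\cdot|u) - \ch_{V|X}(\cdot|u')$ is a nonzero vector in $\mathbb{R}^{|\cV|}$, so its $\ell_1$-norm is strictly positive.

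To lift this pointwise positivity to a uniform lower bound, I would invoke compactness. The simplex of probability distributions on $\cU\setminus\{u'\}$ is a compact set and the map $Q\mapsto \sum_{v\in\cV}|\sum_{u\neq u'}Q(u)\ch_{V|X}(v|u) - \ch_{V|X}(v|u')|$ is continuous, so its infimum is attained at some $Q^\star$ and equals some $\gamma_{u'}>0$. Setting $\gamma := \min_{u'\in\cU}\gamma_{u'}$ then yields a strictly positive universal constant, since the outer minimization is over the finite set $\cU$.

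The only conceptual step is the appeal to the extreme-point characterization of vertices, which uses precisely the property that distinct elements of $\cU$ index distinct vertices of $\sS$ (built into the definition of $\cU$). Everything else is routine continuity and compactness, so I anticipate no real obstacle.
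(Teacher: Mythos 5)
Your proof is correct and follows essentially the same route as the paper's: factor out $\alpha = 1 - P_{U|U'}(u'|u')$ to reduce to distributions supported on $\cU\setminus\{u'\}$, observe that a vertex of $\sS$ cannot be a convex combination of the remaining vertices (the paper packages this as Lemma~\ref{lem:polytope}; you invoke extremality directly), and obtain a uniform positive $\gamma$ via compactness and a finite minimum over $u'$. The only cosmetic difference is that the paper leaves the compactness step implicit while you spell it out.
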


For the achievability, we need to generalize the notion of relative distance between binary strings that was used in Section~\ref{sec:short_ach}. Recall the definition of $\cU$ (Definition~\ref{def:effective-input}). 
\begin{definition}\label{defn:dis}
Let $(\vecu, \vecx)\in \cU^n\times\cX^n$ with empirical distribution $P_{UX}$, i.e.,  $(\vecu, \vecx)\in \cT^n_{UX}$.
We define $d(\vecu, \vecx)$ and $d(P_{UX})$ as follows:\\
For the {\em representation channel} $\tP_{U|X}$ given by Definition~\ref{def:effective-input}, let $P_{UX\tilde{U}}(u, x, \tilde{u}) = P_{UX}(u, x)\tP_{U|X}(\tilde{u}|x)$ for $u, \tilde{u}\in \cU$ and $x\in \cX$. Then, 
\begin{align}
d(\vecu, \vecx) = d(P_{UX}) := \pr(U\neq \tilde{U})\label{eq:dis_blue}
\end{align}  under the $P_{UX\tilde{U}}$ above.
\end{definition}
The function $d$ defined above satisfies symmetry (for elements in $\cU^n$) and triangle inequality.
\begin{lemma}\label{lemma:dis_properties}
For $\vecu, \vecu'\in \cU^n$ and $\vecx\in \cX^n$, 
\begin{align}
d(\vecu, \vecu') &=  d(\vecu', \vecu).\label{dist:U} \\
\text{ and } d(\vecu, \vecu')&\leq d(\vecu, \vecx) + d(\vecu', \vecx) \label{eq:triangle}
\end{align}

\end{lemma}
This is proved in Appendix~\ref{app:polytope} (page~\pageref{proof:lemma10_prop}).
The next lemma uses a random coding argument to generate a codebook which will be used to show the achievability. It is proved in Appendix~\ref{app:polytope} (page~\pageref{app:proof:codebook}).
\begin{lemma}\label{lemma:codebook}
There exists $n_0(\cdot)$ such that for any  $\epsilon>0,\,  n\geq n_0(\epsilon)$, $K\geq \exp\inp{n\epsilon}$, $\alpha>0$, $\rad>0$ and type $P\in \cP^n\inp{\cU}$ satisfying {\em (i)} $\min_{u}P(u)\geq \alpha$, and {\em (ii)}  $\epsilon\leq R\leq \min_{P_{U U'}\in \cP}I(U;U')-\epsilon$ where $R:=\inp{\log{K}}/n$ and  $\cP:= \inb{P_{UU'}\in \cP^n\inp{\cU\times\cU}: P_U=P_{U'}=P, \, \Pr\inp{U\neq U'}< 2\rad}$,  
there exists an encoder $f:[1:K]\rightarrow \cU^n$, whose codewords $\vecu_{i}=f(i),\, i\in [1:K]$ are of type $P$ such that 
\begin{align}
d(\vecu_i, \vecu_j)\geq  2\rad \text{ for all }i\neq j, i, j \in [1:K], \label{eq:code1}
\end{align} and for every joint type $P_{UX}\in \cP^n\inp{\cU\times \cX}$ and $\vecx\in \cX^n$ satisfying $P_U= P$ and $\vecx \in \cT^n_{X}$, 
\begin{align}
|\{i\in [1:K]:(\vecu_i, \vecx)\in \cT^n_{UX}\}|\leq\exp{\left(n\left(\left|R-I(U;X)\right|^{+}+\epsilon\right)\right)},\label{eq:code2}
\end{align} where $\exp(a)$ denotes $2^a$.
\end{lemma}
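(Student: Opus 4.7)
The plan is a standard random-coding-plus-expurgation argument. Draw $\hat K = 2K$ codewords $\vecU_1, \ldots, \vecU_{\hat K}$ independently and uniformly from $\cT^n_U$, so each codeword is automatically of type $P$. I will show that with probability tending to $1$, the random ensemble simultaneously (a) satisfies the list-size property \eqref{eq:code2} for every $(\vecx, P_{UX})$ and (b) has at most $\hat K/4$ codewords involved in any pair violating the minimum-distance condition \eqref{eq:code1}; expurgating the latter yields the desired code with $\geq K$ codewords.

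\textbf{Minimum distance.} By \eqref{dist:U}, $d(\vecU_i,\vecU_j) < 2\rad$ holds precisely when $(\vecU_i,\vecU_j) \in \cT^n_{UU'}$ for some $P_{UU'}\in \cP$. By \eqref{eq:type_property2} and \eqref{eq:type_property3}, for any such type,
\[\Pr[(\vecU_i,\vecU_j) \in \cT^n_{UU'}] = \frac{|\cT^n_{U'|U}(\vecU_i)|}{|\cT^n_{U'}|} \leq (n+1)^{|\cU|}\exp(-nI(U;U')).\]
Summing over the at-most $(n+1)^{|\cU|^2}$ joint types in $\cP$ and using the hypothesis $I(U;U') \geq R + \epsilon$ for all $P_{UU'} \in \cP$, we obtain $\Pr[d(\vecU_i,\vecU_j) < 2\rad] \leq \exp(-n(R+\epsilon/2))$ for large enough $n$. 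Thus the expected number of bad pairs is at most $\hat K^2 \exp(-n(R+\epsilon/2)) \leq \hat K \exp(-n\epsilon/2)$, and by Markov's inequality the event that fewer than $\hat K/4$ codewords are involved in any bad pair has probability tending to $1$.

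\textbf{List size.} Fix $(\vecx, P_{UX})$ with $\vecx\in\cT^n_X$ and $P_U = P$, and let $N := |\{i : \vecU_i \in \cT^n_{U|X}(\vecx)\}|$, a sum of $\hat K$ i.i.d.\ Bernoullis with parameter $p = |\cT^n_{U|X}(\vecx)|/|\cT^n_U| \leq \exp(-nI(U;X) + o(n))$, so $\expt[N] \leq \exp(n(R - I(U;X)) + o(n))$. A standard Chernoff-type bound yields $\Pr[N > \exp(n(|R-I(U;X)|^+ + \epsilon))] \leq \exp(-\exp(n\epsilon/3))$: in the regime $R > I(U;X)$ a multiplicative Chernoff gives the tail around $2\expt[N]$; in the regime $R \leq I(U;X)$ (where $\expt[N] \ll 1$) one uses $\Pr[N \geq t] \leq \binom{\hat K}{t}p^t$ with $t = \exp(n\epsilon)$, which decays faster than any single exponential. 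This doubly-exponential tail beats the union bound over the at-most $(n+1)^{|\cU||\cX|}\cdot |\cX|^n$ choices of $(P_{UX}, \vecx)$, so with probability tending to $1$ the list-size property holds uniformly.

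Combining both events, with positive probability a realization satisfies list size for all $(\vecx, P_{UX})$ and has fewer than $\hat K/4$ bad-pair codewords. Expurgating those bad codewords leaves $\geq K$ codewords pairwise $2\rad$-separated, and retains \eqref{eq:code2} since $N$ is monotone under deletion. Relabeling yields the encoder $f$. The main obstacle is the list-size step: a bare Markov bound gives at best a single-exponential failure probability, which is insufficient against the $|\cX|^n$-many $\vecx$'s in the union bound, so the sharper Chernoff argument — with its splitting into the $R\leq I(U;X)$ versus $R>I(U;X)$ regimes that produces the $|R-I(U;X)|^+$ exponent — is essential.
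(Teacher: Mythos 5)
Your proposal is correct and follows the same high-level strategy as the paper (random coding with expurgation, with the crucial recognition that the list-size property needs a doubly-exponential tail to survive the union bound over the exponentially many choices of $\vecx$), but differs in the concentration machinery. The paper invokes the Csisz\'ar--Narayan martingale-type concentration inequality (restated as Lemma~\ref{lemma:A1}) for both properties: it yields a doubly-exponential tail for \eqref{eq:code2}, and also controls the fraction of codewords violating the minimum distance so that only an exponentially small fraction needs to be expurgated, costing a vanishing amount of rate. You instead exploit that the $\hat K$ codewords are i.i.d.\ and apply a direct binomial Chernoff bound for \eqref{eq:code2} (correctly splitting into the $R>I$ and $R\le I$ regimes, which is exactly where the $|R-I|^{+}$ exponent comes from), and for \eqref{eq:code1} you use a cruder but sufficient first-moment / Markov argument, absorbing the larger constant-fraction expurgation by initially oversampling $\hat K = 2K$ codewords. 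The tradeoff is that Lemma~\ref{lemma:A1} is more general — it does not require independence, only a bound on the conditional expectation given the past — whereas your Chernoff shortcut is tied to the i.i.d.\ draw; conversely your version is more elementary and self-contained. One small bookkeeping point you should make explicit: after expurgating down to $K$ codewords, \eqref{eq:code2} was established at the slightly higher rate $\hat R = R + \tfrac{1}{n}$, so the surviving bound reads $\exp\bigl(n\bigl(|\hat R - I|^{+} + \epsilon'\bigr)\bigr)$; this is at most $\exp\bigl(n\bigl(|R-I|^{+}+\epsilon\bigr)\bigr)$ for $n$ large if you run the random-coding step with a marginally smaller $\epsilon' < \epsilon$, which is the standard fix and does not affect the argument.
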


\begin{proof}[Proof of achievability]
From Remark~\ref{rem:commonrand}, we note that consensus capacity is positive only if $\capptop(\ch_{V|X})>0$, i.e. $|\cU|\geq 2$. So, we consider a channel $W_{YZ|X}$ with  $|\cU|\geq 2$. 
Consider any rate $R_0>0$ such that 
\begin{align*}R_0<\max_{P_{U}}\min_{\stackrel{P_{X|U}:\,P_{X|U}(x|u)>0}{ \text{ only if }W_{V|X}(v|x)= W_{V|X}(v|u) \forall v}}\min\inp{I(U;Y), I(U;Z)}.
\end{align*}For sufficiently small $\zeta>0$,  there exist $\alpha>0$, $R$ and $n_1(\alpha)$ such that for every $n\geq n_1(\alpha)$, there exists $P_U\in \cP^n(\cU)$ satisfying $\min_{u\in \cU}P_{U}(u)\geq \alpha$ and 
\begin{align}R_0\leq R<\min_{\stackrel{P_{X|U}:\,P_{X|U}(x|u)>0}{ \text{ only if }W_{V|X}(v|x)= W_{V|X}(v|u) \forall v}}\min\inp{I(U;Y), I(U;Z)}-{\zeta},\label{eq:R_achievable}
\end{align} where the mutual information is evaluated under $P_{UXYZ}(u, x, y, z) = P_U(u)P_{X|U}(x|u)W_{YZ|X}(y,z|x)$.
{For such $\alpha>0, R$ and $n_1(\cdot)$, for sufficiently small $\epsilon, \delta>0$, for  all $n\geq n_1(\alpha)$ and $P_U\in \cP^n(\cU)$} satisfying \eqref{eq:R_achievable}, 
 $\min_{P_{UU'}\in \cP}I(U;U')-\epsilon\geq R\geq \epsilon$, where $\cP= \inb{P_{UU'}\in \cP^n\inp{\cU\times\cU}: P_U'=P_{U}, \, \Pr\inp{U\neq U'}< 2\rad}$.\footnote{ Note that for $\delta<1/4$, by Fano's inequality $I(U;U')\geq H(U)-H_2(2\delta)-2\delta\log{|\cU|}$. Thus, there is a sufficiently small choice of $\delta>0$ so that \eqref{eq:R_achievable} holds if $\epsilon < \zeta/2$.} We invoke Lemma~\ref{lemma:codebook} with $\alpha$ as chosen, sufficiently small $\epsilon, \delta>0$, $n\geq \max\inb{n_1(\alpha),n_0(\epsilon)}$ (where $n_0(\epsilon)$ is given by Lemma~\ref{lemma:codebook}) and $P=P_U\in \cP^n(\cU)$ satisfying \eqref{eq:R_achievable}, to get an encoder $f:[1:2^{nR}]\rightarrow \cU^n$ for which \eqref{eq:code1} and \eqref{eq:code2} hold. Let $\vecu(m) = f(m), \, m\in [1:2^{nR}]$ denote the codewords. Recall that $\phi_1$ and $\phi_2$ map the channel {outputs $y$ and $z$ to  their common channel output (i.e., connected component index of the characteristic graph). Refer to Definition~\ref{def:common-channel}.}
For $\vecy\in \cY^n$, we use $\phi_1(\vecy)$ to denote $(\phi_1(y_i))_{i\in[1:n]}$.

For $\vecx\in\cX^n$, $\vecu\in \cU^n$, $\vecv\in \cV^n$ and $\vecy\in\cY^n$, we write 
\begin{align}
&\vecx \vartriangleright \vecv,\qquad \qquad\text{ if }  D(P_{XV}||P_XW_{V|X})\leq 3\epsilon, \text{ for }P_{XV}\text{ s.t. }(\vecx, \vecv)\in \cT^n_{XV}, \text{ and }\label{eq:vartriangle}\\
&(\vecu, \vecx) \blacktriangleright \vecy, \qquad\text{ if for } P_{UXV}\text{ s.t. }(\vecu, \vecx, \vecy)\in \cT^n_{UXY},\,D(P_{UXY}||P_{UX}W_{Y|X})\leq 3\epsilon.\label{eq:blacktriangle_sub}
\end{align}

For an encoder $f$ of rate $R$, the decoder outputs $\decB(\vecy)=m$ if it is the unique $m\in[1:2^{nR}]$ such that
\begin{enumerate}
\item $f(m)\vartriangleright \phi_1(\vecy)$,
\item there is an $\bar{\vecx}\in \cX^n$ such that $d(f(m),\bar{\vecx})< \rad$ and $(f(m), \bar{\vecx})\blacktriangleright \vecy$.
\end{enumerate}
$\decB(\vecy)=\bot$ if no such unique $m$ exists.  $\decC$ is similarly defined (with $\phi_2, W_{Z|X}$ in lieu of $\phi_1, W_{Y|X}$, respectively.).
The first decoding condition requires the codeword $f(m)$ to be consistent with the common channel's output $\phi_1(\vecy)$ in the sense that $f(m)\vartriangleright \phi_1(\vecy)$; we denote this by  $m \lozenge \vecy$. The second condition, denoted by $m \blacklozenge \vecy$, requires an ``explaining'' vector $\bar{\vecx}$ which is $\rad$-close to the codeword $f(m)$ and is consistent with the output $\vecy$ in the sense that $(f(m), \bar{\vecx})\blacktriangleright \vecy$.
 For $\vecx\in\cX^n$, define the event
\[\cB_{\vecx}=(\exists m\in[1:2^{nR}]: d(f(m),\vecx)\geq \rad, m \lozenge \vecY, m \blacklozenge \vecY).\]
\begin{claim}\label{claim:achievability}
{For sufficiently small $\delta>0$, there exists a sufficiently small $\epsilon>0$ such that for large enough $n$}, for every $\vecx\in \cX^n$, 
\begin{align}
&\Pr(\vecx \vartriangleright \phi_1(\vecY) | \vecx) \geq 1-2^{-n \epsilon}, \label{eq:claim1}\\
&\Pr(\cB_{\vecx} | \vecx) \leq 2^{-n \epsilon}, \text{ and } \label{eq:claim3}\\
\intertext{if there exists $m\in [1:2^{nR}]$  such that $d(f(m),\vecx)< \rad$,}
&\Pr(m \blacklozenge \vecY | \vecx) \geq 1-2^{-n \epsilon}. \label{eq:claim2}
\end{align}
\end{claim}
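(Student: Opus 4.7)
The plan is to generalize the proof of Claim~\ref{cl:shortachievability} from Section~III-B, substituting the codeword alphabet $\cU$ for the BEC's input alphabet and the common channel $W_{V|X}$ for the erasure indicator variable. Statements~\eqref{eq:claim1} and~\eqref{eq:claim2} are routine typicality facts. For~\eqref{eq:claim1}, $\vecx\vartriangleright\phi_1(\vecY)$ asks that the empirical joint type of $(\vecx,\phi_1(\vecY))$ lie in a KL-ball of radius $3\epsilon$ around $P_X W_{V|X}$; since $\vecY\sim W^n_{Y|X}(\cdot|\vecx)$ and $\phi_1$ is deterministic, a Sanov-style union bound using~\eqref{eq:type_property5} yields probability at least $1-2^{-n\epsilon}$ for sufficiently large $n$. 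For~\eqref{eq:claim2}, when $d(f(m),\vecx)<\rad$ I take the explaining vector to be $\bar\vecx=\vecx$, reducing $m\blacklozenge\vecY$ to $(\vecu(m),\vecx)\blacktriangleright\vecY$, i.e., $D(P_{UXY}\|P_{UX}W_{Y|X})\leq 3\epsilon$, which again holds with probability $1-2^{-n\epsilon}$ by Sanov.

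For~\eqref{eq:claim3}, the main step, let $\cD$ denote the set of joint types $P_{XU'\bar X Y}$ compatible (with $V=\phi_1(Y)$) with (a)~$D(P_{XV}\|P_X W_{V|X})\leq 3\epsilon$, (b)~$d(P_{XU'})\geq\rad$, and (c)~$d(P_{U'\bar X})<\rad$ together with $D(P_{U'V}\|P_{U'}W_{V|X})\leq 3\epsilon$ and $D(P_{U'\bar X Y}\|P_{U'\bar X}W_{Y|X})\leq 3\epsilon$. Let $\cD_{\text{typical}}$ additionally require that the marginal $P_V$ and the conditionals $P_{V|X=x}$ at letters $x$ with $P_X(x)\geq\rad/4$ lie within $O(\sqrt\epsilon)$ of the true channel statistics. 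A union bound over $\cD$ decomposes $\Pr(\cB_\vecx\mid\vecx)$ into an atypical contribution, bounded by $\mathrm{poly}(n)\cdot 2^{-2n\epsilon}$ via Sanov and Pinsker as in the BEC proof, and a typical contribution, where~\eqref{eq:code2} combined with the standard type-size bounds $|\cT^n_{Y|U'X\bar X}|\leq 2^{nH(Y|U'X\bar X)}$ and $W^n_{Y|X}(\vecy\mid\vecx)\leq 2^{-nH(Y|X)+n\epsilon}$ yields the per-type estimate
\[\zeta_P \,=\, 2^{n(|R-I(U';X)|^+ \,-\, I(Y;U'\bar X\mid X) \,+\, 2\epsilon)}.\]
It remains to verify $\zeta_P\le 2^{-2n\epsilon}$ for every typical $P\in\cD$.

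In Case~(i), $R\leq I(U';X)$, I lower-bound $I(Y;U'\bar X|X)\geq I(V;U'|X)$ (since $V$ is a function of $Y$) and argue $I(V;U'|X)\geq 4\epsilon$. Were $I(V;U'|X)=0$, typicality would yield $\sum_x P_{X|U'}(x|u')W_{V|X}(\cdot|x)\approx W_{V|X}(\cdot|u')$, and Lemma~\ref{lem:polytope} (together with Claim~\ref{claim:defn_gamma}, which supplies the quantitative gap $\gamma$) would then force $P_{X|U'}(\cdot|u')$ to concentrate on $\cX_{u'}:=\{x:W_{V|X}(\cdot|x)=W_{V|X}(\cdot|u')\}$, on which $\tP_{U|X}(u'|x)=1$; this would make $d(P_{XU'})$ vanish, contradicting $d(P_{XU'})\geq\rad$. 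Relaxing to $I(V;U'|X)\leq 4\epsilon$ via Pinsker and picking $\epsilon>0$ small closes the case. In Case~(ii), $R>I(U';X)$, I invoke the chain-rule identity $I(U';X)+I(Y;U'\bar X|X)\geq I(U';XY)\geq I(U';Y)$ to reduce $\zeta_P\le 2^{-2n\epsilon}$ to $R\le I(U';Y)-4\epsilon$. The constraint $d(P_{U'\bar X})<\rad$ makes $P_{\bar X|U'}$ approximately supported on $\cX_{U'}$, so the induced joint $P_{U'}P_{\bar X|U'}W_{Y|X}$ is close to an admissible distribution in the minimization in Theorem~\ref{thm:Capacity}; since $R$ was chosen with slack $\gamma>0$ below capacity, taking $\rad,\epsilon$ small yields the required inequality.

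The principal obstacle is Case~(i): the BEC argument used a binary erasure indicator and a uniform input, whereas the general setting requires exploiting the polytope structure of $\sS$ through Lemma~\ref{lem:polytope} and Claim~\ref{claim:defn_gamma} to extract a quantitative gap uniform over the permissible typical types, and then translating this gap through Pinsker's inequality into the strict lower bound $I(V;U'|X)\ge 4\epsilon$.
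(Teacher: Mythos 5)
Your proposal is correct and follows essentially the same route as the paper: the same set $\cD$ of bad joint types, the same per-type exponent bound via~\eqref{eq:code2} and the type-size/probability estimates, and the same two-case analysis with Case~(i) resting on Lemma~\ref{lem:polytope} and Claim~\ref{claim:defn_gamma} to extract a quantitative gap on $I(V;U'|X)$ and Case~(ii) resting on the rate slack below the minimum in Theorem~\ref{thm:Capacity}. Two small cosmetic remarks: the exponent should read $I(Y;U'|X)$ rather than $I(Y;U'\bar X|X)$ (the union bound counts codewords $f(m)\in\cT^n_{U'|X}(\vecx)$ and then $\vecy\in\cT^n_{Y|U'X}$, not triples with $\bar X$ — though since both of your case analyses only use bounds that already lower-bound $I(Y;U'|X)$, this slip does not damage the argument), and the separate $\cD_{\text{typical}}$ set you import from the BEC special case is redundant here because conditions (D2)/(D3) already encode the needed typicality within $\cD$ — the paper instead just peels off the event $\{\vecx\vartriangleright\phi_1(\vecY)\}^c$ upfront.
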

Before proving the claim, let us see that it implies the theorem. To this end, consider three collectively exhaustive possibilities for the transmitted vector $\vecx$:
\begin{enumerate}[label=(\roman*)]
\item $\vecx=f(m)$ for some $m$,
\item $d(f(m),\vecx)\geq\rad$ for all $m\in[1:2^{nR}]$, and 
\item  there is an $m$ such that $d(f(m),\vecx)<\rad$, but $\vecx\neq f(m)$.
\end{enumerate} We show that the probability of error is vanishingly small in all these cases.\\	
{\em Case (i):} $\vecx=f(m)$ for some $m$. For all $m'\neq m$, the encoder  has $d(f(m),f(m'))\geq2\rad$ by \eqref{eq:code1} of Lemma~\ref{lemma:codebook}. Hence,
\begin{align*}
&\Pr(\decB(\vecY)\neq m|f(m)) = \Pr(\{m\lozenge \vecY\}^c\cup\{m \blacklozenge \vecY\}^c\cup\{\exists m'\neq m \text{ s.t } m'\lozenge \vecY,m'\blacklozenge \vecY\}|f(m))\\
& \leq \Pr(\{f(m)\vartriangleright \phi_1(\vecY)\}^c|f(m)) + \Pr(\{m \blacklozenge \vecY\}^c|f(m))+\Pr(\cB_{f(m)}|f(m))\\
& \leq 2^{-n\epsilon}+ 2^{-n\epsilon}+ 2^{-n\epsilon}\leq 2^{-n\epsilon+\log{3}},
\end{align*}
where the upper bounds on the probabilities follow from \eqref{eq:claim1}-\eqref{eq:claim2}. Similarly, $\Pr(\decC(\vecZ)\neq m|f(m)) \leq 2^{-n\epsilon+\log{3}}$. By a union bound, $$\Pr(\decB(\vecY)=\decC(\vecZ) = m|f(m))\geq1-2^{-n\epsilon+\log{6}}.$$\\
{\em Case (ii):} $d(f(m),\vecx)\geq\rad$ for all $m\in[1:2^{nR}]$. In this case, $\Pr(\decB(\vecY)\neq\bot|\vecx)\leq\Pr(\cB_{\vecx}|\vecx)\leq 2^{-n\epsilon}$ where the last step is from~\eqref{eq:claim3}. Similarly, $\Pr(\decC(\vecZ)\neq\bot|\vecx)\leq 2^{-n\epsilon}$. Hence, by a union bound $\Pr(\decB(\vecY)=\decC(\vecZ)=\bot|\vecx)\geq 1 - 2^{-n\epsilon+1}$.\\
{\em Case (iii):} There is an $m$ such that $d(f(m),\vecx)<\rad$, but $\vecx\neq f(m)$. Then, for all $m'\neq m$, by \eqref{eq:triangle}, 
\begin{align}
d(f(m'),\vecx) \geq d(f(m'),f(m)) - d(f(m),\vecx) > 2\rad - \rad = \rad.\label{eq:new_display}
\end{align} Hence, 
\begin{align}
\Pr(\decB(\vecY)\notin\{m,\bot\}|\vecx) &=\Pr(\exists m\neq m'\text{ s.t. }m' \lozenge \vecY, m\blacklozenge \vecZ|\vecx)\\
&\stackrel{(a)}{=}\Pr(\exists m\neq m'\text{ s.t. }d(f(m'),\vecx)\geq \delta, m' \lozenge \vecY, m\blacklozenge \vecZ|\vecx)\\
&\stackrel{(b)}{=}\Pr(\cB_{\vecx}|\vecx)\leq 2^{-n\epsilon},
\end{align} where $(a)$ follows from \eqref{eq:new_display} and  $(b)$ from~\eqref{eq:claim3}. By a union bound, $\Pr(\decB(\vecY),\decC(\vecZ)\in\{m,\bot\}|\vecx) \geq 1 - 2^{-n\epsilon+1}$. 
Further, $\Pr(m\blacklozenge \vecY|\vecx)\geq 1-2^{-n\epsilon}$ by \eqref{eq:claim2}. Similarly, $\Pr(m\blacklozenge \vecZ|\vecx)\geq 1-2^{-n\epsilon}$. Hence, $\Pr(m\blacklozenge \vecY, m\blacklozenge \vecZ|\vecx)\geq 1-2^{-n\epsilon+1}$ by a union bound. 
Since $\phi_1(\vecY)=\phi_2(\vecZ)$, either the first condition for message $m$ is met or not met together for both decoders. Hence, 
\begin{align*}
&\Pr(\decB(\vecY)=\decC(\vecZ)|\vecx)\\
&\geq \Pr(m\blacklozenge \vecY, m\blacklozenge \vecZ, \inp{\decB(\vecY),\decC(\vecZ)\in\{m,\bot\}}|\vecx)\\
&\geq 1 - 2^{-n\epsilon+2},
\end{align*} where the last step involves taking a union bound. 

It only remains to prove Claim~\ref{claim:achievability}.
\begin{proof}[Proof of Claim~\ref{claim:achievability}]
We will first show \eqref{eq:claim1}.
\begin{align}
&\Pr(\inb{\vecx \vartriangleright \phi_1(\vecY)}^c | \vecx)\nonumber\\
&=\Pr((\vecx, \phi_1(\vecY))\in \cT^n_{XV}, D(P_{XV}||P_XW_{V|X})> 3\epsilon|\vecx)\nonumber\\
&\leq \sum_{\substack{P_{XV}\in \cP^n(\cX\times\cV):\\D(P_{XV}||P_XW_{V|X})> 3\epsilon}}\Pr((\vecx, \phi_1(\vecY))\in \cT^n_{XV}|\vecx)\nonumber\\
&=\sum_{\substack{P_{XV}\in \cP^n(\cX\times\cV):\\D(P_{XV}||P_XW_{V|X})> 3\epsilon}}\sum_{\vecv\in \cT^n_{V|X}(\vecx)}W^n_{V|X}(\vecv|\vecx)\nonumber\\
&\stackrel{(a)}{\leq}\sum_{\substack{P_{XV}\in \cP^n(\cX\times\cV):\\D(P_{XV}||P_XW_{V|X})> 3\epsilon}}\exp\inp{-nD(P_{XV}||P_XW_{V|X})}\nonumber\\
&\leq |\cP^n(\cX\times\cV)|\exp\inp{-3n\epsilon}\nonumber\\
&\stackrel{(b)}{\leq} \exp\inp{-2n\epsilon}\label{eq:A}\\
&\leq \exp\inp{-n\epsilon},\nonumber
\end{align} where $(a)$ follows from \eqref{eq:type_property5} and $(b)$ holds {for sufficiently large $n$}. This shows \eqref{eq:claim1}. 

Next, we show \eqref{eq:claim2}.
Let $m\in [1:2^{nR}]$ and $\vecx\in \cX^n$ such that $d(f(m),\vecx)< \rad$. Note that, since $d(f(m), \vecx)<\delta$, $\Pr(m \blacklozenge \vecY | \vecx)\geq \Pr(\inp{f(m),\vecx}\blacktriangleright \vecY| \vecx)$ (as $\vecx$ may serve as the explaining vector $\bar{\vecx}$ under the event $(f(m), \bar{\vecx})\blacktriangleright \vecy$.).
We will show that 
\begin{align*}
&\Pr(\inb{\inp{f(m),\vecx}\blacktriangleright \vecY}^c| \vecx) \leq 2^{-n \epsilon}
\end{align*} which will show \eqref{eq:claim2}. 
\begin{align*}
&\Pr( \inb{\inp{f(m),\vecx}\blacktriangleright \vecY}^c| \vecx)\\
&{\leq}  \sum_{\substack{P_{UXY}\in \cP^n\inp{\cU\times\cX\times\cY}:(f(m), \vecx)\in \cT^n_{UX},\\D(P_{UXY}||P_{UX}W_{Y|X})>3\epsilon}}\Pr\inp{( f(m),\vecx, \vecY)\in \cT^n_{UXY}}\\
&= \sum_{\substack{P_{UXY}\in \cP^n\inp{\cU\times\cX\times\cY}:(f(m), \vecx)\in \cT^n_{UX},\\D(P_{UXY}||P_{UX}W_{Y|X})>3\epsilon}}\sum_{\textcolor{white}{w}\vecy\in \cT^n_{Y|UX}( f(m),\vecx)}W_{Y|X}(\vecy|\vecx)\\
&\stackrel{(a)}{\leq} \sum_{\substack{P_{UXY}\in \cP^n\inp{\cU\times\cX\times\cY}:(f(m), \vecx)\in \cT^n_{UX},\\D(P_{UXY}||P_{UX}W_{Y|X})>3\epsilon}}\exp\inp{-n\inp{D(P_{UXY}||P_{UX}W_{Y|X})}}\\
&{\leq} \sum_{\substack{P_{UXY}\in \cP^n\inp{\cU\times\cX\times\cY}:(f(m), \vecx)\in \cT^n_{UX},\\D(P_{UXY}||P_{UX}W_{Y|X})>3\epsilon}}\exp\inp{-3n\epsilon}\\
&{\leq} |\cP^n\inp{\cU\times\cX\times\cY}|\exp\inp{-3n\epsilon}\\
&\stackrel{(b)}{\leq} \exp\inp{-2n\epsilon}\leq \exp\inp{-n\epsilon},
\end{align*}where $(a)$ follows from \eqref{eq:type_property5} and $(b)$ holds for sufficiently large $n$.\\
Finally, we need to show \eqref{eq:claim3}. Recall that \[\cB_{\vecx}=(\exists m\in[1:2^{nR}]: d(f(m),\vecx)\geq \rad, m \lozenge \vecY, m \blacklozenge \vecY).\]
\begin{align}
\Pr(\cB_{\vecx} | \vecx)&\leq \Pr(\inb{\vecx \vartriangleright \phi_1(\vecY)}^c | \vecx) + \Pr(\inb{\vecx \vartriangleright \phi_1(\vecY)}\cap \cB_{\vecx} | \vecx)\nonumber\\
&\stackrel{(a)}{\leq} 2^{-2n \epsilon} + \Pr(\inb{\vecx \vartriangleright \phi_1(\vecY)}\cap \cB_{\vecx} | \vecx),\label{eq:claim3_prob}
\end{align} where $(a)$ follows from \eqref{eq:A}. 
Let $\cD$ be the set of joint types $P_{UXY}\in \cP^n\inp{\cU\times\cX\times\cY}$ satisfying 
\begin{align} 
&d(P_{UX})\geq \rad,\label{eq:D1}\\
&D(P_{X\phi_1(Y)}||P_XW_{V|X})\leq 3\epsilon,\label{eq:D2}\\
&D(P_{U\phi_1(Y)}||P_UW_{V|X})\leq 3\epsilon, \label{eq:D3}
\end{align}
and there exists $P_{\bar{X}|UXY}$  such that $P_{UXY}P_{\bar{X}|UXY}\in \cP^n\inp{\cU\times\cX\times\cY\times\cX}$ and
\begin{align}
&d(P_{U\bar{X}})\leq \rad, \label{eq:D5}\\
&D(P_{U\bar{X}Y}||P_{U\bar{X}}W_{Y|\bar{X}})\leq 3\epsilon,\label{eq:D6}
\end{align} where $W_{Y|\bar{X}}(y|x):=W_{Y|X}(y|x)$ for all $x, y$. 
Under the events $\inb{\vecx \vartriangleright \phi_1(\vecY)}$ and $\cB_{\vecx}$, there is an $m\in [1:2^{nR}]$ and $\bar{\vecx}\in \cX^n$ such that the joint type $P_{UXY\bar{X}}$ defined by $(f(m), \vecx, \vecy, \bar{\vecx})\in \cT^n_{UXY\bar{X}}$ satisfies \eqref{eq:D1}-\eqref{eq:D6}. In particular, $\inb{\vecx \vartriangleright \phi_1(\vecY)}$ implies $\eqref{eq:D2}$, $d(f(m),\vecx)\geq \rad$, $m \lozenge \vecY$ and $m \blacklozenge \vecY$ in $\cB_{\vecx}$ implies \eqref{eq:D1}, \eqref{eq:D3} and \eqref{eq:D5}-\eqref{eq:D6} respectively. Hence, 
\begin{align}\label{eq:A_x_equivalence}
\inb{\vecx \vartriangleright \phi_1(\vecY)}\cap \cB_{\vecx}\subseteq \Big\{\exists P_{UXY}\in \cD,  m\in [1:2^{nR}] \text{ such that }(f(m),\vecx,  \vecY)\in \cT^n_{UXY}\Big\}.
\end{align} Thus, 
\begin{align}
&\Pr(\inb{\vecx \vartriangleright \phi_1(\vecY)}\cap \cB_{\vecx} | \vecx)\nonumber\\
&\stackrel{(a)}{\leq} \sum_{\stackrel{P_{UXY}\in\cD:}{\vecx\in \cT^n_{X}}}\sum_{{\textcolor{white}{w}m\in [1:2^{nR}]}}\Pr\inp{(\vecx, f(m), \vecY)\in \cT^n_{UXY}| \vecx}\nonumber\\
&= \sum_{\stackrel{P_{UXY}\in\cD:}{\vecx\in \cT^n_{X}}}\sum_{\stackrel{m\in [1:2^{nR}],}{f(m)\in \cT^n_{U|X}(\vecx)}}\sum_{\vecy\in \cT^n_{Y|XU}(\vecx, f(m))}W_{Y|X}(\vecy|\vecx)\nonumber\\
&\stackrel{(b)}{\leq}\sum_{\stackrel{P_{UXY}\in\cD:}{\vecx\in \cT^n_{X}}}\sum_{\stackrel{m\in [1:2^{nR}],}{f(m)\in \cT^n_{U|X}(\vecx)}}\exp\inp{-n\inp{I(Y;U|X)-\epsilon}}\nonumber\\
&\stackrel{(c)}{\leq}\sum_{{P_{UXY}\in\cD}}\exp{\left(n\left(\left|R-I(U;X)\right|^{+}+\epsilon\right)\right)}\exp\inp{-n\inp{I(Y;U|X)-\epsilon}}\nonumber\\
&=\sum_{P_{UXY}\in\cD}\exp{\left(n\left(\left|R-I(U;X)\right|^{+}-I(Y;U|X)+2\epsilon\right)\right)},\label{eq:E1}
\end{align}
where $(a)$ is obtained by taking union bound over $P_{UXY}\in \cD, m\in [1:2^{nR}]$ (see \eqref{eq:A_x_equivalence}), $(b)$ follows by the following argument: for each $\vecy \in \cT^n_{Y|XU}(\vecx, \vecu),\, W_{Y|X}(\vecy|\vecx)$ is the same and is hence upper bounded by $1/|\cT^n_{Y|X}(\vecx)|$. By \eqref{eq:type_property3}, $1/|\cT^n_{Y|X}(\vecx)|\leq (n+1)^{|\cX||\cY|}\exp\inp{-nH(Y|X)}\leq \exp\inp{n\inp{-H(Y|X)+\epsilon}}$ {for sufficiently large $n$.} Also, $|\cT^n_{Y|XU}(\vecx, \vecu)|\leq \exp\inp{nH(Y|XU)} $ by \eqref{eq:type_property3}.  The inequality $(c)$ follows from \eqref{eq:code2}.

For each  $P_{UXY}\in\cD$, let   
\begin{align}
\zeta_{UXY}:=\exp{\left(n\left(\left|R-I(U;X)\right|^{+}-I(Y;U|X)+2\epsilon\right)\right)}.\label{eq:E}
\end{align} For each  $P_{UXY}\in\cD$, if we can show that $
\zeta_{UXY}\leq2^{-2n\epsilon}$, then, from \eqref{eq:claim3_prob} and \eqref{eq:E1}, 
\begin{align*}
\Pr(\cB_{\vecx} | \vecx)&\leq 2^{-2n \epsilon} + \Pr(\inb{\vecx \vartriangleright \phi_1(\vecY)}\cap \cB_{\vecx} | \vecx)\\
&\leq 2^{-2n \epsilon}  + |\cD|2^{-2n\epsilon}\\
&\leq 2^{-n\epsilon}
\end{align*}for sufficiently large $n$. Thus, we would have shown \eqref{eq:claim3}.
To this end, fix a distribution $P_{UXY}\in\cD$. We consider two possibilities.\\
{\em Case (i):} $R\leq I(U;X)$.\\
In this case,
\begin{align*}
\zeta_{UXY}=\exp{\left(n\left(-I(Y;U|X)+2\epsilon\right)\right)}.
\end{align*}
Note that $I(Y;U|X) \geq I(\phi_1(Y);U|X)$. Let $V:=\phi_1(Y)$. We will show that $I(V;U|X)>4\epsilon$. This would imply that $\zeta_{UXY}\leq \exp\inp{-2n\epsilon}$. For the sake of contradiction, assume $I(V;U|X)\leq 4\epsilon$. Using this and \eqref{eq:D2}, 
\begin{align*}
7\epsilon &\geq I(V;U|X) + D(P_{XV}||P_XW_{V|X})\\
& = \sum_{u,x, v}P_{UXV}( u,x, v)\inb{\log\inp{\frac{P_{UXV}( u,x, v)}{P_{XV}(x,v)P_{U|X}(u|x)}}+\log\inp{\frac{P_{XV}(x, v)}{P_X(x)W_{V|X}(v|x)}}}\\
& = D(P_{UXV}||P_{UX}W_{V|X}).
\end{align*}
By Pinsker's inequality~\cite[page 44]{CsiszarK2011}, this implies that
\begin{align}
\sqrt{14\epsilon\ln{2}}&\geq \sum_{u, x, v}\left|P_{UXV}(u, x, v)-P_{UX}(u,x)W_{V|X}(v|x)\right|\nonumber\\
&\stackrel{(a)}{=}\sum_{u, x, v}\left|P_{UXV}(u, x, v)-P_{UX}(u,x)\sum_{\tilde{u}\in \cU}{P}_{\tilde{U}|X}(\tilde{u}|x)W_{V|X}(v|\tilde{u})\right|\nonumber\\
&\geq\sum_{u, v}\left|P_{UV}(u, v)-\sum_{x, \tilde{u}}P_{UX}(u,x){P}_{\tilde{U}|X}(\tilde{u}|x)W_{V|X}(v|\tilde{u})\right|,\label{eq:KL1}
\end{align}where $(a)$ follows from \eqref{eq:Ptilde} with ${P}_{\tilde{U}|X}(\tilde{u}|x) := \tP_{{U}|X}(\tilde{u}|x)$ for all $\tilde{u}, x$. Define 
\begin{align}\label{eq:B}
P_{U\tilde{U}}(u, \tilde{u}) := \sum_{x}P_{UX}(u,x){P}_{\tilde{U}|X}(\tilde{u}|x)
\end{align} for all $u, \tilde{u}\in \cU$. Then, \eqref{eq:KL1} gives 
\begin{align}
\sqrt{14\epsilon\ln{2}}\geq\sum_{u, v}\left|P_{UV}(u, v)-\sum_{\tilde{u}}P_{U\tilde{U}}(u,\tilde{u})W_{V|X}(v|\tilde{u})\right|\label{eq:KL1'}.
\end{align}
Next, applying Pinsker's inequality to \eqref{eq:D3} gives
\begin{align}
\sum_{u, v}|P_{UV}(u, v)-P_U(u)W_{V|X}(v|{u})|\leq \sqrt{6\epsilon\ln{2}}.\label{eq:KL2}
\end{align} 
From \eqref{eq:KL1'} and \eqref{eq:KL2},
\begin{align*}
\sqrt{14\epsilon\ln{2}}+ \sqrt{6\epsilon\ln{2}}&\geq\sum_{u, v}\left|\sum_{\tilde{u}}P_{U\tilde{U}}(u,\tilde{u})W_{V|X}(v|\tilde{u})- P_U(u)W_{V|X}(v|{u})\right|\\
& = \sum_{u, v}P_{U}(u)\left|\sum_{\tilde{u}}P_{\tilde{U}|U}(\tilde{u}|u)W_{V|X}(v|\tilde{u})- W_{V|X}(v|{u})\right|\\
& \stackrel{(a)}{\geq} \alpha\sum_{u, v}\left|\sum_{\tilde{u}}P_{\tilde{U}|U}(\tilde{u}|u)W_{V|X}(v|\tilde{u})- W_{V|X}(v|{u})\right|,
\end{align*} where $(a)$ follows by recalling that $P_U(u)\geq \alpha, \, u\in \cU$. This implies that for every $u'\in \cU$,
\begin{align}
\sum_{v\in\cV}\left|\sum_{\tilde{u}\in \cU}P_{\tilde{U}|U}(\tilde{u}|u')W_{V|X}(v|\tilde{u})- W_{V|X}(v|{u'})\right|\leq \frac{1}{\alpha}\inp{\sqrt{2\epsilon\ln{2}}(\sqrt{7}+\sqrt{3})}.\label{eq:one}
\end{align}

Also, from \eqref{eq:B} and Definition~\ref{defn:dis} (equation \eqref{eq:dis_blue}), $d(P_{UX}) = d(P_{U\tilde{U}})$. Thus, using \eqref{eq:D1},
\begin{align*}
\pr(U\neq \tilde{U}) = \sum_{u, \tilde{u}:u\neq \tilde{u}}P_{U\tilde{U}}(u, \tilde{u})\geq \rad.
\end{align*} This implies that
\begin{align*}
1-\rad\geq \sum_{u}P_{U\tilde{U}}(u, u)&= \sum_uP_{U}(u)P_{\tilde{U}|U}(u|u)\\
&\geq \min_{u'\in \cU}P_{\tilde{U}|U}(u'|u').
\end{align*}
Thus, there exists $u'\in \cU$ for which 
\begin{align}
P_{\tilde{U}|U}(u'|u')\leq 1-\rad.\label{eq:UneqU}
\end{align}
From Claim~\ref{claim:defn_gamma} and \eqref{eq:UneqU}, there exists $u'\in \cU$ for which 
\begin{align}
\sum_{v\in \cV}\big|\sum_{u\in \cU}P_{U|U'}(u|u')W_{V|X}(v|u) - W_{V|X}(v|u')\big|\geq \rad\gamma.
\end{align} 
This contradicts \eqref{eq:one} for $\epsilon>0$ small enough\footnote{{Such a choice of $\epsilon$ is possible because so far we only required that $\epsilon>0, \delta>0$ be both sufficiently small independently.}} such that (recall that $\gamma>0$ in Claim~\ref{claim:defn_gamma})
\begin{align}
\rad>\frac{1}{\gamma\alpha}\inp{\sqrt{2\epsilon\ln{2}}(\sqrt{7}+\sqrt{3})}.\label{eq:bound 1}
\end{align}   Thus, $I(Y;U|X)\geq 4\epsilon$. This implies that for every  $P_{UXY}\in\cD$ with $I(U;X)\geq R$, 
\begin{align}
\zeta_{UXY}\leq \exp{\left(-2n\epsilon\right)}.\label{eq:E_bound1}
\end{align}
{\em Case (ii):}
We consider distributions $P_{UXY}\in\cD$ with $I(U;X)< R$. In this case, \eqref{eq:E} is
\begin{align}
\zeta_{UXY}&=\exp{\left(n\left(R-I(U;X)-I(Y;U|X)+2\epsilon\right)\right)}\nonumber\\
&=\exp{\left(n\left(R-I(U;XY)+2\epsilon\right)\right)}\\
&=\exp{\left(n\left(R-I(U;Y)+2\epsilon-I(U;X|Y)\right)\right)}\label{eq:EE}
\end{align}
We will show that $I(U;Y)-R>4\epsilon$ for sufficiently small $\rad>0, \epsilon>0$. This would imply that $\zeta_{UXY}\leq \exp{\inp{-2n\epsilon}}$. 
Since $P_{UXY}\in\cD$, there exists a conditional distribution $P_{\bar{X}|XUY}$ satisfying \eqref{eq:D5} and \eqref{eq:D6} (see definition of $\cD$, \eqref{eq:D1}-\eqref{eq:D6}). For this $P_{\bar{X}|XUY}$, consider $P_{XUY\bar{X}} = P_{XUY}P_{\bar{X}|XUY}$. Using \eqref{eq:D6} and Pinsker's inequality, 
\begin{align*}
\sqrt{6\epsilon\ln{2}}&\geq \sum_{u, y}\Big|P_{UY}(u, y)- \sum_{\bar{x}}P_{U}(u)P_{\bar{X}|U}(\bar{x}|u)W_{Y|X}(y|\bar{x})\Big|\\
&\stackrel{(a)}{=}\sum_{u, y}\Big|P_{UY}(u, y)- \sum_{\bar{x},v}P_{U}(u)P_{\bar{X}|U}(\bar{x}|u)W_{V|X}(v|\bar{x})W_{Y|XV}(y|\bar{x},v)\Big|\\
& \stackrel{(b)}{=} \sum_{u, y}\Big|P_{UY}(u, y)- \sum_{\bar{x},v}P_{U}(u)P_{\bar{X}|U}(\bar{x}|u)\inp{\sum_{\tilde{u}\in \cU}\tP_{U|X}(\tilde{u}|\bar{x})W_{V|X}(v|\tilde{u})}W_{Y|XV}(y|\bar{x},v)\Big|\\
& = \sum_{u, y}\Big|P_{UY}(u, y)- \sum_{\bar{x},v}\sum_{\tilde{u}}P_{U}(u)P_{\bar{X}|U}(\bar{x}|u)\tP_{U|X}(\tilde{u}|\bar{x})W_{V|X}(v|\tilde{u})W_{Y|XV}(y|\bar{x},v)\Big|\\
& = \sum_{u, y}\Big|P_{UY}(u, y)- \sum_{\bar{x},v}P_{U}(u)P_{\bar{X}|U}(\bar{x}|u)\tP_{U|X}({u}|\bar{x})W_{V|X}(v|{u})W_{Y|XV}(y|\bar{x},v)\\
&\qquad-\sum_{\bar{x},v}\sum_{\tilde{u}\neq u}P_{U}(u)P_{\bar{X}|U}(\bar{x}|u)\tP_{U|X}(\tilde{u}|\bar{x})W_{V|X}(v|\tilde{u})W_{Y|XV}(y|\bar{x},v)\Big|\\
& \geq \sum_{u, y}\Big|P_{UY}(u, y)- \sum_{\bar{x},v}P_{U}(u)P_{\bar{X}|U}(\bar{x}|u)\tP_{U|X}({u}|\bar{x})W_{V|X}(v|{u})W_{Y|XV}(y|\bar{x},v)\Big|\\
&\qquad-\sum_{u, y}\sum_{\bar{x},v}\sum_{\tilde{u}\neq u}P_{U}(u)P_{\bar{X}|U}(\bar{x}|u)\tP_{U|X}(\tilde{u}|\bar{x})W_{V|X}(v|\tilde{u})W_{Y|XV}(y|\bar{x},v)\\
& \stackrel{(c)}{=} \sum_{u, y}\Big|P_{UY}(u, y)- \sum_{\bar{x},v}P_{U}(u)P_{\bar{X}|U}(\bar{x}|u)\tP_{U|X}({u}|\bar{x})W_{V|X}(v|u)W_{Y|XV}(y|\bar{x},v)\Big|-d(P_{U\bar{X}})\\
& \stackrel{(d)}{\geq} \sum_{u, y}\Big|P_{UY}(u, y)- \sum_{\bar{x},v}P_{U}(u)P_{\bar{X}|U}(\bar{x}|u)\tP_{U|X}({u}|\bar{x})W_{V|X}(v|u)W_{Y|XV}(y|\bar{x},v)\Big|-\rad
\end{align*} where $(a)$ follows from \eqref{eq:commonch2}, $(b)$ follows from \eqref{eq:Ptilde}, $(c)$ follows from Definition~\ref{defn:dis},\newline
 i.e., $d(P_{U\bar{X}}) = \sum_{u, \bar{x}, \tilde{u}\neq u}P_{U}(u)P_{\bar{X}|U}(\bar{x}|u)\tP_{U|X}(\tilde{u}|\bar{x})$ and $(d)$ follows from \eqref{eq:D5}.
Thus, 
\begin{align}
\sum_{u, y}\Big|P_{UY}(u, y)- \sum_{\bar{x},v}P_{U}(u)P_{\bar{X}|U}(\bar{x}|u)\tP_{U|X}({u}|\bar{x})W_{V|X}(v|u)W_{Y|XV}(y|\bar{x},v)\Big|\leq \sqrt{6\epsilon\ln{2}}+\rad.\label{eq:something}
\end{align}
For $u\in \cU$, recall that (Definition~\ref{def:effective-input}) $\cX_u$ is defined as 
\begin{align}
\cX_u =\{x\in \cX:W_{V|X}(v|x)=W_{V|X}(v|u),\, \forall\,v\in \cV\}.\label{eq:defn_x_u}
\end{align}
\noindent From \eqref{eq:cX_u} of Lemma~\ref{lem:polytope}, we have
for $u\in \cU$ 
\begin{align*}
\tP_{U|X}(u|x) = 1 \text{ if and only if } x\in \cX_u .
\end{align*}
Thus, 
\begin{align}
\eta:=\min_{u\in\cU}\min_{x\notin\cX_u}\inp{1-\tP_{U|X}(u|x)}>0.\label{eq:defn_eta}
\end{align}
This implies that 
\begin{align}
d(P_{U\bar{X}})&= \sum_{u}\sum_{ \bar{x}\in \cX}\sum_{u'\neq u}P_{U\bar{X}}(u, \bar{x})\tP_{U|X}(u'|\bar{x})\nonumber\\
&\stackrel{(a)}{=} \sum_{u}\sum_{ \bar{x}\notin \cX_u}\sum_{u'\neq u}P_{U\bar{X}}(u, \bar{x})\tP_{U|X}(u'|\bar{x})\nonumber\\
&= \sum_{u}\sum_{ \bar{x}\notin \cX_u}P_{U\bar{X}}(u,\bar{x})\sum_{u'\neq u}\tP_{U|X}(u'|\bar{x})\nonumber\\
&= \sum_{u}\sum_{ \bar{x}\notin \cX_u}P_{U\bar{X}}(u,\bar{x})\inp{1-\tP_{U|X}(u|\bar{x})}\nonumber\\
&\stackrel{(b)}{\geq} \sum_{u}\sum_{ \bar{x}\notin \cX_u}P_{U\bar{X}}(u,\bar{x})\eta, \label{eq:achievability_dist}
\end{align} where $(a)$ follows by noting that $\tP_{U|X}(u'|\bar{x}) = 0$ for $\bar{x}\in \cX_u$ and $u'\neq u$ (see \eqref{eq:cX_u}) and $(b)$ follows from \eqref{eq:defn_eta}. 
Let 
\begin{align}
t:= \sum_{u}\sum_{ \bar{x}\notin \cX_u}P_{U\bar{X}}(u,\bar{x}).\label{eq:C}
\end{align} 
Now, by \eqref{eq:achievability_dist},
\begin{align*}
t&\leq d(P_{U\bar{X}})/\eta\\
&\leq \rad/\eta  \qquad\text{ by }\eqref{eq:D5}.
\end{align*} 
Hence, 
\begin{align}
\sum_{u}\sum_{ x\in \cX_u}P_{U\bar{X}}(u, x) =1- \sum_{u}\sum_{ x\notin \cX_u}P_{U\bar{X}}(u, x)=1-t\geq 1-\rad/\eta. \label{eq:defn_t}
\end{align}
Hence, $1-t>0$ for all sufficiently small $\rad>0$. For such $\delta$, we can define a joint distribution $P_{\hat{U}\hat{X}}$ as
\begin{align}
P_{\hat{U}\hat{X}}(u, x) = P_{\hat{U}}(u)P_{\hat{X}|\hat{U}}(x|u):=\begin{cases}\frac{1}{1-t}P_{U\bar{X}}(u, x),&\text{if }x\in \cX_u,\\0,&\text{otherwise.}\end{cases}\label{eq:DF0}
\end{align} Thus, proceeding from \eqref{eq:something},
\begin{align*}
\sqrt{6\epsilon\ln{2}}+\rad&\geq \sum_{u, y}\Big|P_{UY}(u, y)- \sum_{\bar{x},v}P_{U\bar{X}}(u,\bar{x})\tP_{U|X}({u}|\bar{x})W_{V|X}(v|u)W_{Y|XV}(y|\bar{x},v)\Big|\\
& = \sum_{u, y}\Big|P_{UY}(u, y)- \sum_{\bar{x}\in \cX_u,v}P_{U\bar{X}}(u,\bar{x})\tP_{U|X}({u}|\bar{x})W_{V|X}(v|u)W_{Y|XV}(y|\bar{x},v)\\
&\qquad-\sum_{\bar{x}\notin \cX_u,v}P_{U\bar{X}}(u,\bar{x})\tP_{U|X}({u}|\bar{x})W_{V|X}(v|u)W_{Y|XV}(y|\bar{x},v)\Big|\\
& \geq \sum_{u, y}\Big|P_{UY}(u, y)- \sum_{\bar{x}\in \cX_u,v}P_{U\bar{X}}(u,\bar{x})\tP_{U|X}({u}|\bar{x})W_{V|X}(v|u)W_{Y|XV}(y|\bar{x},v)\Big|\\
&\qquad-\sum_{u, y}\sum_{\bar{x}\notin \cX_u,v}P_{U\bar{X}}(u,\bar{x})\tP_{U|X}({u}|\bar{x})W_{V|X}(v|u)W_{Y|XV}(y|\bar{x},v)\\
& = \sum_{u, y}\Big|P_{UY}(u, y)- \sum_{\bar{x}\in \cX_u,v}P_{U\bar{X}}(u,\bar{x})\tP_{U|X}({u}|\bar{x})W_{V|X}(v|u)W_{Y|XV}(y|\bar{x},v)\Big|\\
&\qquad-\sum_{u}\sum_{\bar{x}\notin \cX_u}P_{U\bar{X}}(u,\bar{x})\tP_{U|X}({u}|\bar{x})\\
& \geq \sum_{u, y}\Big|P_{UY}(u, y)- \sum_{\bar{x}\in \cX_u,v}P_{U\bar{X}}(u,\bar{x})\tP_{U|X}({u}|\bar{x})W_{V|X}(v|u)W_{Y|XV}(y|\bar{x},v)\Big|\\
&\qquad-\sum_{u}\sum_{\bar{x}\notin \cX_u,}P_{U\bar{X}}(u,\bar{x})\\
& \stackrel{(a)}{=} \sum_{u, y}\Big|P_{UY}(u, y)- \sum_{\bar{x}\in \cX_u,v}P_{U\bar{X}}(u,\bar{x})\tP_{U|X}({u}|\bar{x})W_{V|X}(v|u)W_{Y|XV}(y|\bar{x},v)\Big|-t\\
& \stackrel{(b)}{=} \sum_{u, y}\Big|P_{UY}(u, y)- \sum_{\bar{x}\in \cX_u,v}(1-t)P_{\hat{U}\hat{X}}(u, \bar{x})W_{V|X}(v|u)W_{Y|XV}(y|\bar{x},v)\Big|-t\\
& \geq \sum_{u, y}\Big|P_{UY}(u, y)- \sum_{\bar{x}\in \cX_u,v}P_{\hat{U}\hat{X}}(u, \bar{x})W_{V|X}(v|u)W_{Y|XV}(y|\bar{x},v)\Big|\\
&\qquad-\inp{\sum_{u, y}\sum_{\bar{x}\in \cX_u,v}tP_{\hat{U}\hat{X}}(u, \bar{x})W_{V|X}(v|u)W_{Y|XV}(y|\bar{x},v)}-t\\
& = \sum_{u, y}\Big|P_{UY}(u, y)- \sum_{\bar{x}\in \cX_u,v}P_{\hat{U}\hat{X}}(u, \bar{x})W_{V|X}(v|\bar{x})W_{Y|XV}(y|\bar{x},v)\Big|-2t\\
&\stackrel{(c)}{=}\sum_{u, y}\Big|P_{UY}(u, y)- \sum_{\bar{x}}P_{\hat{U}}(u)P_{\hat{X}|\hat{U}}(\bar{x}|u) W_{Y|X}(y|\bar{x})\Big|-2t
\end{align*} where $(a)$ follows from \eqref{eq:C}, $(b)$ uses  \eqref{eq:cX_u} and \eqref{eq:DF0}, and $(c)$ follows from \eqref{eq:commonch2}.
Thus, 
\begin{align}
\sum_{u, y}\Big|P_{UY}(u, y)- \sum_{\bar{x}}P_{\hat{U}}(u)P_{\hat{X}|\hat{U}}(\bar{x}|u) W_{Y|X}(y|\bar{x})\Big|&\leq 2t + \sqrt{6\epsilon\ln{2}}+\rad\nonumber\\
&\leq\rad(2/\eta+1) +\sqrt{6\epsilon\ln{2}}.\nonumber
\end{align}
From \eqref{eq:bound 1}, recall that we choose $\epsilon$ to satisfy 
\begin{align}\sqrt{6\epsilon\ln{2}}<\frac{\sqrt{3}\rad\gamma\alpha}{\sqrt{7}+\sqrt{3}}. \label{eq:epsilon_choice}
\end{align} Hence,
\begin{align}
\sum_{u, y}\Big|P_{UY}(u, y)- \sum_{\bar{x}}P_{\hat{U}}(u)P_{\hat{X}|\hat{U}}(\bar{x}|u) W_{Y|X}(y|\bar{x})\Big|&\leq \rad\inp{\frac{2}{\eta}+1 +\frac{\sqrt{3}\gamma\alpha}{\sqrt{7}+\sqrt{3}}}.\label{eq:close_rate_1}
\end{align}
Further,
\begin{align}
&\sum_{u, y}\Big|\sum_{\bar{x}}P_{{U}}(u)P_{\hat{X}|\hat{U}}(\bar{x}|u) W_{Y|X}(y|\bar{x})- \sum_{\bar{x}}P_{\hat{U}}(u)P_{\hat{X}|\hat{U}}(\bar{x}|u) W_{Y|X}(y|\bar{x})\Big|\nonumber\\
&=\sum_{u, y}\Big|\sum_{\bar{x}}P_{\hat{X}|\hat{U}}(\bar{x}|u) W_{Y|X}(y|\bar{x})\inp{P_{{U}}(u)- P_{\hat{U}}(u)}\Big|\nonumber\\
&=\sum_{u}\sum_y\sum_{\bar{x}}P_{\hat{X}|\hat{U}}(\bar{x}|u) W_{Y|X}(y|\bar{x})\Big|{P_{{U}}(u)- P_{\hat{U}}(u)}\Big|\nonumber\\
&{=}\sum_{u}\Big|{P_{{U}}(u)- P_{\hat{U}}(u)}\Big|\nonumber\\
&\stackrel{(a)}{\leq} \rad\inp{\frac{2}{\eta}+1 +\frac{\sqrt{3}\gamma\alpha}{\sqrt{7}+\sqrt{3}}}.\label{eq:close_rate_2}
\end{align} where $(a)$ follows from \eqref{eq:close_rate_1}. From \eqref{eq:close_rate_1} and \eqref{eq:close_rate_2}, we obtain
\begin{align}\label{eq:close_rate}
\sum_{u, y}\Big|P_{UY}(u, y)- \sum_{\bar{x}}P_{{U}}(u)P_{\hat{X}|\hat{U}}(\bar{x}|u) W_{Y|X}(y|\bar{x})\Big|&\leq 2\rad\inp{\frac{2}{\eta}+1 +\frac{\sqrt{3}\gamma\alpha}{\sqrt{7}+\sqrt{3}}}.
\end{align}

Let $P_{{U}\hat{Y}}(u, y) = \sum_{x}P_{{U}}(u)P_{\hat{X}|\hat{U}}(x|u) W_{Y|X}(y|x), \, u\in \cU, y\in \cY$ where $P_{\hat{X}|\hat{U}}(x|u)>0$ only if $x\in \cX_u$ by definition (see \eqref{eq:DF0}) i.e., $P_{\hat{X}|\hat{U}}(x|u)>0$ only if $W_{V|X}(v|x) = W_{V|X}(v|u)$ for all $v$ (see \eqref{eq:defn_x_u}). Then, by \eqref{eq:R_achievable},
\begin{align}\label{eq:R_R}
I({U};\hat{Y})\geq \min_{\stackrel{P_{X|{U}}:\,P_{X|{U}}(x|u)>0}{ \text{ only if }W_{V|X}(v|x)= W_{V|X}(v|u) \forall v}}\min\inp{I({U};Y'), I({U};Z')}> R,
\end{align} where the mutual information in the term $\inb{\min\inp{I({U};Y'), I({U};Z')}}$ of \eqref{eq:R_R} is evaluated under $P_{UXY'Z'}(u, x, y, z) = P_U(u)P_{X|U}(x|u)W_{YZ|X}(y,z|x)$.   
Since mutual information $I(U;Y)$  is continuous in $P_{UY}$, from \eqref{eq:close_rate} and \eqref{eq:R_R}, {for sufficiently small $\rad>0$,} 
\begin{align}\label{eq:bound_2}
I(U;Y)-R>0.
\end{align} {Further, we choose $\epsilon>0$ small enough such that $I(U;Y)-R\geq 4\epsilon$ and \eqref{eq:bound 1} holds.} From \eqref{eq:EE}, this implies that
 \begin{align*}
 \zeta_{UXY}\leq \exp(-2n\epsilon).
 \end{align*}

\end{proof}
\end{proof}

\section{Discussion}\label{sec:disc}
While we considered the two-receiver broadcast channel, the results readily generalize to more than two receivers. For instance, for the three-receiver broadcast channel, the common channel may be defined analogous to Definition~\ref{def:common-channel} via a characteristic tripartite hypergraph whose hyperedges are the triples of symbols which occur together at the channel outputs of the three receivers with positive probability for some channel input symbol. The connected components of this hypergraph are the output symbols of the common channel. The definition of the effective input alphabet remains unchanged. The capacity expression in~\eqref{eq:capstr} is modified so that the inner minimum is of the three mutual information quantities corresponding to the three receivers instead of two. The proofs of converse and achievability can be verified to generalize with no significant changes required. 

As discussed in Remarks~\ref{rem:weak} and~\ref{rem:sharedrandomness}, the rate of communication with consensus that can be achieved is sensitive to how fast the error probability $\err^{(n)}$ is required to decay with the blocklength $n$. Here, we studied the most natural regime where the error probability decays exponentially, i.e., $-\log(\err^{(n)}) = \Omega(n)$, and found that the capacity remains unchanged as long as $\err^{(n)}$ is required to decay at least inverse linearly, i.e., $\err^{(n)}=o(1/n)$. Understanding the behaviour of capacity in regimes where this is further relaxed (for instance, to simply $\err^{(n)}\rightarrow 0$) would be of interest. The example from \iftoggle{long}{Appendix~\ref{app:weak}}{\cite[Appendix]{long}} shows that in these regimes, the presence of common or correlated randomness among the receivers which is unknown to the sender has an effect on the capacity.

In our model the receivers are passive. It would be interesting to study models where the receivers may also communicate with each other and/or with the sender. For instance, suppose in addition to the broadcast channel there are private noisefree links of unlimited capacity between every pair of users (i.e., between the two receivers, and between the sender and each receiver). As we mentioned in the introduction, byzantine consensus is known to be impossible in this setup in the absence of the broadcast channel~\cite{LamportSP82,Dolev82,FischerLM86}. A question of interest is to characterize the broadcast channels which permit byzantine consensus in this model. For the characterization of distributed sources which permit byzantine consensus in such a setup of private pairwise links, see~\cite{NarayananPSW23}.


\appendices
\section{The Curious Case of $\err^{(n)}$ Approaching 0 Slower Than $o(1/n)$}\label{app:weak}

The converse of Theorem~\ref{thm:Capacity} in Section~\ref{sec:proof} made use of the requirement that $\err^{(n)}=o(1/n)$. We will see that a converse cannot be shown if this is further relaxed to $\err^{(n)}=o(1)$. To this end, we show an example where $\capstr= 0$, but a positive rate is achievable with $\err^{(n)}=o({1}/{n^{\frac{1}{2}-\epsilon}})$, for any $\epsilon>0$. The example is in fact the independent binary erasure broadcast channel (i.e., the two-step binary erasure broadcast channel with $p=1$) of Section~\ref{sec:shortconverse}, but with additional common randomness shared by the decoders which is unknown to the sender. Without the additional common randomness, even two messages cannot be communicated over this channel with $\err^{(n)}\rightarrow0$ (see Remark~\ref{rem:Witsenhausen}). We also know that the presence of such randomness does not affect $\capstr$ which requires $\err^{(n)}=o(1/n)$ (see Remark~\ref{rem:commonrand}). 
However, we will show that the availability of common randomness among the receivers unknown to the sender facilitates communication with consensus at non-zero rates over the independent binary erasure channel with $\err^{(n)}=o({1}/{n^{\frac{1}{2}-\epsilon}})$ for any $\epsilon>0$.

The independent binary erasure broadcast channel with common randomness is $\ch_{(YS)(ZS)|X}=\ch_{Y|X}\ch_{Z|X}\ch_{S|X}$, where the channels $\ch_{Y|X}$ and $\ch_{Z|X}$ are identical binary erasure channels (BEC) with erasure probability $0<q<1$ and $\ch_{S|X}$ is a completely noisy channel (i.e., $\cS=\{0,1\}$ and $\ch_{S|X}(0|x)=\frac{1}{2}$ for all $x$). We denote the unerased output symbols of the independent erasure channels without the $\tilde{\phantom{a}}$ of Section~\ref{sec:shortconverse}. Specifically, $\cY=\cZ=\{0,1,e\}, \cX=\{0,1\}$ and $\ch_{Y|X}(e|x)=\ch_{Z|X}(e|x)=1-\ch_{Y|X}(x|x)=1-\ch_{Z|X}(x|x)=q, x\in\cX$.
Note that the characteristic graph has two connected components: for $v\in\{0,1\}$, $G_v$ is a connected bipartite graph on vertices $(\{0,1,e\}\times\{v\}) \cup (\{0,1,e\}\times\{v\})$.
The common channel output $V$ is $S$. Thus the common channel is the completely noisy channel $\ch_{V|X}(v|x)=\frac{1}{2}$ for all $x,v\in\{0,1\}$ and hence, by Theorem~\ref{thm:Capacity}, $\capstr=0$.

The following theorem shows that consensus is feasible over $\ch_{YZ|X}$ with positive rates (for erasure probability $q<1/4$) if the probability of error is only required to fall as $o({1}/{n^{\frac{1}{2}-\epsilon}})$ for any $\epsilon>0$ as the block length $n\rightarrow \infty$.

\begin{theorem}\label{thm:min-dist}
    Let $R<1-H(2q)$ and $\epsilon>0$. For sufficiently large $n$ there is an $(n,2^{nR})$ consensus code with $\err^{(n)}=o(n^{-\frac{1}{2}+\epsilon})$.
\end{theorem}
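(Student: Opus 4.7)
The plan is to build a deterministic encoder from a Gilbert--Varshamov code and a randomized decoder that leverages $S^n$ as a coordinating resource unavailable to the sender. Fix $\gamma>0$ small enough that $R+3\gamma<1-H(2q+2\gamma)$ and $R<1-\log_2(1+q)-3\gamma$ (both slack conditions hold for $R<1-H(2q)$ and $q<1/4$). By the GV bound, for all large $n$ there is a binary code $\mathcal{C}$ of rate $R$ whose relative minimum Hamming distance is at least $2q+2\gamma$, and the encoder $f$ maps messages bijectively to codewords. For $y\in\{0,1,e\}^n$ let $E_y=\{i:y_i=e\}$ and $L_y=\{c\in\mathcal{C}:c_i=y_i\ \forall i\notin E_y\}$, and similarly for $z$.

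First I would dispense with the honest-sender error. If $f(m)=c_0$ is transmitted then $c_0\in L_y$ deterministically, and by the minimum-distance property any other codeword in $L_y$ must differ from $c_0$ only on positions contained in $E_y$, which forces $|E_y|\geq(2q+2\gamma)n$; Hoeffding gives this probability $\exp(-\Omega(n))$. The same argument applies to $z$, so with an exponentially small error the basic rule ``output the unique element of $L_y$, else $\bot$'' reports $m$ on both sides. For a malicious $x^n$, the triangle inequality together with the min-distance guarantee shows that at most one codeword $c^\ast$ lies within relative distance $q+\gamma$ of $x^n$; while a standard random-coding estimate using $R<1-\log_2(1+q)-3\gamma$ gives $\mathbb{E}[|L_y\setminus\{c^\ast\}|]\leq 2^{n(R-1+\log_2(1+q))}$, so (for a suitably chosen $\mathcal{C}$) the events ``some other codeword is in $L_y$'' and ``some other codeword is in $L_z$'' each have exponentially small probability.

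The crux of the argument is the residual, genuinely byzantine regime in which $x^n$ lies within relative distance $q+\gamma$ of $c^\ast$ but $x^n\neq c^\ast$. Writing $d^\ast=d_H(x^n,c^\ast)$, the events $c^\ast\in L_y$ and $c^\ast\in L_z$ are conditionally independent $\mathrm{Bernoulli}(q^{d^\ast})$ given $X^n=x^n$, and this independence is exactly the mechanism the Witsenhausen-style obstruction of Remark~\ref{rem:Witsenhausen} exploits to rule out consensus without common randomness. To break it, the decoder uses $S^n$ to extract a coordinated random threshold: set $T_n=qn+\sigma_n\Psi(S^n)$, where $\sigma_n=\sqrt{q(1-q)n}$ and $\Psi(S^n)$ is a real-valued statistic constructed from $S^n$ with an approximately standard Gaussian distribution (e.g., a properly normalized sum of the bits of $S^n$). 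Decoder $\decB$ outputs the unique element of $L_y$ only if $|E_y|\leq T_n$, and declares $\bot$ otherwise; $\decC$ applies the same rule to $(z,S^n)$. Because $T_n$ is a function of the common $S^n$, both receivers evaluate their statistics against the \emph{same} random cutoff.

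The main obstacle is showing that this common-threshold rule yields disagreement probability $o(n^{-1/2+\epsilon})$ uniformly in $x^n$. In the honest case the cutoff lies at least $\sigma_n\cdot|\Psi|$ above $qn$ on a $1-O(n^{-1/2})$ event, so the ``typical'' Hoeffding estimate for $|E_y|$ suffices. In the malicious case, when $L_y,L_z\subseteq\{c^\ast\}$ one splits the probability of disagreement by conditioning on whether $c^\ast$ is in $L_y$ and in $L_z$; in each combination the event ``$|E_y|\leq T_n$ but $|E_z|>T_n$'' (or vice versa) is the probability of two independent sums straddling a common independent threshold. Fixing $S^n$ and invoking Berry--Esseen for $|E_y|$ and $|E_z|$ reduces each such probability to $\Phi(u)(1-\Phi(u))\cdot o(1)+O(n^{-1/2})$ in $u=(T_n-qn)/\sigma_n$; averaging over $\Psi(S^n)$, one arranges (by taking the thresholds supported on a bandwidth that grows as $n^{\epsilon}$) that the average is $O(n^{-1/2+\epsilon})$. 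Combining the three regimes with a union bound gives $\err^{(n)}=o(n^{-1/2+\epsilon})$, completing the proof.
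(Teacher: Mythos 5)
Your proposal correctly identifies the obstruction (conditional independence of $\{c^\ast\in L_y\}$ and $\{c^\ast\in L_z\}$) and the remedy (a shared random threshold derived from $S^n$), but the statistic you threshold — the \emph{total} erasure count $|E_y|$ — is the wrong one, and the scheme breaks at both ends. In the honest case $|E_y|\sim\mathrm{Binom}(n,q)$ has mean $qn$ and standard deviation $\sigma_n=\sqrt{q(1-q)n}$, and your cutoff $T_n=qn+\sigma_n\Psi(S^n)$ has the same center and the same $\Theta(\sqrt n)$ spread; averaging over $\Psi$ gives $\Pr(|E_y|\le T_n)\approx\mathbb{E}[\Phi(\Psi)]=\tfrac12$, so $\decB$ outputs $\bot$ roughly half the time even on a legitimate codeword (the claim ``the cutoff lies at least $\sigma_n|\Psi|$ above $qn$'' is true only when $\Psi\ge 0$, i.e.\ on a coin flip). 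In the malicious case, take $x^n$ at distance $d^\ast=1$ from a codeword $c^\ast$: with probability $q(1-q)=\Theta(1)$ the flipped bit is erased in $Y^n$ but not in $Z^n$, so $L_y=\{c^\ast\}$ while $L_z=\emptyset$; Carol then outputs $\bot$ unconditionally while Bob outputs $c^\ast$ iff $|E_y|\le T_n$. But conditioning on $c^\ast\in L_y$ shifts $\mathbb{E}[|E_y|]$ by only $(1-q)d^\ast=O(1)$, which is invisible at scale $\sigma_n$, so Bob accepts with probability bounded away from zero. Your ``straddling'' computation treats only the symmetric subcase $c^\ast\in L_y\cap L_z$ and omits these asymmetric subcases, which dominate the disagreement probability.

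The paper thresholds a different quantity, $d_{\text{H}}(\bar{X}^n_{\mathsf{B}},f(m))$ — the number of \emph{unerased} coordinates on which $Y^n$ disagrees with the candidate codeword. This equals $0$ under honest transmission, so acceptance is automatic whenever $h(S^n)\ge 1$, and under a malicious $x^n$ at distance $d^\ast$ it concentrates around $\mu=(1-q)d^\ast$ within $\pm n^{1/2+\epsilon}$ for both decoders simultaneously. Crucially, $h(S^n)$ is uniform on $[1:2^\ell]$ with $2^\ell=\Theta(n\delta)$, a macroscopic range; the probability it falls inside the $O(n^{1/2+\epsilon})$ window around $\mu$ is therefore $O(n^{-1/2+\epsilon})$, and outside that window both decoders make the same accept/reject decision. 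Both changes are essential — the right concentration statistic, and a $\Theta(n)$-wide uniform threshold rather than a $\Theta(\sqrt n)$-wide Gaussian one — to make the argument close.
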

\begin{proof}
    The following claim follows from the Gilbert-Varshamov bound; we prove this later for completeness.
    \begin{claim}\label{clm:gv}
        For all $R<1-H(2q)$, there exists $\delta>0$ such that, for every $n$ there is an $\enc:[1:2^{nR}]\rightarrow\{0,1\}^n$ such that $d_{\text{H}}(\enc(m),\enc(m'))>n(2q+\delta)$ for all distinct $m,m'\in[1:2^{nR}]$.
    \end{claim}

    We will employ the code in Claim~\ref{clm:gv} with the decoders $\decB$ and $\decC$ described below:
    For $x^n\in\{0,1\}^n$ and $y^n\in\{0,1,e\}^n$, we write $x^n\blacktriangleright y^n$ if $y_i\in\{x_i,e\}$ for all $i\in[1:n]$.
    Let $\ell$ be the integer such that  $\frac{n\delta}{8}<2^\ell\le \frac{n\delta}{4}\le2^{\ell+1}$.
    Define $h$ to be the function which takes $n$-length bit strings as arguments, drops all but the first $\ell$ bits, and returns the integer whose binary representation is given by these $\ell$ bits. 
    Hence, $h$ maps a uniform distribution over $\{0,1\}^n$ to a uniform distribution over $[1:2^\ell]$.

    Let $y^n,z^n\in\{0,1,e\}^n$ and $s^n\in\{0,1\}^n$.
    The decoder output $\decB(y^n,s^n)=m$ if there is a unique $m\in[1:2^{nR}]$ such that
    \[ \exists \bx^n \text{ s.t. } \bx^n\blacktriangleright y^n \text{ and } d_{\text{H}}(\bx^n,\enc(m))\le h(s^n) .\]
    If no such unique $m$ exists, $\decB(y^n,s^n)=\bot$.
    Decoder output $\decC(z^n,s^n)$ is similarly defined.

    Let $x^n$ be the string sent over the channel, and $(Y^n,S^n)$ and $(Z^n,S^n)$ be the random variables corresponding to the strings received by Bob and Carol, respectively.
    Define the event
    \begin{align}
        \cB_{x^n}=\left(\exists m\in[1:2^{nR}] \text{ s.t. } d_{\text{H}}(\enc(m),x^n)>n\left(q+\frac{\delta}{2}\right) \text{ and } \exists \bx^n\blacktriangleright Y^n \text{ s.t. }  d_{\text{H}}(\bx^n,\enc(m))\le h(S^n)\right).
    \end{align}
    We will first prove the theorem assuming the following claim.
    \begin{claim}\label{clm:event-weak-ach}
        When $k=\frac{1}{48}$, $\pr(\cB_{x^n}|x^n) \le e^{-\frac{nk}{q}\delta^2}$.
    \end{claim}
    Consider three possibilities for the input $x^n$: (i) $x^n=\enc(m)$ for some $m$, (ii) $d_{\text{H}}(x^n,\enc(m))>n(q+\frac{\delta}{2})$ for all $m\in[1:2^{nR}]$, and (iii) $0<d_{\text{H}}(x^n,\enc(m))\le n(q+\frac{\delta}{2})$ for some $m$.
    \medskip

    \noindent Case (i): We have $x^n\blacktriangleright y^n$ and $d_{\text{H}}(x^n,\enc(m))=0$.
    Moreover, for all $m'\neq m$, $d_{\text{H}}(x^n,\enc(m'))=d_{\text{H}}(\enc(m),\enc(m'))>n(2q+\delta)$.
    Hence, by Claim~\ref{clm:event-weak-ach},
    \begin{align*}
        \pr(\decB(Y^n,S^n)\neq m|x^n) = \pr(\exists m'\neq m, \bx^n\blacktriangleright Y^n \text{ s.t. } d_{\text{H}}(\bx^n,\enc(m'))\le h(s^n)|x^n) \le \pr(\cB_{x^n}|x^n) \le e^{-\frac{nk}{q}\delta^2}.
    \end{align*}
    Similarly, $\pr(\decC(Z^n,S^n)\neq m|x^n)\le e^{-\frac{nk}{q}\delta^2}$.
    By a union bound, $\lambda_m\le 2e^{-\frac{nk}{q}\delta^2}$.

    \medskip

    \noindent Case (ii): Since $d_{\text{H}}(x^n,\enc(m))>n(q+\frac{\delta}{2})$ for all $m\in[1:2^{nR}]$, by Claim~\ref{clm:event-weak-ach},
    \begin{align*}
        \pr(\decB(Y^n,S^n)\neq\bot|x^n)\le\pr(\exists m\in[1:2^{nR}], \bx^n\blacktriangleright y^n \text{ s.t. } d_{\text{H}}(\bx^n,\enc(m))\le h(S^n)|x^n) \le \pr(\cB_{x^n}|x^n)
        \le e^{-\frac{nk}{q}\delta^2}.
    \end{align*}
    Similarly, $\pr(\decB(Y^n,S^n)\neq\bot|x^n)\le e^{-\frac{nk}{q}\delta^2}$.
    By a union bound, $\pr(\decB(Y^n,S^n)=\decC(Z^n,S^n)=\bot|x^n)\ge 1-2e^{-\frac{nk}{q}\delta^2}$.

    \medskip

    \noindent Case (iii): Define $\bar{X}_\mathsf{B}^n$ such that $\bar{X}_\mathsf{B}(i)$ (the coordinate $i$ of $\bar{X}_\mathsf{B}$) is $Y_i$ if $Y_i\neq e$ and $\enc_i(m)$ (the coordinate $i$ of $\enc(m)$), otherwise.
    By definition, $\bar{X}_\mathsf{B}^n\blacktriangleright Y^n$.
    Moreover, for all $\bx^n\blacktriangleright Y^n$,
    \begin{align}\label{eq:psi-sub}
        d_{\text{H}}(\bx^n,\enc(m)) = |\{i: Y_i\notin \{\enc_i(m),e\}\}| + |\{i: \bx_i\neq \enc_i(m), Y_i= e\}|
        \ge |\{i: Y_i\notin \{\enc_i(m),e\}\}| = d_{\text{H}}(\bar{X}_\mathsf{B}^n,\enc(m))
    \end{align}
    Similarly, we define $\bar{X}_\mathsf{C}^n$ with $Z^n$ in lieu of $Y^n$.

    Suppose, for all $m'\neq m$, there exists no $\bx^n$ such that $d_{\text{H}}(\bx^n,\enc(m'))\le h(S^n)$ and $\bx^n\blacktriangleright Y^n$ or $\bx^n\blacktriangleright Z^n$.
    Then, since $\bar{X}_\mathsf{B}^n\blacktriangleright Y^n$ and $\bar{X}_\mathsf{C}^n\blacktriangleright Z^n$, the decoders agree on $m$ if both $d_{\text{H}}(\bar{X}_\mathsf{B}^n,\enc(m))$ and $d_{\text{H}}(\bar{X}_\mathsf{C}^n,\enc(m))$ are at most $h(S^n)$, and, by \eqref{eq:psi-sub}, they agree on $\bot$ if both are more than $h(S^n)$.
    Defining $\mu=(1-q)d_{\text{H}}(x^n,\enc(m))$,
    \begin{align}
        &\pr(\decB(Y^n,S^n)=\decC(Z^n,S^n)|x^n)\nonumber\\
        &\quad\ge \pr(\not\exists m'\neq m \text{ and } \bx^n \text{ s.t. } d_{\text{H}}(\bx^n,\enc(m'))\le h(S^n) \text{ and } (\bx^n\blacktriangleright Y^n \text{ or } \bx^n\blacktriangleright Z^n ),
        d_{\text{H}}(\bar{X}_\mathsf{B}^n,\enc(m))\in [\mu-n^{\frac{1}{2}+\epsilon},\mu+n^{\frac{1}{2}+\epsilon}],\nonumber\\
        &\hspace{5cm} d_{\text{H}}(\bar{X}_\mathsf{C}^n,\enc(m))\in [\mu-n^{\frac{1}{2}+\epsilon},\mu+n^{\frac{1}{2}+\epsilon}]
        \text{ and } h(S^n)\notin[\mu-n^{\frac{1}{2}+\epsilon},\mu+n^{\frac{1}{2}+\epsilon}]|x^n).\label{eq:weak-ach-0}
    \end{align}
    For any $m'\neq m$, by the triangle inequality,
    \begin{align*}
        d_{\text{H}}(x^n,\enc(m')) \ge d_{\text{H}}(\enc(m),\enc(m')) - d_{\text{H}}(x^n,\enc(m)) > n(2q+\delta)-n(q+\delta/2) = n(q+\delta/2).
    \end{align*}
    Hence, $\pr(\exists m'\neq m, \bx^n\blacktriangleright Y^n \text{ s.t. } d_{\text{H}}(\bx^n,\enc(m'))\le h(S^n)|x^n)=\pr(\cB_{x^n|x^n})\le e^{-\frac{nk}{q}\delta^2}$ by Claim~\ref{clm:event-weak-ach}.
    By a union bound,
    \begin{align}\label{eq:weak-ach-1}
        \pr(\exists m'\neq m \text{ and } \bx^n \text{ s.t. } d_{\text{H}}(\bx^n,\enc(m'))\le h(S^n) \text{ and } (\bx^n\blacktriangleright Y^n \text{ or } \bx^n\blacktriangleright Z^n )|x^n) \le 2e^{-\frac{nk}{q}\delta^2}.
    \end{align}
    For all $i\in[1:n]$, $Y_i=e$ independently with probability $q$.
    Hence, $d_{\text{H}}(\bar{X}_\mathsf{B}^n,\enc(m))=|\{i:Y_i\notin \{\enc_i(m),e\}\}|=|\{i:\enc_i(m)\neq x_i \text{ s.t. } Y_i\neq e\}|$ is a binomial distribution with mean $\mu=(1-q)d_{\text{H}}(x^n,\enc(m))$ and success probability $(1-q)$.
    For any $\epsilon>0$, by the Chernoff bound,
    \begin{align*}
        \pr(d_{\text{H}}(\bar{X}_\mathsf{B}^n,\enc(m))\notin[\mu-n^{\frac{1}{2}+\epsilon},\mu+n^{\frac{1}{2}+\epsilon}]|x^n)
        = \pr(|d_{\text{H}}(\bar{X}_\mathsf{B}^n,\enc(m))-\mu|>n^{\frac{1}{2}+\epsilon}|x^n)
        \le 2e^{-\frac{\mu}{3}\left(\frac{n^{\frac{1}{2}+\epsilon}}{\mu}\right)^2} \le 2e^{-\frac{n^{2\epsilon}}{3}}.
    \end{align*}
    The final inequality used the bound $\mu\le n$.
    By a union bound,
    \begin{align}
        \pr(d_{\text{H}}(\bar{X}_\mathsf{B}^n,\enc(m))\notin[\mu-n^{\frac{1}{2}+\epsilon},\mu+n^{\frac{1}{2}+\epsilon}] \text{ or } d_{\text{H}}(\bar{X}_\mathsf{C}^n,\enc(m))\notin[\mu-n^{\frac{1}{2}+\epsilon},\mu+n^{\frac{1}{2}+\epsilon}]|x^n)
        \le 4e^{-\frac{n^{2\epsilon}}{3}}.\label{eq:weak-ach-2}
    \end{align}
    Using \eqref{eq:weak-ach-1} and \eqref{eq:weak-ach-2}, we union bound the LHS of \eqref{eq:weak-ach-0} as
    \begin{align*}
        \pr(\decB(Y^n,S^n)=\decC(Z^n,S^n)|x^n)
        &\ge 1-2e^{-\frac{nk}{q}\delta^2}-4e^{-\frac{n^{2\epsilon}}{3}}-\pr(h(S^n)\notin [\mu-n^{\frac{1}{2}+\epsilon},\mu+n^{\frac{1}{2}+\epsilon}])\\
        &\stackrel{(a)}{\ge} 1-2e^{-\frac{nk}{q}\delta^2}-4e^{-\frac{n^{2\epsilon}}{3}} + \frac{2n^{\frac{1}{2}+\epsilon}}{2^\ell}
        \stackrel{(c)}{\ge} 1-\frac{5}{\epsilon}n^{-\frac{1}{2}+\epsilon}.
    \end{align*}
    Here, (a) used the fact that $h(S^n)$ is distributed uniformly over $[1:2^\ell]$ independent of $x^n$;
    and (b) used the bounds $2^\ell>\frac{n\epsilon}{8}$ and $\frac{1}{\epsilon}n^{-\frac{1}{2}+\epsilon}\ge 2e^{-\frac{nk}{q}\delta^2}+4e^{-\frac{n^{2\epsilon}}{3}}$ for sufficiently large $n$.
    We conclude the proof by proving Claims~\ref{clm:event-weak-ach} and~\ref{clm:gv}.
    \begin{proof}[Proof of Claim~\ref{clm:event-weak-ach}]
        Size of $\{i:Y_i=e\}$ is distributed according to the binomial distribution with mean $nq$ and success probability $q$.
        By the Chernoff bound, there exists a constant $k$ such that
        \begin{align}\label{eq:chernoff-0}
            \pr\left(|\{i:Y_i=e\}|\le n\left(q+\frac{\delta}{4}\right)\right) = \pr\left(|\{i:Y_i=e\}|\le nq\left(1+\frac{\delta}{4q}\right)\right) \ge 1-e^{-nq\frac{\delta^2}{16q^2\left(2+\frac{\delta}{4q}\right)}} \ge 1-e^{-n\frac{\delta^2}{48q}}.
        \end{align}
        In the final inequality, we used $16(2+\frac{\delta}{4q})\le 48$ for sufficiently small $\delta$.
        Conditioned on this event, for all $m'$ such that $d_{\text{H}}(x^n,\enc(m'))>n(q+\delta/2)$, and $\bx^n\blacktriangleright Y^n$,
        \begin{align*}
            d_{\text{H}}(\bx^n,\enc(m'))\ge|\{i:Y_i\notin\{\enc_i(m'),e\}\}| = \{i:x_i\neq \enc_i(m')\}|-|\{i:Y_i=e\}|>n(q+\frac{\delta}{2})-n\left(q+\frac{\delta}{4}\right)>\frac{n\delta}{4}.
        \end{align*}
        Claim now follows from the fact that $h(S^n)\le2^\ell\le \frac{n\delta}{4}$.
    \end{proof}
    \begin{proof}[Proof of Claim~\ref{clm:gv}]
        By the Gilbert-Varshamov bound~\cite[Theorem 4.2.1]{GuruswamiRS19}, for every $0<\gamma<\frac{1}{2}$, there exists a linear code with rate $1-H(\gamma)$ and relative distance $\gamma$ (i.e., Hamming distance more than $n\gamma$ between any two codewords).
        Choose $\delta>0$ small enough that $2q+\delta<\frac{1}{2}$ and $1-H(2q+\delta)>R$.
        Then, for any $n$, there is a $\enc:[1:2^{nR}]\rightarrow\{0,1\}^n$ such that the hamming distance $d_{\text{Hamming}}(\enc(m),\enc(m'))>n(2q+\delta)$ for all distinct $m,m'\in[1:2^{nR}]$.
        The claim follows.
    \end{proof}
    This concludes the proof of the theorem.
\end{proof}

\section{Missing proofs from Section~\ref{sec:proof}}\label{app:polytope}
\subsection{Proof of Lemma~\ref{lem:polytope}}
\begin{proof}
We will first show \eqref{lemma5:main_statement}. Recall that $u\in\cU$ corresponds to a vertex of the convex polytope $Q$ which is the convex hull of the $|\cV|$-dimensional vectors $\{\ch_{V|X}(.|x), x\in\cX\}$, i.e., $\ch_{V|X}(.|u)$ is a vertex of the polytope. We defined $\cX_u=\{x\in\cX: \ch_{V|X}(.|x)=\ch_{V|X}(.|u)\}$, i.e., the set of all letters in $\cX$ which correspond to the same vertex as $u$. Since $\ch_{V|X}(.|u)$ is a vertex of the polytope, there cannot exist a p.m.f. $\eta_x,x\in \cX-\cX_u$ such that
\begin{align} \ch_{V|X}(.|u) = \sum_{x\in \cX-\cX_u} \eta_x \ch_{V|X}(.|x).\label{eq:polytope-lemma}\end{align}

To prove the lemma, suppose for the sake of contradiction that there is a p.m.f. $\lambda_x\geq0, x\in\cX$ such that 
\[\sum_{x\in\cX} \lambda_x\ch_{V|X}(.|x)= \ch_{V|X}(.|u),\]
and 
\[ \Lambda:=\sum_{x\in \cX-\cX_u} \lambda_x > 0.\]
Then, 
\begin{align*}
\sum_{x\in \cX-\cX_u} \lambda_x \ch_{V|X}(.|x) 
&=  \ch_{V|X}(.|u) - \sum_{x\in\cX_u} \lambda_x\ch_{V|X}(.|x)\\
&\stackrel{\text{(a)}}{=} \left(1-\sum_{x\in\cX_u}\lambda_x\right) \ch_{V|X}(.|u)\\
&= \left(\sum_{x\in\cX-\cX_u} \lambda_x\right) \ch_{V|X}(.|u)\\
&= \Lambda \ch_{V|X}(.|u),
\end{align*}
where (a) follows from the definition of $\cX_u$ above.  
Now consider the p.m.f., $\eta_x, x\in\cX-\cX_u$,
\[ \eta_x = \frac{\lambda_x}{\Lambda}.\]
Then, 
\[ \sum_{x\in\cX-\cX_u} \eta_x \ch_{V|X}(.|x) = \ch_{V|X}(.|u),\]
which contradicts the fact that there is no p.m.f. $\eta_x, x\in\cX-\cX_u$ which satisfies~\eqref{eq:polytope-lemma}.

Next, we will show \eqref{eq:cX_u}. 
For $u\in \cU$, suppose $x\in \cX_u$, then by \eqref{eq:Ptilde},
\begin{align*}
\sum_{u'}\tP_{U|X}(u'|x)W_{V|X}(v|u')&=W_{V|X}(v|x){=}W_{V|X}(v|u).  
\end{align*} Using \eqref{lemma5:main_statement}, we conclude that $\tP_{U|X}(u|x) = 1$. Now suppose, $\tP_{U|X}(u|x) = 1$ for $x\in \cX$. Then, by \eqref{eq:Ptilde}, $W_{V|X}(v|u) = W_{V|X}(v|x)$ which implies that $x\in \cX_u$.
\end{proof}

\subsection{Proof of Claim~\ref{claim:defn_gamma}}\label{sub:proof_of_claim_ref_claim_defn_gamma}
\begin{proof}
Consider any channel $W_{YZ|X}$ with   $|\cU|\geq 2$, i.e. $\capptop(\ch_{V|X})>0$. 
Lemma~\ref{lem:polytope} implies that  
for any ${u'}\in \cU$ and conditional distribution $P_{U|U'}$ mapping symbols in $\cU$ to symbols in $\cU$, if
\begin{align}
\sum_{u\in \cU}P_{U|U'}(u|u')W_{V|X}(v|u) = W_{V|X}(v|u') \text{ for every }v\in \cV,\label{eq:defn_U}
\end{align} then $P_{U|U'}(u|u') = 1_{\inb{u=u'}}$. This also implies that the $\tP_{U|X}$ in Definition~\ref{def:effective-input} is such that 
\begin{align}
\tP_{U|X}(u|u') = 1_{\inb{u = u'}}, \, \,u, u'\in \cU.\label{eq:uu'} 
\end{align}
In other words, for any channel with $\capptop(\ch_{V|X})>0$ (i.e. $|\cU|\geq 2$), there exists $\gamma>0$ such that
\begin{align}
\min_{u'\in\cU}\min_{\stackrel{P_{U} \text{ with support}}{\text{ on }\cU\setminus\{u'\}}}\sum_{v\in \cV}\big|\sum_{u\in \cU}P_{U}(u)W_{V|X}(v|u) - W_{V|X}(v|u')\big|=\gamma\label{eq:defn_U''}
\end{align}
Consider $u'\in \cU$ and $P_{U|U'}$ such that $P_{U|U'}(u'|u')\neq 1$,
\begin{align}
&\sum_{v\in \cV}\big|\sum_{u\in \cU}P_{U|U'}(u|u')W_{V|X}(v|u) - W_{V|X}(v|u')\big|\nonumber\\
&=\sum_{v\in \cV}\big|\sum_{u\in \cU\setminus\{u'\}}P_{U|U'}(u|u')W_{V|X}(v|u) - (1-P_{U|U'}(u'|u'))W_{V|X}(v|u')\big|\nonumber\\
&\stackrel{(a)}{=}(1-P_{U|U'}(u'|u'))\inp{\sum_{v\in \cV}\big|\sum_{u\in \cU\setminus\{u'\}}P'_{U|U'}(u|u')W_{V|X}(v|u) - W_{V|X}(v|u')\big|}\nonumber\\
&\stackrel{(b)}{\geq} (1-P_{U|U'}(u'|u'))\gamma\nonumber
\end{align} where in $(a)$, we defined  $P'_{U|U'}(u|u'):=P_{U|U'}(u|u')/(1-P_{U|U'}(u'|u'))$, $u\in \cU\setminus\{u'\}$, and $(b)$ follows from \eqref{eq:defn_U''}. 
Hence, for any $u'\in \cU$ and $P_{U|U'}$,
\begin{align}
&\sum_{v\in \cV}\big|\sum_{u\in \cU}P_{U|U'}(u|u')W_{V|X}(v|u) - W_{V|X}(v|u')\big|{\geq} (1-P_{U|U'}(u'|u'))\gamma
\end{align} which holds with equality if $P_{U|U'}(u'|u') = 1$.
\end{proof}

\subsection{Proof of Lemma~\ref{lemma:dis_properties}}\label{proof:lemma10_prop}
\begin{proof}
\noindent For $\vecu, \vecu'\in \cU^n$, by Definition~\ref{defn:dis}, $d(\vecu, \vecu')= d(\vecu', \vecu) = \pr(U\neq U')$ under the joint type $P_{UU'}$ given by $(\vecu, \vecu')\in \cT^n_{UU'}.$ To see this, by \eqref{eq:cX_u_particular}, $\tP_{U|X}(\tilde{u}|u') = 1_{\inb{\tilde{u}=u'}}, \, \tilde{u}, u'\in \cU$. Hence,  $P_{UU'\tilde{U}}(u, u, \tilde{u}) = P_{UU'}(u,u' )1_{\inb{\tilde{u}=u'}}$. Thus, $\tilde{U} = U'$ and 
\begin{align*}
d(\vecu, \vecu') = d(P_{UU'}) = \pr(U\neq \tilde{U})= \pr(U\neq U') = d(\vecu', \vecu).
\end{align*}  Next, we will show that for $(\vecu, \vecx, \vecu')$, $d(\vecu, \vecx) + d(\vecu', \vecx)\geq d(\vecu, \vecu')$. Let $(\vecu, \vecu',\vecx)\in \cT^n_{UU'X}$. Let $P_{UU'X\tilde{U}}(u, u', x, \tilde{u}) = P_{UU'X}(u, u', x)\tP_{U|X}(\tilde{u}|x)$ $u, u', \tilde{u}\in U$ and $x\in \cX$. Then 
\begin{align*}
d(\vecu, \vecx) + d(\vecu', \vecx) &= \pr(U\neq\tilde{U}) + \pr(U'\neq \tilde{U})\\
&\geq \pr((U\neq\tilde{U})\cup(U'\neq\tilde{U}))\\
&\geq \pr(U\neq U')\\
&=d(\vecu, \vecu').
\end{align*}
\end{proof}

\subsection{Proof of Lemma~\ref{lemma:codebook}}
\begin{proof}\label{app:proof:codebook}

We use a random coding argument to show the existence of a codebook satisfying properties \eqref{eq:code1} and \eqref{eq:code2}. Let $\cT^n$ denote the type class of $P$. We generate $\exp\inp{nR}$ (recall that $\log$ and $\exp$ are with respect to base $2$; in particular, $\exp\inp{nR}=2^{nR}$) independent random codewords ${\vecU_{1}, \vecU_{2}, \ldots, \vecU_{K}}$,  each distributed uniformly on $\cT^n$.

We will show that the probability that statement \eqref{eq:code1} or statement \eqref{eq:code2} for any fixed joint type $P_{UX}\in \cP^n\inp{\cU\times \cX}$ and $\vecx\in \cX^n$, do not hold falls doubly exponentially in $n$. Since $|\cX^n|$ grows only exponentially in $n$ and $|\cP^n\inp{\cU\times \cX}|$ polynomially in $n$, a union bound will imply  the existence of a codebook satisfying properties \eqref{eq:code1} and \eqref{eq:code2}.
We will use the concentration result \cite[Lemma A1]{CsiszarN88}, which we restate below for ready reference.
\begin{lemma}{\cite[Lemma A1]{CsiszarN88}}\label{lemma:A1}
Let $S_1, \ldots, S_{K}$ be arbitrary random variables, and let $\zeta_i(S_1, \ldots, S_{i})$ be arbitrary with $0\leq \zeta_i\leq 1, \, i = 1, 2, \ldots, K$. Then the condition 
\begin{align}
\bbE\insq{\zeta_i(S_1, \ldots, S_i)|S_1, \ldots, S_{i-1}}\leq a \text{ a.s.},\, i = 1, 2, \ldots, K\label{eq:A1}
\end{align}
implies that 
\begin{align}
\pr\inb{\frac{1}{K}\sum_{i=1}^{K}\zeta_i(S_1, \ldots, S_i)>t}\leq \exp\inb{-K\inp{t-a\log{e}}}.\label{eq:A2}
\end{align}
\end{lemma}
We will first analyze \eqref{eq:code2}. Let $\cE_1$ be the event
\begin{align*}
\cE_1 = \inb{|\{i\in [1:K]:(\vecU_i, \vecx)\in \cT^n_{UX}\}|>\exp{\left(n\left(\left|R-I(U;X)\right|^{+}+\epsilon\right)\right)}}.
\end{align*}
Suppose $(\vecU_1, \ldots, \vecU_K)$ are the random variables $(S_1, \ldots, S_K)$ in Lemma~\ref{lemma:A1}. Let
\[\zeta_i(\vecU_1, \ldots, \vecU_i) := \begin{cases} 1, &\text{if }\vecU_i\in \cT^n_{U|X}(\vecx),\\ 0, &\text{otherwise.}\end{cases}\]
Then, 
\begin{align*}
\bbE\insq{\zeta_i(\vecU_1, \ldots, \vecU_i)|\vecU_1, \ldots, \vecU_{i-1}} &= \pr(\vecU_i\in \cT^n_{U|X}(\vecx)) = \frac{|\cT^n_{U|X}(\vecx)|}{|\cT^n_{U}|}\\
&\stackrel{(a)}{\leq} \frac{\exp\inb{nH(U|X)}}{(n+1)^{-|\cU|}\exp\inb{nH(U)}}\\
& = (n+1)^{|\cU|}\exp{\inb{-nI(U;X)}}=:a, 
\end{align*} where $(a)$ uses \eqref{eq:type_property2} and \eqref{eq:type_property3}.
Let $t_1 = \frac{1}{K}\exp\inb{n\inp{|R-I(U;X)|^{+}+\epsilon}}$. Then,
\begin{align}
\pr(\cE_1) &= \pr\inb{\frac{1}{K}\sum_{i=1}^{K}\zeta_i(\vecU_1, \ldots, \vecU_i)>t_1}\nonumber\\
&\stackrel{(a)}{\leq} \exp\inb{-K\inp{t_1-a\log{e}}}\nonumber\\
& \stackrel{(b)}= \exp\inb{-\inp{\exp\inb{n\inp{|R-I(U;X)|^{+}+\epsilon}}-(n+1)^{|\cU|}\exp{\inb{n\inp{R-I(U;X)}}}\log{e}}}\nonumber\\
& \stackrel{(c)}{\leq} \exp\inb{-\inp{\exp\inb{n\inp{|R-I(U;X)|^{+}+\epsilon}}-\frac{1}{2}{\exp\inb{n\inp{|R-I(U;X)|^{+}+\epsilon}}}}}\nonumber\\
& =\exp\inb{-\frac{1}{2}{\exp\inb{n\inp{|R-I(U;X)|^{+}+\epsilon}}}}\nonumber\\
&\leq \exp\insq{-\frac{1}{2}\exp{\inp{n\epsilon}}}, \label{eq:E_1}
\end{align}
where $(a)$ follows from \eqref{eq:A2}, $(b)$ follows by noting that $K = \exp\inp{nR}$ and $(c)$ holds for large enough $n$ such that $(n+1)^{|\cU|}{\log{e}}\leq \exp\inp{n\epsilon}/2$.
To analyze \eqref{eq:code1}, let $\cP:= \inb{P_{UU'}\in \cP^n\inp{\cU\times\cU}: P_U=P_{U'}=P, \, d({P_{U U'}})< 2\rad}$. Note that 
\begin{align*}
&\pr\inp{\frac{1}{K}|\{i\in [1:K]: d(\vecU_i,\vecU_j)< 2\rad\text{ for some }j\in [1:K], \, j\neq i\}|\ge \exp\left(-\frac{n\epsilon}{2}\right)}\\
&=\pr\inp{\frac{1}{K}|\{i\in [1:K]: \exists\, P_{UU'}\in \cP\ \text{ s.t. }(\vecU_i,\vecU_j)\in \cT^n_{UU'}\text{ for some }j\in [1:K], \, j\neq i|\ge \exp\left(-\frac{n\epsilon}{2}\right)}.
\end{align*}
We will show that 
\begin{align}
&\pr\inp{\frac{1}{K}|\{i\in [1:K]: \exists\, P_{UU'}\in \cP\ \text{ s.t. }(\vecU_i,\vecU_j)\in \cT^n_{UU'}\text{ for some }j\in [1:K], \, j\neq i\}|\ge \exp\left(-\frac{n\epsilon}{2}\right)}\nonumber\\
&\leq 2\exp\inp{-\frac{1}{4}\exp\inp{\frac{n\epsilon}{2}}}\label{eq:e_22}
\end{align} and use an expurgation argument to complete the proof.
To show \eqref{eq:e_22}, we first note that
\begin{align}
&\pr\inp{\frac{1}{K}|\{i\in [1:K]: \exists\, P_{UU'}\in \cP\ \text{ s.t. }(\vecU_i,\vecU_j)\in \cT^n_{UU'}\text{ for some }j\in [1:K], \, j\neq i\}|\ge \exp\left(-\frac{n\epsilon}{2}\right)}\nonumber\\
&\leq \pr\Bigg(\frac{1}{K}\Big(|\{i\in [1:K]:\exists\, P_{UU'}\in \cP\ \text{ s.t. } (\vecU_i,\vecU_j)\in \cT^n_{UU'}\text{ for some }j\in [1:K], \, j< i\}|\nonumber\\
&\quad+|\{i\in [1:K]: \exists\, P_{UU'}\in \cP\ \text{ s.t. }(\vecU_i,\vecU_j)\in \cT^n_{UU'}\text{ for some }j\in [1:K], \, j> i\}|\Big)\ge \exp\left(-\frac{n\epsilon}{2}\right)\Bigg)\nonumber\\
&\stackrel{(a)}{\leq}\pr\inp{\frac{1}{K}|\{i\in [1:K]:\exists\, P_{UU'}\in \cP\ \text{ s.t. } (\vecU_i,\vecU_j)\in \cT^n_{UU'}\text{ for some }j\in [1:K], \, j< i\}|\ge \frac{1}{2}\exp\left(-\frac{n\epsilon}{2}\right)}\nonumber\\
&\quad+\pr\inp{\frac{1}{K}|\{i\in [1:K]: \exists\, P_{UU'}\in \cP\ \text{ s.t. }(\vecU_i,\vecU_j)\in \cT^n_{UU'}\text{ for some }j\in [1:K], \, j> i\}|\ge \frac{1}{2}\exp\left(-\frac{n\epsilon}{2}\right)}\nonumber\\
&\stackrel{(b)}{\leq} 2\pr\inp{\frac{1}{K}|\{i\in [1:K]: \exists\, P_{UU'}\in \cP\ \text{ s.t. }(\vecU_i,\vecU_j)\in \cT^n_{UU'}\text{ for some }j\in [1:K], \, j< i\}|\ge \frac{1}{2}\exp\left(-\frac{n\epsilon}{2}\right)},\label{eq:e_2}
\end{align}
where $(a)$ follows from a union bound and $(b)$ uses the symmetry of the random codebook. Thus, we only need to analyze \eqref{eq:e_2}.
For any $(\vecu_1, \ldots, \vecu_{i-1})$, let
\[\xi_i(\vecu_1, \ldots, \vecu_i) := \begin{cases} 1, &\text{if }\vecu_i\in \cup_{j<i}\cT^n_{U|U'}(\vecu_j),\text{ for some }P_{UU'}\in \cP,\\ 0, &\text{otherwise.}\end{cases}\]
We will now apply Lemma~\ref{lemma:A1}. Suppose $(\vecU_1, \ldots, \vecU_K)$ are the random variables $(S_1, \ldots, S_K)$ and $\xi_i,\, i\in [1:K]$ correspond to the functions $\zeta_i,\, i\in [1:K]$. Then, \begin{align*}
&\bbE\insq{\xi_i(\vecU_1, \ldots, \vecU_i)|(\vecU_1, \ldots, \vecU_{i-1})=(\vecu_1, \ldots, \vecu_{i-1})} \\
&\stackrel{(a)}{=}\pr\inp{\cup_{P_{UU'}\in \cP}\cup_{j<i}\inb{\vecU_i\in \cT^n_{U|U'}(\vecu_j)}}\\
&\leq\sum_{P_{UU'}\in \cP}\sum_{j<i}\pr\inp{\inb{\vecU_i\in \cT^n_{U|U'}(\vecu_j)}}\\
&\leq\sum_{P_{UU'}\in \cP}\frac{|\cT^n_{U|U'}(\vecu_j)|}{|\cT^n_{U}|}\\
&\stackrel{(b)}{\leq}\sum_{P_{UU'}\in \cP}\sum_{j<i}\frac{\exp\inb{nH(U|U')}}{(n+1)^{-|\cU|}\exp\inb{nH(U)}}\\
&\leq\sum_{P_{UU'}\in \cP}(n+1)^{|\cU|}\exp{\inb{n\inp{R-I(U;U')}}} 
\end{align*} where $(a)$ follows from $\vecU_i\indep \inp{\vecU_1, \ldots, \vecU_{i-1}}$ and $(b)$ follows from \eqref{eq:type_property2} and \eqref{eq:type_property3}.
Suppose 
\begin{align}
R-\min_{P_{UU'}\in \cP}I(U;U')\leq-3\epsilon/4,\label{eq:R_bound}
\end{align} then 
\begin{align*}
\bbE\insq{\xi_i(\vecU_1, \ldots, \vecU_i)|\vecU_1, \ldots, \vecU_{i-1}}\leq|\cP|(n+1)^{|\cX|}\exp{\inb{n\inp{-3\epsilon/4}}}=:a'.
\end{align*}
 Let $t' = \frac{1}{2}\exp\inp{-\frac{n\epsilon}{2}}$. Then from Lemma~\ref{lemma:A1},

\begin{align}
&\pr\inp{\frac{1}{K}|\{i\in [1:K]: \exists\, P_{UU'}\in \cP\ \text{ s.t. }(\vecU_i,\vecU_j)\in \cT^n_{UU'}\text{ for some }j\in [1:K], \, j< i\}|\ge \frac{1}{2}\exp\left(-\frac{n\epsilon}{2}\right)}\nonumber\\
&= \pr\inb{\frac{1}{K}\sum_{i=1}^{K}\xi_i(\vecU_1, \ldots, \vecU_i)>t'}\nonumber\\
&\stackrel{(a)}{\leq} \exp\inb{-K\inp{t'-a'\log{e}}}\nonumber\\
&=\exp\inb{-K\inp{\frac{1}{2}\exp\inp{-\frac{n\epsilon}{2}}-|\cP|(n+1)^{|\cX|}\exp{\inp{\frac{-3n\epsilon}{4}}}\log{e}}}\nonumber\\
&\stackrel{(b)}{\leq}\exp\inb{-K\inp{\frac{1}{2}\exp\inp{-\frac{n\epsilon}{2}}-\frac{1}{4}\exp{\inp{\frac{n\epsilon}{4}}}\exp{\inp{\frac{-3n\epsilon}{4}}}}} \label{eq:another_n0_cond}\\
&=\exp\inb{-K\inp{\frac{1}{4}\exp\inp{-\frac{n\epsilon}{2}}}}\nonumber\\
&\stackrel{(c)}{\leq}\exp\inb{-\inp{\frac{1}{4}\exp\inp{\frac{n\epsilon}{2}}}},\nonumber
\end{align} where $(a)$ follows from \eqref{eq:A2}, $(b)$ holds for sufficiently large $n$ and $(c)$ uses $K=\exp\inp{nR}\geq \exp\inp{n\epsilon}$. This and \eqref{eq:e_2} imply \eqref{eq:e_22}.

By using \eqref{eq:E_1}, \eqref{eq:e_22} and taking union bound over $P_{UX}\in \cP^n\inp{\cU\times \cX}$ and $\vecx\in \cX^n$ {(recall that $|\cP^n\inp{\cU\times \cX}|$ and $|\cX^n|$ grow only polynomially in $n$)}, we can conclude that there exists a codebook of rate $R$ such that 
$\epsilon\leq R\leq \min_{P_{U U'}\in \cP}I(U;U')-3\epsilon/4$,   
 whose codewords are of type $P$ such that
\begin{align}
&\frac{1}{K}|\{i\in [1:K]: d(\vecu_i,\vecu_j)\leq 2\rad\text{ for some }j\in [1:K], \, j\neq i\}|< \exp\left(-\frac{n\epsilon}{2}\right) \label{eq:code1'}
\end{align}
and for every joint type $P_{UX}\in \cP^n\inp{\cU\times \cX}$ and $\vecx\in \cX^n$ satisfying $P_U= P$ and $\vecx \in \cT^n_{X}$,
\begin{align}
&|\{i\in [1:K]:(\vecu_i, \vecx)\in \cT^n_{UX}\}|\leq\exp{\left(n\left(\left|R-I(U;X)\right|^{+}+\epsilon\right)\right)}.\label{eq:code2'}
\end{align}
In order to obtain \eqref{eq:code1} from \eqref{eq:code1'}, we expurgate $\exp\left(-\frac{n\epsilon}{2}\right)$ fraction of codewords to obtain $d(\vecu_i,\vecu_j)> 2\rad$ for every pair of distinct codewords $\vecu_i$, $\vecu_j$. The new rate is 
\begin{align}
R'&= \frac{\log\inp{K-K\exp\left(-{n\epsilon/2}\right)}}{n}\nonumber\\
&=\frac{\log\inp{\exp\inp{nR}(1-\exp\left(-{n\epsilon/2}\right))}}{n}\nonumber\\
&=R+ \frac{\log\inp{1-\exp\left(-{n\epsilon/2}\right)}}{n}\nonumber\\
&\geq R -\epsilon/4 \text{ for sufficiently large $n$.}\label{eq:n0_1}
\end{align}
Let $n_0$ be such that \eqref{eq:E_1}, \eqref{eq:another_n0_cond} and \eqref{eq:n0_1} hold. Since, rate $R$ satisfies \eqref{eq:R_bound}, we have shown the existence of a codebook of rate $R'$ such that $\epsilon\leq R'\leq \min_{P_{U U'}\in \cP}I(U;U')-\epsilon$ and it satisfies \eqref{eq:code1} and \eqref{eq:code2} .

\end{proof}

\balance
\bibliography{IEEEabrv,bib}

\begin{thebibliography}{10}

\bibitem{LamportSP82}
L.~Lamport, R.~Shostak, and M.~Pease, ``The byzantine generals problem,'' {\em
  ACM Trans. Program. Lang. Syst.}, pp.~382--401, 1982.

\bibitem{Dolev82}
D.~Dolev, ``The byzantine generals strike again,'' {\em J. Algorithms}, vol.~3,
  no.~1, pp.~14--30, 1982.

\bibitem{Lynch1996}
N.~A. Lynch, {\em Distributed Algorithms}.
\newblock Morgan Kaufmann Publishers, 1996.

\bibitem{FischerLM86}
M.~J. Fischer, N.~A. Lynch, and M.~Merritt, ``Easy impossibility proofs for
  distributed consensus problems,'' {\em J. Distributed Computing}, vol.~1,
  no.~1, pp.~26--39, 1986.

\bibitem{CheMic19}
J.~Chen and S.~Micali, ``Algorand: A secure and efficient distributed ledger,''
  {\em Theoretical Computer Science}, vol.~777, pp.~155--183, 2019.

\bibitem{Shi2020}
E.~Shi, {\em Foundations of Distributed Consensus and Blockchains}.
\newblock Book manuscript, 2020.

\bibitem{Cover72}
T.~M. Cover, ``Broadcast channels,'' {\em IEEE Transactions on Information
  Theory}, vol.~18, no.~1, pp.~2--14, 1972.

\bibitem{ElGamalK2011}
A.~El~Gamal and Y.-H. Kim, {\em Network Information Theory}.
\newblock Cambridge University Press, 2011.

\bibitem{KornerM77}
J.~K\"orner and K.~Marton, ``General broadcast channels with degraded message
  sets,'' {\em IEEE Transactions on Information Theory}, vol.~23, no.~1,
  pp.~60--64, 1977.

\bibitem{Wyner75}
A.~D. Wyner, ``The wire-tap channel,'' {\em Bell System Technical Journal},
  vol.~54, no.~8, pp.~1355--1387, 1975.

\bibitem{CsiszarK78}
I.~Csisz\'ar and J.~K\"orner, ``Broadcast channels with confidential
  messages,'' {\em IEEE Transactions on Information Theory}, vol.~24, no.~3,
  pp.~339--348, 1978.

\bibitem{BlackwellBT59}
D.~Blackwell, L.~Breiman, and A.~J. Thomasian, ``The capacity of a class of
  channels,'' {\em The Annals of Mathematical Statistics}, vol.~30, no.~4,
  pp.~1229--1241, 1959.

\bibitem{BlackwellBT60}
D.~Blackwell, L.~Breiman, and A.~J. Thomasian, ``The capacities of certain
  channel classes under random coding,'' {\em The Annals of Mathematical
  Statistics}, vol.~31, no.~3, pp.~558--567, 1960.

\bibitem{Wolfowitz59}
J.~Wolfowitz, ``Simultaneous channels,'' {\em Archive for Rational Mechanics
  and Analysis}, vol.~4, pp.~371--386, 1959.

\bibitem{CsiszarN88}
I.~Csisz\'ar and P.~Narayan, ``The capacity of the arbitrarily varying channel
  revisited: positivity, constraints,'' {\em IEEE Transactions on Information
  Theory}, vol.~34, no.~2, pp.~181--193, 1988.

\bibitem{BlochB11}
M.~Bloch and J.~Barros, {\em Physical-layer security: from information theory
  to security engineering}.
\newblock Cambridge University Press, 2011.

\bibitem{PoorS17}
H.~V. Poor and R.~F. Schaefer, ``Wireless physical layer security,'' {\em
  Proceedings of the National Academy of Sciences}, vol.~114, no.~1,
  pp.~19--26, 2017.

\bibitem{LapidothN98}
A.~Lapidoth and P.~Narayan, ``Reliable communication under channel
  uncertainty,'' {\em IEEE Transactions on Information Theory}, vol.~44, no.~6,
  pp.~2148--2177, 1998.

\bibitem{JaggiLKHKM07}
S.~Jaggi, M.~Langberg, S.~Katti, T.~Ho, D.~Katabi, and M.~Medard, ``Resilient
  network coding in the presence of byzantine adversaries,'' in {\em IEEE
  INFOCOM 2007 - 26th IEEE International Conference on Computer
  Communications}, pp.~616--624, 2007.

\bibitem{HeY13}
X.~He and A.~Yener, ``Strong secrecy and reliable byzantine detection in the
  presence of an untrusted relay,'' {\em IEEE Transactions on Information
  Theory}, vol.~59, no.~1, pp.~177--192, 2013.

\bibitem{KosutTT14}
O.~Kosut, L.~Tong, and D.~N.~C. Tse, ``Polytope codes against adversaries in
  networks,'' {\em IEEE Transactions on Information Theory}, vol.~60, no.~6,
  pp.~3308--3344, 2014.

\bibitem{KosutK16}
O.~Kosut and J.~Kliewer, ``Network equivalence for a joint
  compound-arbitrarily-varying network model,'' in {\em 2016 IEEE Information
  Theory Workshop (ITW)}, pp.~141--145, 2016.

\bibitem{KosutK18}
O.~Kosut and J.~Kliewer, ``Authentication capacity of adversarial channels,''
  in {\em 2018 IEEE Information Theory Workshop (ITW)}, pp.~1--5, 2018.

\bibitem{SangwanBDP19a}
N.~Sangwan, M.~Bakshi, B.~K. Dey, and V.~M. Prabhakaran, ``Multiple access
  channels with adversarial users,'' in {\em 2019 IEEE International Symposium
  on Information Theory (ISIT)}, pp.~435--439, 2019.

\bibitem{SangwanBDP24a}
N.~Sangwan, M.~Bakshi, B.~K. Dey, and V.~M. Prabhakaran, ``Byzantine multiple
  access channels - {P}art {I}: {R}eliable communication,'' {\em IEEE
  Transactions on Information Theory}, vol.~70, no.~4, pp.~2309--2366, 2024.

\bibitem{SangwanBDP24b}
N.~Sangwan, M.~Bakshi, B.~K. Dey, and V.~M. Prabhakaran, ``Byzantine multiple
  access channels—{Part II: Communication} with adversary identification,''
  {\em IEEE Transactions on Information Theory}, vol.~71, no.~1, pp.~23--60,
  2025.

\bibitem{BennettBR88}
C.~H. Bennett, G.~Brassard, and J.-M. Robert, ``Privacy amplification by public
  discussion,'' {\em SIAM J. Comput.}, vol.~17, no.~2, pp.~210--229, 1988.

\bibitem{Maurer91}
U.~M. Maurer, ``Perfect cryptographic security from partially independent
  channels,'' in {\em {STOC}}, pp.~561--571, 1991.

\bibitem{AhlswedeC93}
R.~Ahlswede and I.~Csisz\'ar, ``Common randomness in information theory and
  cryptography. i. secret sharing,'' {\em IEEE Transactions on Information
  Theory}, vol.~39, no.~4, pp.~1121--1132, 1993.

\bibitem{Maurer93}
U.~Maurer, ``Secret key agreement by public discussion from common
  information,'' {\em IEEE Transactions on Information Theory}, vol.~39, no.~3,
  pp.~733--742, 1993.

\bibitem{BennettBCM95}
C.~H. Bennett, G.~Brassard, C.~Cr\'epeau, and U.~M. Maurer, ``Generalized
  privacy amplification,'' {\em IEEE Transactions on Information Theory},
  vol.~41, no.~6, pp.~1915--1923, 1995.

\bibitem{CsiszarN00}
I.~Csisz\'ar and P.~Narayan, ``Common randomness and secret key generation with
  a helper,'' {\em IEEE Transactions on Information Theory}, vol.~46, no.~2,
  pp.~344--366, 2000.

\bibitem{Maurer03a}
U.~Maurer and S.~Wolf, ``Secret-key agreement over unauthenticated public
  channels-{Part I: Definitions} and a completeness result,'' {\em IEEE
  Transactions on Information Theory}, vol.~49, no.~4, pp.~822--831, 2003.

\bibitem{Maurer03b}
U.~Maurer and S.~Wolf, ``Secret-key agreement over unauthenticated public
  channels-{Part II: The} simulatability condition,'' {\em IEEE Transactions on
  Information Theory}, vol.~49, no.~4, pp.~832--838, 2003.

\bibitem{CsiszarN04}
I.~Csisz\'ar and P.~Narayan, ``Secrecy capacities for multiple terminals,''
  {\em IEEE Transactions on Information Theory}, vol.~50, no.~12,
  pp.~3047--3061, 2004.

\bibitem{CrepeauK88}
C.~{Cr\'epeau} and J.~{Kilian}, ``Achieving oblivious transfer using weakened
  security assumptions,'' in {\em FOCS}, pp.~42--52, 1988.

\bibitem{CrepeauMW05}
C.~Cr{\'e}peau, K.~Morozov, and S.~Wolf, ``Efficient unconditional oblivious
  transfer from almost any noisy channel,'' in {\em Security in Communication
  Networks}, pp.~47--59, 2005.

\bibitem{Wullschleger09}
J.~Wullschleger, ``Oblivious transfer from weak noisy channels,'' in {\em
  {TCC}}, pp.~332--349, Springer, 2009.

\bibitem{IshaiKOPSW11}
Y.~Ishai, E.~Kushilevitz, R.~Ostrovsky, M.~Prabhakaran, A.~Sahai, and
  J.~Wullschleger, ``Constant-rate oblivious transfer from noisy channels,'' in
  {\em Advances in Cryptology -- {CRYPTO} 2011}, pp.~667--684, Springer, 2011.

\bibitem{DamgardKS99}
I.~Damg{\aa}rd, J.~Kilian, and L.~Salvail, ``On the (im)possibility of basing
  oblivious transfer and bit commitment on weakened security assumptions,'' in
  {\em Advances in Cryptology -- {EUROCRYPT} 1999}, pp.~56--73, Springer, 1999.

\bibitem{WinterNI03}
A.~Winter, A.~C.~A. Nascimento, and H.~Imai, ``Commitment capacity of discrete
  memoryless channels,'' in {\em Cryptography and Coding}, pp.~35--51, 2003.

\bibitem{RanellucciTWW11}
S.~Ranellucci, A.~Tapp, S.~Winkler, and J.~Wullschleger, ``On the efficiency of
  bit commitment reductions,'' in {\em ASIACRYPT}, pp.~520--537, 2011.

\bibitem{FitziWW04}
M.~Fitzi, S.~Wolf, and J.~Wullschleger, ``Pseudo-signatures, broadcast, and
  multi-party computation from correlated randomness,'' in {\em Advances in
  Cryptology -- {CRYPTO} 2004}, pp.~562--578, Springer, 2004.

\bibitem{NarayananPSW23}
V.~Narayanan, V.~M. Prabhakaran, N.~Sangwan, and S.~Watanabe, ``Complete
  characterization of broadcast and pseudo-signatures from correlations,''
  in {\em Advances in Cryptology -- {EUROCRYPT} 2023}, pp.~563--593, Springer,
  2023.

\bibitem{GargIKOS15}
S.~Garg, Y.~Ishai, E.~Kushilevitz, R.~Ostrovsky, and A.~Sahai, ``Cryptography
  with one-way communication,'' in {\em Advances in Cryptology -- {CRYPTO}
  2015}, pp.~191--208, Springer, 2015.

\bibitem{AgrawalIKNPPR21}
S.~Agrawal, Y.~Ishai, E.~Kushilevitz, V.~Narayanan, M.~Prabhakaran, V.~M.
  Prabhakaran, and A.~Rosen, ``Secure computation from one-way noisy
  communication, or: Anti-correlation via anti-concentration,'' in {\em
  Advances in Cryptology -- {CRYPTO} 2021}, pp.~124--154, Springer, 2021.

\bibitem{CoverT2006}
T.~M. Cover and J.~A. Thomas, {\em Elements of information theory}.
\newblock Wiley-Interscience, 2006.

\bibitem{Csiszar98}
I.~Csisz\'ar, ``The method of types,'' {\em IEEE Transactions on Information
  Theory}, vol.~44, no.~6, pp.~2505--2523, 1998.

\bibitem{Shannon56ZeroError}
C.~Shannon, ``The zero error capacity of a noisy channel,'' {\em IRE
  Transactions on Information Theory}, vol.~2, no.~3, pp.~8--19, 1956.

\bibitem{AlonO95Repeated}
N.~Alon and A.~Orlitsky, ``Repeated communication and ramsey graphs,'' {\em
  IEEE Transactions on Information Theory}, vol.~41, no.~5, pp.~1276--1289,
  1995.

\bibitem{GacsK73}
P.~G{\'a}cs and J.~K{\"o}rner, ``Common information is far less than mutual
  information,'' {\em Problems of Control and Information Theory}, vol.~2,
  no.~2, pp.~149--162, 1973.

\bibitem{Hirschfeld35}
H.~O. Hirschfeld, ``A connection between correlation and contingency,'' {\em
  Mathematical Proceedings of the Cambridge Philosophical Society}, vol.~31,
  no.~4, pp.~520--524, 1935.

\bibitem{Gebelein41}
H.~Gebelein, ``Das statistische problem der korrelation als variations-und
  eigenwertproblem und sein zusammenhang mit der ausgleichsrechnung,'' {\em
  ZAMM-Journal of Applied Mathematics and Mechanics/Zeitschrift f{\"u}r
  Angewandte Mathematik und Mechanik}, vol.~21, no.~6, pp.~364--379, 1941.

\bibitem{Renyi59}
A.~R{\'e}nyi, ``New version of the probabilistic generalization of the large
  sieve,'' {\em Acta Math. Hung}, vol.~10, no.~1-2, pp.~217--226, 1959.

\bibitem{Witsenhausen75}
H.~S. Witsenhausen, ``On sequences of pairs of dependent random variables,''
  {\em SIAM Journal on Applied Mathematics}, vol.~28, no.~1, pp.~100--113,
  1975.

\bibitem{CsiszarK2011}
I.~Csisz\'ar and J.~K\"orner, {\em Information Theory: Coding Theorems for
  Discrete Memoryless Systems}.
\newblock Cambridge University Press, 2~ed., 2011.

\bibitem{GuruswamiRS19}
V.~Guruswami, A.~Rudra, and M.~Sudan, {\em Essential Coding Theory}.
\newblock Book manuscript, January 2022.

\end{thebibliography}
\end{document}